\newtheorem{theorem}{Theorem}
\newtheorem{proposition}{Proposition}
\newtheorem{lemma}{Lemma}
\theoremstyle{definition}
\newtheorem{remark}{Remark}
\newtheorem{axiom}{Axiom}
\newtheorem{definition}{Definition}
\newtheorem{example}{Example}
\newcommand{\co}{\mathrm{co}}
\newtheorem*{yhtheorem}{Yosida-Hewitt decomposition theorem}
\newtheorem*{linva}{Invariance with respect to applying $T$ to improving sequences (I$T$IS)}
\newtheorem*{linvashift}{Invariance with respect to delaying improving sequences (IDIS)}
\newtheorem*{patience}{Patience}
\newtheorem*{part1}{Only-if part}
\newtheorem*{part2}{If part}
\newtheorem*{fpermuinva}{Invariance with respect to finite permutation of improving sequences (IFPIS)}
\newtheorem*{permuinva}{Invariance with respect to permutation of improving sequences (IPIS)}
\newtheorem*{Schau}{Schauder–Tychonoff fixed-point theorem}
\newtheorem*{bicon}{Biconjugation theorem}
\newtheorem*{timeinva}{Time invariance}
\newtheorem*{pareto}{Unanimity}
\DeclareMathOperator*{\argmin}{arg\,min}
\begin{document}

\title{Dynamic choices, temporal invariance and variational discounting\thanks{We thank Federico Echenique, Xiangyu Qu, Jean-Marc Tallon, Eric Danan, Marcus Pivato, Chew Soo Hong, Lorenzo Stanca and participants at many seminars and conferences for their valuable comments.}} 

\author{Bach Dong-Xuan\thanks{Corresponding author. Center for Mathematical Economics, Bielefeld University. E-mail: \url{bachdongx@gmail.com}.} \and Philippe Bich\thanks{Universit\'e Paris 1 Panth\'eon-Sorbonne \& Paris School of Economics. E-mail: \url{bich@univ-paris1.fr}.}} 

\date{June 17, 2024}


\maketitle

\begin{abstract}

People often face trade-offs between costs and benefits occurring at various points in time. The predominant discounting approach is to use the exponential form. Central to this approach is the discount rate, a unique parameter that converts a future value into its present equivalent. However, a universally accepted discount rate remains a matter of ongoing debate and lacks consensus. This paper provides a robust solution for resolving conflicts in discount rates, which recommends considering all discount rates but aims to assign varying degrees of importance to these rates. Moreover, a considerable number of economists support a theory that suggests equal consideration of future and present utilities. In response to this debate, we introduce a general criterion capable of accommodating situations where it is feasible not to discount future utilities. This criterion encompasses and extends various existing criteria in the literature.  
\end{abstract}


\textbf{Keywords:} Time preference; discounting; patience; variational discounting.

\newpage

\section{Introduction}

A government agency frequently confronts policies or projects that have significant and long-term consequences, such as initiatives aimed at reducing carbon emissions and fostering the adoption of green technology. The workhorse model used to evaluate the feasibility of these decisions is the Exponential Discounting Utility (EDU) model introduced by \cite{R1928} and \cite{S1937}. At the core of this model is the concept of the discount rate, which plays a central role in converting a value received in a future time period to an equivalent value received immediately.


Yet, determining the most suitable discount rate remains a contentious and intricate endeavor, mainly because there is no universally agreed-upon discount rate. The diverging viewpoints on discount rates largely emanate from normative disagreements regarding the societal time preferences that dictate marginal rates of substitution, as observed in the works of \cite{DFMG2018} and \cite{D2008}. Additionally, the choice of the discount rate can significantly impact the policies and decision outcomes. Even small adjustments in the discount rate can produce a profound impact on the estimated value of public projects, for example, infrastructure investments, climate change mitigation strategies, and nuclear waste management (\cite{article}). This influence extends to important areas like fiscal policy (\cite{barro}) and environmental policy (\cite{nordhaus}). Illustrating the significance of this matter, \cite{weitzman} argues that determining the discount rate for evaluating long-term public projects is ``one of the most critical problems in all of economics."


Solutions for resolving conflicts in discount rates discussed in the existing literature primarily center around utilitarian and maxmin approaches. The utilitarian approach acknowledges the decision maker's discretion in assigning varying degrees of importance to different discount rates. However, a prevailing argument suggests that crucial decision-making situations demand a certain level of caution (see, for example, \cite{AG2016}), which the utilitarian approach does not adequately address. On the other hand, the maxmin approach can accommodate cautious decision-making, yet it treats all ``corrected" discount rates equally.

In this paper, we propose a ``robust" solution to confront the challenge of dealing with multiple discount rates. This approach takes into account both the need for caution in decision-making and the variability in judgments across discount rates. Specifically, we introduce and characterize the following criterion: 
\begin{equation}\label{eqvar}
    I(x)=\min_{\delta \in [0,1)} \bigg\{(1-\delta) \sum_{t=0}^{+\infty} \delta^t  x_t + c(\delta)\bigg\}.
\end{equation}
This criterion aligns with the structure of a variational representation introduced by \cite{M2006} in the context of decision-making under uncertainty. Thus, we refer to this criterion as the \textit{variational discounting} criterion. According to this criterion, the decision maker considers all discount rates when evaluating long-term projects but wishes to give some of them significantly greater weight, and these weights are expressed through the ``cost" function $c$.

The foundation of the variational discounting representation rests on a new invariance principle called \textit{Invariance with respect to delaying improving sequences} (IDIS). In the context of this principle, an \textit{improving sequence} can be seen as a utility stream generated by a project or policy that is profitable. IDIS implies that if a project generates benefits, it remains profitable even if its implementation is delayed by one period. To the best of our knowledge, this type of invariance principle is novel in the literature.

To make well-informed intertemporal decisions, decision makers often seek guidance on determining the appropriate discount rate from a panel of experts. It is not surprising that each expert may advocate for a different discount rate. As an illustration, in a survey conducted by \cite{weitzman} on the discount rates employed by many economists for the evaluation of long-term projects, the average response is 3.96\%, with a standard deviation of 2.94\%. We show that when a variational discounting decision maker follows the standard Pareto assumption, then she only considers the discount rates recommended by these experts. In this case, the cost function $c$ can be interpreted as a measure of the decision maker's confidence in these experts.



Also, we extend representation (\ref{eqvar}) into the following criterion, which encompasses and generalizes several existing criteria in the literature:
\begin{equation}\label{eqgen}
I(x)=\min_{p \in \Delta (T^*)}\{\langle x,p \rangle + c(p)\},
\end{equation} where $\Delta (T^*)$ is the set of weights or discount structures that are eigenvectors of some linear operator $T^*$ from the space of finitely additive measures to itself. Specifically, the worst-case scenario in the prudent representation is taken over discount structures that are ``invariant" to the linear transformation $T^*$ in the sense that applying $T^*$ to them does not change their direction (it only scales them by a positive scalar).

As criterion (\ref{eqgen}) does not require \textit{discount structures} to be countably additive, we can cover situations where the decision maker finds discount structures that do not discount future utilities to be plausible. The theory of not discounting future utilities has received substantial support in the past. Typically, time periods can be viewed as representing different generations. The EDU criterion implies that the utility experienced today carries considerably more weight than the same utility experienced in the distant future. Thus, it goes against policies that prioritize the well-being of future generations. This, however, contradicts the observation that many governments are willing to invest resources to prevent climate change, which demonstrates that we are seriously care about the long-run future. Moreover, from a normative point of view, all generations should be deemed equally (see, for example, \cite{S1874} and \cite{R1928}). 

The reason why the variational discounting criterion fails to capture equal treatment of time periods remains in the \textit{Monotone Continuity}\footnote{See Subsection \ref{section3.1} for a formal definition of discussion.} assumption. As a result of discarding the assumption, we obtain the following criterion:
\begin{equation*}
I(x)=\min_{p \in \mathcal{D}}\{\langle x,p \rangle + c(p)\},
\end{equation*} where $\mathcal{D}$ is the set of all exponential discounting sequences and \textit{Banach-Mazur limits}. Notice that Banach-Mazur limits assess any utility stream $(x_0, x_1, x_2, \dots)$ and its shifted version $(x_1, x_2, x_3, \dots)$ equally, indicating a characteristic of patience. This criterion suggests that, besides considering exponential discounting sequences, the decision maker also takes into account patient discount structures.

In addition, we characterize the preferences that exhibit \textit{Invariance with respect to finite permutation of improving sequences}. More precisely, for any improving sequence, the utility stream obtained by applying any finite permutation to the improving sequence would still be considered acceptable. This captures patience in the context of improving sequences (each date is treated equally when evaluating the changes in utility resulting from a new project). We show that the set of discount structures in our representation includes only \textit{purely finitely additive} discount structures.



A key technical contribution in our paper lies in the proof of our main result, which combines functional analysis techniques with the Schauder–Tychonoff fixed-point theorem. To gain insight into the role of fixed points within our proof, consider that a discounting sequence $p$ can be characterized as an exponential discounting if and only if it is either equal to $(1, 0, 0, ...)$ or if it represents a fixed point of the transformation defined by $s(p) = \frac{1}{1-p_0}(p_1, p_2, ...)$. The existence of such fixed points is established through the Schauder–Tychonoff fixed-point theorem, a fundamental result in topology and functional analysis.\footnote{To the best of our knowledge, this is the first time in intertemporal social choice theory that a representation theorem is obtained by mixing eigenvalues, eigenvectors and a fixed-point theorem.}

The paper's structure can be summarized as follows: In Section \ref{section2}, we introduce the setting and our primary representation. Section \ref{section3} outlines a set of axioms and presents our main representation theorem. Section \ref{section4} extends to a more comprehensive result, encompassing variational representations that can incorporate discount structures without discounting future utilities (refer to Section \ref{sec:equal}). For the sake of completeness, all proofs are provided in the Appendix.

\subsection{Related literature}



In a related body of literature, researchers have explored social choice theory approaches for aggregating time preferences and reconciling diverse opinions on discount rates. A notable work in this domain is \cite{weitzman}, who conducts a survey of economists' discount rates to underpin a gamma discounting model. It is widely acknowledged that the simultaneous satisfaction of the assumption that individuals follow the EDU model and the Pareto condition is unattainable, as documented in various studies (see, for example, \cite{GZ2005}, \cite{zuber}, \cite{jackson}, and \cite{billotime}). However, \cite{MH2018} demonstrate that this negative result can be circumvented if we only require that the planner is time-consistent.\footnote{The decision maker adhering to the variational discounting model exhibits time inconsistency. The literature explores various reasons for time inconsistency in the objective functions of policymakers and social planners. A detailed review of these arguments is provided by \cite{H2020}. For policymakers, political turnovers between different parties can lead to time inconsistency (e.g., \cite{PS1989} and \cite{AG1990}). Similarly, for benevolent social planners aggregating individual preferences, factors like agents' altruism or varying discount rates can also induce time inconsistency (e.g., \cite{jackson} and \cite{GS2017}).} In their study, \cite{FK2018} introduced a modified Pareto condition that can be compatible with the assumption that individuals follow the EDU model.

\cite{CE2018} propose three rules for aggregating discount rates. One of these rules combines exponential individual discount functions in a utilitarian manner, while the other two rules aggregate exponential individual discount functions by selecting, respectively, the most pessimistic utilitarian weight and discount function. 

Recently, \cite{millner} highlights that planners can hold differing views on all the normative aspects that typically lead to debates on social discounting. Despite these differences, Millner's work shows that such disagreements can still lead to a consensus regarding the long-run discount rate.

Equal treatment for future generations gained prominence in mainstream economics through Pigou's early work on welfare (\cite{P1912}). This notion is often encapsulated by \textit{Patience}, which dictates that preferences remain intact when the utility levels of a finite number of generations are permuted. However, Patience is incompatible with the strong Pareto axiom, stating that preferences are influenced by an increase in the utility level of any generation (see, for example, \cite{D1965}, \cite{BM2003}, \cite{FM2003}, and \cite{Z2007}).

The first modern mathematical formulation of long-run intergenerational welfare was presented by \cite{R1928}. He develops a theory that does not discount later enjoyments in comparison with earlier ones. Subsequently, various criteria have been proposed for long-run intergenerational welfare. \cite{W1965} introduces the concept of overtaking optimality, while \cite{R2007} suggests the use of a purely finitely additive discount structure to discount utilities at different time points. \cite{M1998} characterizes the \textit{liminf} criterion and the \textit{Polya} index, representing agents with ``infinite" patience. More recently, \cite{P2022} establishes axiomatic foundations for the \textit{Cesàro average utility} representation. Notably, \cite{KS2018} demonstrate that many long-run social welfare functions previously studied in the literature can be expressed in a maxmin form.




\section{Setting and representation}\label{section2}

\subsection{Setting}\label{subsection2.1}

We consider a decision maker making choices among infinitely countable streams of utilities. The set $\mathbb{N}$ is the set of time periods (0 being the present). Formally, the preferences of the decision maker are represented by a binary relation $\succsim$ on $l_{\infty}$, the space of real-valued bounded sequences, endowed with the sup norm.\footnote{It is important to note that these utility flows are fundamentally distinct from financial flows. More precisely, the theory studies how individuals discount the satisfaction and discomfort experienced at specific time points, rather than how they perceive the timing of financial transactions.}  Let us recall that $l_{\infty}$ can be endowed with the following standard order relations: for every $x,y \in l_{\infty}$, $x \geq y$ if and only if  $x_n \geq y_n$ for all $n \in \mathbb{N}$, $x>y$ if and only if $x\geq y$ and $x \neq y$, and $x\gg y$ if and only if $x_n > y_n$ for all $n \in \mathbb{N}$. For every $x=(x_n)_{n \in \mathbb{N}} \in  l_{\infty}$, $x_n$ represents the utility that the decision maker obtains at time $n$.


The preference relation of the decision maker $\succsim$ is said to be a \textit{weak order} if it is \textit{complete} (i.e., for every $x,y \in l_{\infty}$, either $x\succsim y$ or $y \succsim x$) and \textit{transitive} (i.e., for every $x,y,z \in l_{\infty}$, $x\succsim y$ and $y\succsim z$ implies $x\succsim z$). We define the asymmetric part $\succ$ and symmetric part $\sim$ of $\succsim$ as usual. That is, $x \succ y$ if $x\succsim y$ and not $y\succsim x$, and $x \sim y$ if $x\succsim y$ and $y \succsim x$. For any real number $\theta$, we employ a slight abuse of notation by using $\theta$ to represent the constant sequence $(\theta, \theta, \dots)$. Additionally, for any $x \in l_{\infty}$, we refer to $I(x) \in \mathbb{R}$ as a \textit{constant equivalent} of $x$ if $x \sim I(x)$. 

If $(E_n)_{n\in \mathbb{N}}$ denotes a sequence of subsets of $\mathbb{N}$, then $(E_n)\downarrow \emptyset$ means that $E_n\supseteq E_{n+1}$ for every $n \geq 0$ and $\bigcap_{n=0}^{\infty} E_n = \emptyset$. For any $x,y \in l_{\infty}$ and $A\subseteq \mathbb{N}$, the notation $xAy$ denotes the sequence $z$ defined as follows: $z_n = x_n$ if $n \in A$, and $z_n = y_n$ if $n \in A^{c}$.

The norm dual of the space of bounded sequences $(l_{\infty},\|.\|_{\infty})$, i.e., the set of continuous linear functions from $l_{\infty}$ to $\mathbb{R}$, is denoted by $ba(\mathbb{N})$. If $\mu \in ba(\mathbb{N})$ and $x \in l_{\infty}$, we will denote (as usual) $\langle x,\mu\rangle$, rather than $\mu(x)$, the value of $\mu$ at $x$. By a Riesz representation theorem, this space can also be seen as the space of all finitely additive measures of bounded variation on the $\sigma$-algebra $2^{\mathbb{N}}$ of all subsets of $\mathbb{N}$. We endow, as usual, $ba(\mathbb{N})$ with the weak$^{\star}$ topology $\sigma(ba(\mathbb{N}),l_{\infty})$.\footnote{A net $(\mu_d)$ in $ba(\mathbb{N})$ converges to $\mu$ for this topology if and only if, for every $x \in l_{\infty}$, the net $\langle x, \mu_{d}\rangle$ converges to $\langle x, \mu \rangle$.} When $\mu$ is countably additive, we can represent $\mu$ by $(\mu_n)_{n \in \mathbb{N}}$, where $\mu_n=\mu(\{n\})$. In that case, we can write $\langle x,\mu\rangle=\int_{\mathbb{N}}x_n d\mu=\sum_{n=0}^{+\infty} x_n \mu_n$. Let $ca(\mathbb{N})$ be the set of all countably additive set functions. 

In the context of temporal decision-making, a key ingredient is how individuals assign value to different points in time. A \textit{discount structure} is formally defined as a measure $\mu \in ba(\mathbb{N})$ that satisfy the following conditions:
\begin{enumerate}[(i)]
\item $\mu (\emptyset) = 0$.
\item $\mu (A \cup B) = \mu (A) + \mu (B)$ whenever $A$ and $B$ are disjoint.
\item $\mu (\mathbb{N})=1$. 
\end{enumerate}
We denote the set of all discount structures as $\Delta$ and the set of all countably additive discount structures as $\Delta^{\sigma}$. Unless otherwise specified, any subset of $ba(\mathbb{N})$ (including $\Delta$ and $\Delta^{\sigma}$) is equipped with the relative topology derived from the weak$^{\star}$ topology on $ba(\mathbb{N})$.

\subsection{Representation}


As in \cite{M2006}, a function $c:X \rightarrow [0,\infty]$ is said to be \textit{grounded} if $\inf_{p \in X} c(p) = 0$. 

\begin{definition}
A binary relation $\succsim$ on $l_{\infty}$ admits a \textit{variational discounting} representation if there exists a grounded and lower semicontinuous function $c:[0,1] \to [0,\infty]$, where $c(1) = \infty$, such that the constant equivalent representing $\succsim$ can be expressed as
$$I(x)=\min_{\delta \in [0,1)} \bigg\{(1-\delta) \sum_{t=0}^{+\infty} \delta^t  x_t + c(\delta)\bigg\}.$$
\end{definition}

Unfortunately, a unique characterization of the cost function $c$ is not attainable. However, it can be shown that the cost function is unique up to a lower semicontinuous convex extension. To elaborate, if we view each discount factor $\delta$ as a geometric distribution $(1-\delta)(1, \delta, \delta^2, \dots)$, then $c$ can be perceived as a function whose domain is a subset of $\Delta$. Let $D$ be the closed convex hull of the set $F:= \{(1-\delta)(1,\delta,\delta^2,\dots): \delta\in [0,1)\}$. Let $c_1$ and $c_2$ be two cost functions that represent the same variational discounting preference. Clearly, for $i\in \{0,1\}$, we can express $$I(x)=\min_{p\in D} \{ \langle x,p\rangle + \bar{c}_i(p)\},$$ where $\bar{c}_i: D \to [0,\infty]$ is defined as $\bar{c}_i ((1-\delta)(1,\delta,\delta^2,\dots)) = c_i(\delta)$ for every $\delta \in [0,1)$, and $\bar{c}_i (p) = \infty$ for every $p \in D \setminus F$. We can prove that the lower semicontinuous convex envelopes of $\bar{c}_1$ and $\bar{c}_2$ coincide (for a detailed proof, please refer to Appendix \ref{appenunique}).

The decision maker who follows the variational discounting criterion simultaneously considers all discount factors to evaluate utility streams, but intends to assign varying degrees of importance or weight to these factors. The weights assigned to discount factors are measured by the ``cost" function $c$. The decision maker then conducts an evaluation of any intertemporal utility stream $x$ by identifying the minimum value of its discounted sum across all discount factors $\delta$, augmented by their corresponding costs $c(\delta)$. Discount factors characterized by higher costs receive relatively less weight. As an example, when the cost function $c(\delta)$ equals zero, the decision maker is fully confident that $\delta$ is the correct and appropriate discount factor. Conversely, when $c(\delta)$ equals infinity, this signifies that $\delta$ is deemed maximally implausible, and the decision maker opts to completely exclude it from the set of potential discount factors. 


Importantly, when $\delta = 1$, this case represents a discount structure that treats all time periods with equal importance.\footnote{The situation where $\delta = 1$ cannot be represented using a discounting sequence.} According to the variational discounting criterion, the decision maker attributes the maximum conceivable penalty to this discount structure, as indicated by $c(1) = \infty$, and as a result, she entirely disregards it.

The variational discounting representation is very broad, primarily because the conditions placed on the cost function $c$ are weak. For instance, consider the case in which the cost function $c$ assigns a cost of zero to all discount factors within a specific set $E \subseteq [0,1)$, while assigning an infinite cost to discount factors falling outside this set. Intuitively, the decision maker subjectively posits a set of discount factors that she deems accurate and imposes the maximal penalties on other discount factors. Furthermore, she exercises caution by adhering to the maxmin-type criterion: $$I(x)=\min_{\delta \in E} \bigg\{(1-\delta) \sum_{t=0}^{+\infty} \delta^t  x_t\bigg\}.$$

In some situations, the decision maker might exhibit full confidence, considering a single discount factor plausible while maximally penalizing all other discount factors. In such cases, the decision maker's preference is modeled by the EDU model
$$I(x)= (1-\delta) \sum_{t=0}^{+\infty} \delta^t  x_t$$ for some discount factor $\delta \in [0,1)$.

\section{Main results}\label{section3}

\subsection{Standard axioms}\label{section3.1}

The following axioms are well known in the literature:

\begin{axiom}[\textbf{Weak order}]\label{axiom1}
The relation $\succsim$ is a \textit{weak order}, i.e., it is complete (for every $x,y \in l_{\infty}$, either $x\succsim y$ or $y \succsim x$) and transitive (for every $x,y,z \in l_{\infty}$, if $x\succsim y$ and $y\succsim z$, then $x\succsim z$).
\end{axiom}

\begin{axiom}[\textbf{Monotonicity}]\label{axiom2}
(1) For every $x,y \in l_{\infty}$, if $x \geq y$, then $x\succsim y$, and (2) $1 \succ 0$. 
\end{axiom}

\begin{axiom}[\textbf{Continuity}]\label{axiom3}
For every $x,y,z \in l_{\infty}$ with $x\succsim y \succsim z$, the sets $\{\alpha \in [0,1]: \alpha x+ (1-\alpha)z \succsim y\}$ and $\{\alpha \in [0,1]: y \succsim \alpha x+ (1-\alpha)z\}$ are closed in $[0,1]$.
\end{axiom}

\begin{axiom}[\textbf{Invariance with respect to a common reference point (ICRP)}]\label{axiom4}
For every $x,y \in l_{\infty}$, $x \succsim y$ implies $x+ \theta \succsim y+ \theta$ for every $\theta \in \mathbb{R}$.
\end{axiom}

\begin{axiom}[\textbf{Convexity}]\label{axiom5}
For every $x,y \in l_{\infty}$, $\theta \in \mathbb{R}$, and $\lambda \in [0,1]$, if $x\succsim \theta$ and $y\succsim \theta$, then $\lambda x+(1-\lambda)y\succsim \theta$.
\end{axiom}

\begin{axiom}[\textbf{Monotone continuity}]\label{axiom8}
Let $x\in l_{\infty}$ and $\theta \in \mathbb{R}$ such that $x\succ \theta$, and let $(E_n)_{n\in \mathbb{N}}$ be a sequence of subsets of $\mathbb{N}$ with $(E_n) \downarrow \emptyset$. Then for every $k \in \mathbb{R}$, there exists $n_0 \in \mathbb{N}$ such that $k E_{n_0}x\succ \theta$.
\end{axiom}

\begin{axiom}[\textbf{Invariance with respect to the scale of utils (ISU)}]\label{axiom6}
For every $x,y\in l_{\infty}$ and $\alpha \geq 0$, $x\succsim y$ implies $\alpha x\succsim \alpha y$.
\end{axiom}

\begin{axiom}[\textbf{Invariance with respect to individual origins of utilities (IOU)}]\label{axiom7}
For every $z \in l_{\infty}$, if $x \sim y$, then $x+z \sim y+z$.
\end{axiom}

The first three axioms are standard. The first statement in Monotonicity means that if the decision maker obtains greater utility from a stream $x$ in any time period compared to another stream $y$, she should exhibit a weak preference for stream $x$ over $y$, and the second component of Monotonicity ensures that not all sequences are considered equivalent. Finally, Continuity is a technical assumption that guarantees a certain level of stability in preferences.

Convexity corresponds to a preference for equalizing or ``smoothing" utilities across different time periods. To illustrate this, consider a situation where we are indifferent between the two sequences $(1, -1, 1, -1, \ldots)$ and $(-1, 1, -1, 1, \ldots)$. Then the mixed utility stream given by $$\frac{1}{2}(1,-1,1,-1,\dots)+\frac{1}{2} (-1,1,-1,1,\dots)=(0,0,0,\dots)$$ should be considered preferable.

Countable additive discount structures streamline the presentation and find pervasive application across much of intertemporal choice. \textit{Monotone continuity} is a technical assumption introduced by \cite{V1964} and \cite{A1970} that allows us to establish representations involving standard discounting sequences. In decision theory, this axiom plays a crucial role in guaranteeing the countable additivity of beliefs in standard models (see \cite{A1970}, \cite{CMMJ2005}, and \cite{M2006}).

The axioms of ICRP, ISU, and IOU are introduced and interpreted in an intergenerational setting by \cite{CE2018}. As their name suggests, these axioms can be understood as invariance conditions. In essence, they express the idea that if two utility streams are connected in a certain manner, then for a specified class of transformations applied to those streams, the relationship between the two streams remains unchanged. ICRP and ISU together are equivalent to \textit{Co-cardinality} introduced by \cite{CE2018}. This assumption stipulates that the preference order between two streams remains unaltered when subjected to positive affine transformations of the streams. On the other hand, IOU necessitates a different class of transformations: adding a common stream to any given pair of streams should not change their ranking.

We say that $I$ is \textit{normalized} if $I(1)=1$, and $I$ is \textit{translation invariant} if for every $x \in l_{\infty}$ and $\theta \in \mathbb{R}$, $I(x+\theta) = I(x)+\theta$. The following proposition recalls under which conditions a constant equivalent of $\succsim$, which possesses some derived properties, exists (for a proof, see Proposition 1 in \cite{A2004}):

\begin{proposition}\label{pro1} 
A binary relation $\succsim$ on $l_{\infty}$ satisfies Axioms \ref{axiom1}-\ref{axiom5} if and only if there exists a unique weakly-increasing, concave, $1$-lipschitz, normalized, and translation invariant constant equivalent $I:l_{\infty}\rightarrow \mathbb{R}$ that represents $\succsim$. 
\end{proposition}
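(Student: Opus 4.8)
The plan is to prove the two implications separately; the ``if'' direction is routine verification, while the ``only if'' direction requires constructing $I$ as the constant-equivalent map (this is essentially the classical representation of a monotone, translation-invariant functional).

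\textbf{The ``if'' direction.} Given such an $I$, define $x \succsim y$ by $I(x) \ge I(y)$. Weak order holds because $I$ is real-valued. For Monotonicity, part (1) is weak increasingness of $I$, while translation invariance and normalization give $I(0) = I(1) - 1 = 0$, hence $1 \succ 0$. Continuity holds because $I$ is $1$-Lipschitz, hence norm-continuous, so $\alpha \mapsto I(\alpha x + (1-\alpha) z)$ is continuous on $[0,1]$. ICRP is precisely translation invariance. For Convexity, note $I(\theta) = I(0) + \theta = \theta$ for a constant stream $\theta$, so $x \succsim \theta$ and $y \succsim \theta$ give $I(x), I(y) \ge \theta$, and concavity yields $I(\lambda x + (1-\lambda) y) \ge \lambda I(x) + (1-\lambda) I(y) \ge \theta = I(\theta)$.

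\textbf{The ``only if'' direction.} The core step is that every $x \in l_{\infty}$ has a unique constant equivalent. \emph{Existence:} with $M = \|x\|_{\infty}$, Monotonicity gives $M \succsim x \succsim -M$ (comparing with constant streams), so Continuity applied to the triple $(M, x, -M)$ makes $A = \{\alpha \in [0,1] : (2\alpha - 1) M \succsim x\}$ and $B = \{\alpha \in [0,1] : x \succsim (2\alpha - 1) M\}$ closed; they are nonempty ($1 \in A$, $0 \in B$) and cover $[0,1]$ by completeness, so connectedness of $[0,1]$ forces $A \cap B \neq \emptyset$, giving some $\theta$ with $x \sim \theta$. \emph{Uniqueness:} I first record that constant streams are strictly ordered --- if $\theta_2 > \theta_1$ but $\theta_1 \sim \theta_2$, then ICRP and transitivity give $0 \sim n(\theta_2 - \theta_1)$ for every $n \ge 1$, and choosing $n$ with $n(\theta_2 - \theta_1) > 1$ yields $n(\theta_2 - \theta_1) \succsim 1 \succ 0$ by Monotonicity, a contradiction. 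Hence $\theta$ is unique, and I set $I(x) := \theta$.

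It then remains to verify the listed properties of $I$, all short: the identity $x \sim I(x)$ together with strict monotonicity of constants shows $I$ represents $\succsim$; $I(1) = 1$ is normalization; applying ICRP to $x \sim I(x)$ gives $x + \theta \sim I(x) + \theta$, hence translation invariance; weak increasingness is Monotonicity; and $y - \|x - y\|_{\infty} \le x \le y + \|x - y\|_{\infty}$ with translation invariance gives the $1$-Lipschitz bound. The one step needing a small idea --- and the only place I expect any (mild) obstacle --- is \textbf{concavity}, since Convexity is stated only relative to constant streams: I would set $x' = x - I(x) \sim 0$ and $y' = y - I(y) \sim 0$, apply Convexity with $\theta = 0$ to get $\lambda x' + (1-\lambda) y' \succsim 0$, and translate back via translation invariance to conclude $I(\lambda x + (1-\lambda) y) \ge \lambda I(x) + (1-\lambda) I(y)$. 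Uniqueness of $I$ among all such functionals is then immediate, since any two constant equivalents must agree by strict monotonicity of constant streams.
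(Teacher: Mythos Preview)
Your proof is correct and complete; the paper itself does not give an argument but simply cites Proposition~1 in \cite{A2004}, so there is nothing to compare against. Your construction of the constant equivalent via the connectedness argument, the derivation of strict monotonicity on constants from ICRP and $1\succ 0$, and the reduction of concavity to Convexity by recentering $x'=x-I(x)\sim 0$ are exactly the standard steps one would expect, and each is handled cleanly.
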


\subsection{Improving sequences and main invariance axiom} 


Our idea lies in the approach to formulating invariance properties: instead of expressing invariance properties in terms of transformations directly applied to a utility stream, we formulate them in terms of transformations applied to potential improvements from the stream.
\begin{definition} 
For every stream of utilities $x \in l_{\infty}$, an improving sequence from $x$ is a sequence $d \in l_{\infty}$ such that $x+d \succsim x$.  
\end{definition} 

Intuitively, we can interpret the sequence $x$ as representing the utility stream generated by the current status quo. Now, let us consider the adoption of a new long-term project (or policy) that introduces an additional utility sequence, denoted as $d$. In essence, when the project is implemented, the resulting utility stream becomes $x + d$. We can classify the sequence $d$ as an improving sequence if the project is deemed to be profitable. Let us state our main invariance principle:

\begin{linvashift}
For all $x,d\in l_{\infty}$, if $x+d \succsim x$, then $x+(0,d) \succsim x$.
\end{linvashift}

Our invariance principle posits that if $d$ is an improving sequence from $x$, then its delayed version $(0, d)$ remains an improving sequence from $x$. This principle captures the concept that when a new policy generates benefits, this profitability persists even when the policy's implementation is delayed by one period.

\subsection{Behavioral characterizations}


The variational discounting representation is characterized by standard Axioms \ref{axiom1}-\ref{axiom8}, along with our invariance principle IDIS.

\begin{theorem} \label{corollary1}
A binary relation $\succsim$ on $l_{\infty}$ satisfies Axioms \ref{axiom1}-\ref{axiom8} and IDIS if and only if it admits a variational discounting representation.

\end{theorem}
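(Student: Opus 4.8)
\section*{Proof proposal}

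\textbf{Overall strategy.} The plan is to prove the two directions separately. The ``if'' direction (a variational discounting representation implies Axioms~\ref{axiom1}--\ref{axiom8} and IDIS) should be routine: each axiom translates into a property of $I(x)=\min_{\delta}\{(1-\delta)\sum_t\delta^t x_t + c(\delta)\}$ that follows from $I$ being a minimum of affine functionals $x\mapsto\langle x,p_\delta\rangle$ over the geometric distributions $p_\delta=(1-\delta)(1,\delta,\delta^2,\dots)$. Monotonicity, ICRP, ISU, IOU, Convexity and Continuity come from the standard ``niveloid''/Maccheroni--Marinacci--Rustichini machinery; Monotone continuity follows because every $p_\delta$ with $\delta<1$ is countably additive; IDIS follows because the set $F=\{p_\delta:\delta\in[0,1)\}$ is invariant under the shift-type map, which I make precise below. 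The work is all in the ``only if'' direction.

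\textbf{The only-if direction.} Assume $\succsim$ satisfies Axioms~\ref{axiom1}--\ref{axiom8} and IDIS. By Proposition~\ref{pro1}, Axioms~\ref{axiom1}--\ref{axiom5} already give a unique weakly-increasing, concave, $1$-Lipschitz, normalized, translation-invariant constant equivalent $I$ representing $\succsim$. Standard convex-duality (the biconjugation theorem for such niveloids, exactly as in \cite{M2006}) yields
\begin{equation*}
I(x)=\min_{p\in\Delta}\{\langle x,p\rangle + c(p)\},\qquad c(p)=\sup_{y\in l_\infty}\{I(y)-\langle y,p\rangle\}\ge 0,
\end{equation*}
with $c$ grounded, convex, lower semicontinuous on $\Delta$ (weak$^\star$), and the minimum attained by weak$^\star$-compactness of $\Delta$. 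Next I would show that Monotone continuity forces $c$ to be $+\infty$ off $\Delta^\sigma$: the standard argument (as in \cite{CMMJ2005}, \cite{M2006}) shows that if $p$ has a purely finitely additive part then one can build a sequence $(E_n)\downarrow\emptyset$ witnessing a violation, so effectively $I(x)=\min_{p\in\Delta^\sigma}\{\langle x,p\rangle+c(p)\}$, and $\Delta^\sigma$ is identified with countably additive probability sequences $p=(p_0,p_1,\dots)$, $p_n\ge 0$, $\sum p_n=1$.

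\textbf{Exploiting IDIS via eigenvectors and a fixed point.} The crux is to show that the effective domain of $c$ within $\Delta^\sigma$ consists only of geometric sequences $p_\delta$. Writing $S:l_\infty\to l_\infty$ for $S(d)=(0,d)$, IDIS says: $\langle d,p\rangle\ge 0$ (for the minimizing $p$ at $x$, loosely) should be inherited by $Sd$. Translated to measures, $\langle Sd,p\rangle=\langle d,S^*p\rangle$ where $S^*p=(p_1,p_2,\dots)$ is the backward shift on $ca(\mathbb{N})$. I would argue, using that $I$ is the lower envelope and that improving sequences from $x$ are exactly the half-space $\{d:\langle d,p^*\rangle\ge 0\}$ supporting $I$ at a point where $p^*$ is the unique minimizer, that IDIS propagates to: for every $p$ in the effective domain, $S^*p$ is a nonnegative multiple of some $q$ again in the effective domain; pushing this through concavity/positive homogeneity of the cost structure forces $S^*p = (1-p_0)\,q$ with $q$ in the domain, and ultimately $q=p$, i.e.\ $p$ is an eigenvector: $(p_1,p_2,\dots)=(1-p_0)(p_0,p_1,\dots)$. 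Solving this recursion gives either $p=(1,0,0,\dots)$ (the $\delta\to 0$ limit, or handled separately) or $p_n=(1-\delta)\delta^n$ with $\delta=p_1/p_0\in(0,1)$. The existence of nontrivial elements in the domain — needed so the representation isn't degenerate, and more importantly to reconstruct $c$ on all of $[0,1)$ — is where the Schauder--Tychonoff fixed-point theorem enters, applied to the self-map $s(p)=\frac{1}{1-p_0}(p_1,p_2,\dots)$ on a suitable weak$^\star$-compact convex slice of $\Delta^\sigma$, as flagged in the introduction. Finally, setting $c(\delta):=\bar c(p_\delta)$ and $c(1):=\infty$, I recover $I(x)=\min_{\delta\in[0,1)}\{(1-\delta)\sum_t\delta^t x_t+c(\delta)\}$, and lower semicontinuity plus groundedness of $c$ are inherited from those of $\bar c$; grounded because $I(1)=1$.

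\textbf{Main obstacle.} The delicate step is the passage ``IDIS on the preference $\Rightarrow$ shift-invariance of the effective domain of $c$, phrased as an eigenvector condition.'' IDIS is a statement about improving sequences, i.e.\ about a single supporting half-space at each $x$, not directly about all of $\mathrm{dom}\,c$; one must vary $x$ (using ICRP/ISU to move around and rescale, and Continuity to take limits) so that every extreme supporting functional of $I$ is exposed as the unique minimizer at some $x$, then apply IDIS there. Making this exhaustion rigorous — and correctly handling the normalization factor $1-p_0$ (which could vanish, giving the boundary case $p_0=1$) so that $S^*p$ lands back in $\mathrm{dom}\,c$ rather than merely in its closure — is the technical heart of the argument, and is exactly where the fixed-point theorem is invoked to guarantee the domain is nonempty and ``large enough'' to pin down $c$ on all of $[0,1)$.
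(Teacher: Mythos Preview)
Your proposal assembles the right ingredients --- the niveloid dual representation, the reduction to $\Delta^\sigma$ via Monotone continuity, the normalized shift $s(p)=\frac{1}{1-p_0}(p_1,p_2,\dots)$, and Schauder--Tychonoff --- but it misplaces where the fixed-point argument lives, and the target you aim at is actually false.

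The claim that ``the effective domain of $c$ within $\Delta^\sigma$ consists only of geometric sequences'' cannot hold. The Fenchel conjugate $c(p)=\sup_y\{I(y)-\langle y,p\rangle\}$ is convex on $\Delta$, so its effective domain is convex; as soon as two distinct geometrics $p_{\delta_1},p_{\delta_2}$ have finite cost, every convex combination of them does too, and those mixtures are not geometric. (This is exactly why the paper only claims uniqueness of $c$ up to a lower semicontinuous convex extension on the closed convex hull of $F$.) So the program ``expose each $p$ in $\mathrm{dom}\,c$ as a unique minimizer at some $x$, then force $S^*p=(1-p_0)p$'' is doomed: many non-geometric $p$ legitimately sit in $\mathrm{dom}\,c$. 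Relatedly, your description of the improving sequences from $x$ as the single half-space $\{d:\langle d,p^*\rangle\ge 0\}$ is not right --- it is the intersection over all $p$ of $\{d:\langle d,p\rangle\ge I(x)-\langle x,p\rangle-c(p)\}$ --- so IDIS does not immediately deliver an eigenvector condition on an arbitrary $p\in\mathrm{dom}\,c$.

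The paper does not try to shrink $\mathrm{dom}\,c$. It works \emph{pointwise}: for each fixed $x$, it shows $\partial I(x)$ is nonempty, weak$^\star$-compact, convex, and (by Monotone continuity) contained in $\Delta^\sigma$, and then proves the key lemma that $s$ maps $\partial I(x)$ into itself. The proof of that lemma is precisely where IDIS is used: writing $s(p)\in\partial I(x)$ as $\langle(0,y'-x'),p\rangle\ge 0$ for every $y'\sim 0$ (with $x'=x-I(x)$, $y'=y-I(y)$), one sets $d=y'-x'$, observes $I(x+d)=I(x)$, and invokes IDIS to get $I(x+(0,d))\ge I(x)$, which delivers the inequality via $p\in\partial I(x)$. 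Schauder--Tychonoff is then applied to $s:\partial I(x)\to\partial I(x)$ (this is the ``suitable compact convex slice'' you were looking for), producing a fixed point that is automatically a geometric sequence. Thus $\partial I(x)\cap F\neq\emptyset$ for every $x$, and the generalized Moreau duality (Lemma~\ref{lem2}) yields $I(x)=\min_{p\in F}\{\langle x,p\rangle+c(p)\}$ directly, without ever restricting $\mathrm{dom}\,c$. The fixed point is not there to make $\mathrm{dom}\,c$ ``large enough''; it is invoked once per $x$ to locate a geometric element of $\partial I(x)$.

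Two smaller points. First, ISU and IOU are not among the axioms in the statement, so you should neither verify them in the ``if'' direction nor invoke ISU ``to rescale'' in the ``only if'' direction. Second, a separate argument is still needed at the end to check that the induced $c:[0,1]\to[0,\infty]$ is lower semicontinuous (and that $c(1)=\infty$); the paper does this by showing $\overline{F}\setminus F\subseteq\Delta\setminus\Delta^\sigma$ and transporting lower semicontinuity of the conjugate through the continuous parametrization $\delta\mapsto p_\delta$.
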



The complete proof can be found in Appendix \ref{proof-coro1}. Here, we will provide the proof idea for the ``only-if" part. Consider the set $F$, defined as $F= \{(1-\delta)(1,\delta,\delta^2,\dots): \delta \in [0,1)\}$, encompassing all possible geometric distributions. By the generalized Moreau duality (Appendix \ref{moreau}), in order to establish our representation theorem, it suffices to prove that, for any given $x \in l_{\infty}$, the \textit{$F$-superdifferential} of the functional $I$ at $x$, which coincides with $\partial I(x) \cap F$, is non-empty.


Under Axioms \ref{axiom1}-\ref{axiom5}, it can be shown that the constant equivalent function $I$ associated with the preference relation $\succsim$ possesses the following properties: it is weakly-increasing, concave, $1$-lipschitz, normalized, and translation invariant (as established in Proposition \ref{pro1}). Consequently, we can prove that the superdifferential of the function $I$ at any point $x$ is non-empty and is a subset of $\Delta$. Furthermore, Axiom \ref{axiom8} ensures that the superdifferential of $I$ at $x$ is indeed a subset of $\Delta^\sigma$. 

If $(1,0,0,\dots) \in \partial I(x)$, then it is evident that $\partial I(x) \cap F \neq \emptyset$. Now, we consider the case where $(1,0,0,\dots) \notin \partial I(x)$. In particular, the function $s: \partial I(x) \to \Delta^{\sigma}$ defined by $$s(p)= \frac{1}{1-p_0}(p_1,p_2,\dots)$$ is well defined. The crux of our proof lies in establishing that if $p\in \partial I(x)$, then $s(p) \in \partial I(x)$. This result is a consequence of our primary invariance principle. Hence, the mapping $s$ is continuous, operating from the convex and compact set $\partial I(x)$ to itself. Applying the Schauder–Tychonoff fixed-point theorem, we can deduce the existence of a fixed point $p$ in $\partial I(x)$, implying that $p \in F$. Consequently, for all utility streams $x$, the $F$-superdifferential at $x$ is non-empty, which concludes the proof.

\medskip

The introduction of the ISU assumption results in the cost function in Theorem \ref{corollary1} attaining values of either 0 or $\infty$. This condition gives rise to the \textit{maxmin discounting} criterion, as outlined in the following result:

\begin{proposition} \label{corollary2}
A binary relation $\succsim$ on $l_{\infty}$ satisfies Axioms \ref{axiom1}-\ref{axiom6} and IDIS if and only if there exists a unique closed set (in $[0,1]$) $E \subseteq [0,1)$ such that the constant equivalent representing $\succsim$ can be expressed as
$$I(x)=\min_{\delta \in E} \bigg\{(1-\delta) \sum_{t=0}^{+\infty} \delta^t  x_t\bigg\}.\footnote{In the special case when $\delta = 0$, we interpret this expression as the measure $(1,0,0,\dots)$, by means of the notational convention that $0^0=1$.}$$
\end{proposition}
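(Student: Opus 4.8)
The plan is to deduce Proposition \ref{corollary2} from Theorem \ref{corollary1} by showing that adding Axiom \ref{axiom6} (ISU) to the list of axioms forces the cost function $c$ to be $\{0,\infty\}$-valued, and conversely that any maxmin discounting criterion of the stated form does satisfy all of Axioms \ref{axiom1}--\ref{axiom6} and IDIS. For the ``only-if'' direction, first I would invoke Theorem \ref{corollary1} to obtain a grounded, lower semicontinuous $c:[0,1]\to[0,\infty]$ with $c(1)=\infty$ representing $\succsim$ via $I(x)=\min_{\delta\in[0,1)}\{(1-\delta)\sum_t\delta^tx_t+c(\delta)\}$. The key observation is that ISU makes $I$ positively homogeneous: $I(\alpha x)=\alpha I(x)$ for all $\alpha\ge 0$ (using $I(0)=0$, which follows from normalization and translation invariance, together with $x\succsim I(x)\Rightarrow \alpha x\succsim \alpha I(x)$ and the analogous reverse). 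A positively homogeneous concave functional given by an infimum of affine functions $x\mapsto \langle x,p\rangle+c(\delta)$ must in fact be a minimum of \emph{linear} functionals — i.e., the infimum is attained only on the set where $c(\delta)=0$. More precisely, I would argue: let $E=\{\delta\in[0,1): c(\delta)=0\}$; groundedness gives that $E$ is nonempty after taking closure, and lower semicontinuity of $c$ together with $c(1)=\infty$ ensures $E$ is closed in $[0,1]$ and contained in $[0,1)$ (so $E$ is compact). Then $\tilde I(x):=\min_{\delta\in E}(1-\delta)\sum_t\delta^tx_t$ is well-defined, positively homogeneous, and clearly $I\le \tilde I$. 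For the reverse inequality, I would use homogeneity of $I$: for any $x$ and any scalar $\alpha>0$, $I(\alpha x)=\alpha I(x)$, so writing the representation at $\alpha x$ and letting $\alpha\to\infty$ forces the minimizing $\delta$'s to have $c(\delta)$ bounded, hence (by a compactness/lsc argument) the value $I(x)$ is attained, in the limit, over $\delta$ with $c(\delta)=0$, giving $I(x)\ge\tilde I(x)$.

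For the ``if'' direction, I would take $E\subseteq[0,1)$ closed in $[0,1]$ (hence compact) and verify directly that $I(x)=\min_{\delta\in E}(1-\delta)\sum_t\delta^tx_t$ represents a preference satisfying the axioms. The minimum is attained because $\delta\mapsto (1-\delta)\sum_t\delta^tx_t$ is continuous on the compact set $E$ (continuity on $[0,1)$ is standard for bounded $x$, and $E$ stays away from $1$). Weak order, monotonicity (each functional is weakly increasing, and $I(1)=1>0=I(0)$), continuity, ICRP and convexity (each $I_\delta$ is linear, so $I$ is concave, in fact translation invariant and normalized) are routine; ISU holds since each $I_\delta$ is linear hence positively homogeneous, so their minimum is too. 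Monotone continuity (Axiom \ref{axiom8}) holds because each geometric measure $(1-\delta)(1,\delta,\delta^2,\dots)$ is countably additive, so $I_\delta(kE_nx)\to I_\delta(x)$ as $n\to\infty$ for $(E_n)\downarrow\emptyset$, uniformly in $\delta\in E$ by a dominated-convergence estimate together with compactness of $E$; this yields the required $n_0$. IDIS follows since $I_\delta(x+(0,d))-I_\delta(x)=\delta\,(I_\delta(x+d)-I_\delta(x))$ by the shift identity $(1-\delta)\sum_t\delta^t(0,d)_t=\delta\,(1-\delta)\sum_t\delta^td_t$, so if $x+d\succsim x$ then $\min_\delta I_\delta(x+d)\ge \min_\delta I_\delta(x)$, and one checks this propagates to the delayed sequence — more carefully, one mimics the argument behind Theorem \ref{corollary1}, or simply notes that the representation already proved to come from IDIS-satisfying preferences has this special form, so IDIS is inherited.

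The uniqueness of $E$ is the final point: two distinct closed subsets $E_1\ne E_2$ of $[0,1)$ give distinct functionals, because each $\delta$ corresponds to the geometric measure $p^\delta=(1-\delta)(1,\delta,\delta^2,\dots)$, and these measures are the extreme points of $D=\overline{\co}\,F$; a standard separation argument (for $\delta_0\in E_1\setminus E_2$, choose $x$ with $\langle x,p^{\delta_0}\rangle<\langle x,p^\delta\rangle$ for all $\delta\in E_2$, possible since $E_2$ is compact and $\delta\mapsto p^\delta$ is weak$^\star$ continuous and injective) shows $I_{E_1}\ne I_{E_2}$. I expect the main obstacle to be the ``only-if'' reduction — rigorously converting positive homogeneity of $I$ into the conclusion that the infimum in the variational representation is effectively over $\{c=0\}$, i.e., showing that the closed convex envelope of the cost function (the only part of $c$ that is pinned down, as the excerpt notes) must be the indicator function $0$ on $\co(\{p^\delta:\delta\in E\})$ and $\infty$ elsewhere. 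This requires combining the scaling argument with the lower-semicontinuity/compactness facts about $c$ and $F$ carefully; the ``if'' direction and uniqueness are comparatively mechanical verifications.
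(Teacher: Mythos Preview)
Your overall strategy matches the paper's: invoke Theorem \ref{corollary1} to obtain the variational representation, use ISU to get positive homogeneity of $I$, and deduce that the cost function collapses to $\{0,\infty\}$-values so the representation becomes maxmin over $E=\{\delta:c(\delta)=0\}$. The ``if'' direction and uniqueness are handled the same way (the paper simply declares the converse ``straightforward'' and cites \cite{CE2018} for uniqueness).

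The one place where you take a detour is precisely the step you flag as the main obstacle. The paper's argument is shorter than the scaling/limiting route you sketch: rather than working with minimizers at $\alpha x$ and sending $\alpha\to\infty$, it uses the explicit conjugate formula $c(\delta)=\sup_{x\in l_\infty}\{I(x)-(1-\delta)\sum_t\delta^t x_t\}$ from the proof of Theorem \ref{corollary1}. Since $g_\delta(x):=I(x)-(1-\delta)\sum_t\delta^t x_t$ is positively homogeneous, either $g_\delta\le 0$ everywhere (so $c(\delta)=0$) or $g_\delta(x_0)>0$ for some $x_0$, in which case $g_\delta(nx_0)=n\,g_\delta(x_0)\to\infty$ and $c(\delta)=\infty$. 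This gives $c\in\{0,\infty\}$ in one line and avoids the compactness/lsc limiting argument you were anticipating. Your scaling idea can be made rigorous, but it is more work for the same conclusion; the supremum formula for $c$ is the cleaner lever here. Also, a minor point: groundedness plus lower semicontinuity of $c$ on the compact set $[0,1]$ already gives that $E=\{c=0\}$ is nonempty without any additional closure step.
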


\begin{remark}\label{rek1}
If Monotonicity is strengthened into \textit{Strong Monotonicity}, then the subjective set of discounted factors $E$ is bounded away from 0. Strong Monotonicity is defined as: $x \geq y$ implies $x\succsim y$, and $x > y$ implies $x \succ y$. Indeed, assuming by contradiction that $0 \in E$, we encounter the situation where $I((0,1,0,\dots)) = I((0,2,0,\dots)) = 0$, which clearly contradicts Strong Monotonicity.
\end{remark}

Lastly, if the decision maker is fully confident and, as a result, employs a single discount factor to evaluate utility streams, then she follows the standard EDU model.

\begin{proposition}\label{corollary-exponential}
A binary relation $\succsim$ on $l_{\infty}$ satisfies Axioms \ref{axiom1}-\ref{axiom7}, Strong monotonicity, and IDIS if and only if there exists a unique $\lambda \in (0,1)$ such that the constant equivalent representing $\succsim$ can be expressed as
$$I(x)= (1-\delta) \sum_{t=0}^{+\infty} \delta^t  x_t.$$
\end{proposition}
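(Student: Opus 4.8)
The plan is to deduce this from Proposition \ref{corollary2} together with the extra axioms (IOU and Strong Monotonicity). First I would invoke Proposition \ref{corollary2}: since Axioms \ref{axiom1}--\ref{axiom6} and IDIS hold, there is a unique closed set $E\subseteq[0,1)$ with $I(x)=\min_{\delta\in E}\{(1-\delta)\sum_{t\ge0}\delta^t x_t\}$. By Remark \ref{rek1}, Strong Monotonicity forces $E$ to be bounded away from $0$, so $E\subseteq[\varepsilon,1)$ for some $\varepsilon>0$; in particular $E$ is a nonempty compact subset of $(0,1)$. It remains to show that the additional axiom IOU collapses $E$ to a singleton $\{\delta\}$, after which uniqueness is inherited from Proposition \ref{corollary2} and one sets $\lambda=\delta\in(0,1)$ (I note the statement's ``$\lambda$'' and the displayed ``$\delta$'' should be identified — a harmless typo in the paper).

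The core step is to exploit IOU: $x\sim y \Rightarrow x+z\sim y+z$ for all $z$. In terms of $I$, the indifference classes of $I$ must be translation-invariant as \emph{sets}, i.e. $I(x)=I(y)\Rightarrow I(x+z)=I(y+z)$. For a maxmin functional $I(x)=\min_{\delta\in E}\langle x, p_\delta\rangle$ with $p_\delta=(1-\delta)(1,\delta,\delta^2,\dots)$, this forces $I$ to be linear: indeed, the only concave, translation-invariant, normalized functionals whose indifference sets are closed under translation are the linear ones $x\mapsto\langle x,p\rangle$. I would argue this concretely: suppose $|E|\ge 2$, say $\delta_1<\delta_2$ both in $E$. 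Pick $x$ on which the two linear pieces disagree — e.g. choose $x$ with $\langle x,p_{\delta_1}\rangle<\langle x,p_{\delta_2}\rangle$ and $y=I(x)\cdot 1$, so $x\sim y$ and $I(y)=\langle x,p_{\delta_1}\rangle$. Now choose $z$ (using that $p_{\delta_1}\ne p_{\delta_2}$ as elements of $ba(\mathbb N)$, hence not proportional since both have total mass $1$, hence $\langle z,p_{\delta_1}\rangle-\langle z,p_{\delta_2}\rangle$ can be made an arbitrary real number as $z$ ranges over $l_\infty$) so that the minimizing index switches from $\delta_1$ at $x$ to $\delta_2$ at $x+z$ while staying at $\delta_1$ for the constant sequence $y+z$; then $I(x+z)<I(y+z)$, contradicting $x+z\sim y+z$. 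This yields $|E|=1$.

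With $E=\{\delta\}$ and $\delta\in(0,1)$, the representation reads $I(x)=(1-\delta)\sum_{t\ge0}\delta^t x_t$, which is the desired EDU form with $\lambda=\delta$. For the converse direction, I would simply verify that the EDU functional satisfies all the listed axioms: Weak order, Monotonicity, Continuity, ICRP, Convexity (in fact linearity), Monotone continuity (a convergent geometric series is $\sigma$-additive), ISU, IOU (linearity), Strong Monotonicity (since all weights $(1-\delta)\delta^t$ are strictly positive), and IDIS ($x+d\succsim x$ means $\langle d,p_\delta\rangle\ge0$, and $\langle (0,d),p_\delta\rangle=\delta\langle d,p_\delta\rangle\ge0$) — all routine. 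Uniqueness of $\delta$ follows because distinct $\delta$ give distinct functionals (already guaranteed by the uniqueness of $E$ in Proposition \ref{corollary2}). The main obstacle I anticipate is making the ``IOU forces linearity/singleton'' argument airtight — specifically, verifying that one can genuinely choose $x$ and $z$ in $l_\infty$ realizing the prescribed values of the two linear functionals and forcing the minimizer to switch, which requires checking that $p_{\delta_1}$ and $p_{\delta_2}$ are linearly independent functionals on $l_\infty$ (immediate, as they are non-proportional probability-like measures) and that the relevant open conditions on $z$ are simultaneously satisfiable.
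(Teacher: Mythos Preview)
Your overall strategy coincides with the paper's: reduce to Proposition~\ref{corollary2} and Remark~\ref{rek1} to get the maxmin form over a closed $E\subseteq(0,1)$, then use IOU to collapse $E$ to a singleton. The paper's execution of the last step is cleaner than your construction, though: since $x\sim I(x)$ and $y\sim I(y)$, IOU gives $x+y\sim I(x)+y\sim I(x)+I(y)$, hence $I(x+y)=I(x)+I(y)$; together with positive homogeneity (from ISU) this makes $I$ linear, so $\min_{\delta\in E}\langle x,p_\delta\rangle=-\min_{\delta\in E}\langle -x,p_\delta\rangle$ forces $\langle x,p_\delta\rangle$ to be constant over $\delta\in E$ for every $x$, and $E$ is a singleton.

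Your constructive route has a small glitch as written. With $y=I(x)\cdot 1$ constant, the minimizer for $y+z$ is just the minimizer of $\langle z,p_\delta\rangle$; asking this to be $\delta_1$ means $\langle z,p_{\delta_1}\rangle\le\langle z,p_{\delta_2}\rangle$, whereas asking the minimizer at $x+z$ to switch to $\delta_2$ (given $\langle x,p_{\delta_1}\rangle<\langle x,p_{\delta_2}\rangle$) forces $\langle z,p_{\delta_1}\rangle-\langle z,p_{\delta_2}\rangle>\langle x,p_{\delta_2}\rangle-\langle x,p_{\delta_1}\rangle>0$. These two requirements on $z$ are incompatible, so the specific scenario you describe cannot occur. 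The idea is easily rescued, however: take $z=-x$, so $I(x+z)=0$ while $I(y+z)=I(x)-\max_{\delta\in E}\langle x,p_\delta\rangle<0$ as soon as $\langle\,\cdot\,,p_\delta\rangle$ is not constant over $E$ at $x$, contradicting $x+z\sim y+z$. (In general you will get $I(x+z)\ge I(y+z)$, not the strict ``$<$'' you wrote, by subadditivity of the minimum --- but any strict inequality suffices.) The paper's additivity argument avoids these bookkeeping issues entirely and handles arbitrary $E$ without first reducing to two points.
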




\subsection{Aggregation of exponential discounting experts}

We now assume that the decision maker makes intertemporal decisions with aid from a group of experts, each of whom follows the EDU model but with varying discount factors. Let $D \subseteq (0,1)$ be the set of discount factors recommended by these experts. Suppose $D$ is closed. Moreover, we assume that the decision maker's preference $\succsim$ satisfies all axioms characterizing the variational discounting criterion. In particular, there exists a grounded and lower semicontinuous function $c:[0,1] \to [0,\infty]$, where $c(1) = \infty$, such that the constant equivalent representing $\succsim$ can be represented by
$$I(x)=\min_{\delta \in [0,1)} \bigg\{(1-\delta) \sum_{t=0}^{+\infty} \delta^t  x_t + c(\delta)\bigg\}.$$

The following axiom appears in \cite{CE2018}, indicating that if all experts unanimously agree that a sequence $x$ is at least as preferable as a sequence $y$, then the decision maker should also adopt this preference ranking:
\begin{pareto}
For every $x,y \in l_{\infty}$, if $(1-\delta) \sum_{t} \delta^t x_t \geq (1-\delta) \sum_{t} \delta^t y_t$ for all $\delta \in D$, then $x\succsim y$.
\end{pareto}

\begin{proposition}\label{propaggre}
Suppose that $\succsim$ is represented by the variational discounting criterion with the cost function $c$. Then the following statements are equivalent:
\begin{enumerate}[(i)]
    \item Unanimity is satisfied.
    \item For all $\delta \notin D$, $c(\delta) = \infty$. Consequently, we can write $\succsim$ by $$I(x)=\min_{\delta \in D} \bigg\{(1-\delta) \sum_{t=0}^{+\infty} \delta^t  x_t + c(\delta)\bigg\}.$$
\end{enumerate}
\end{proposition}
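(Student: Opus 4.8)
\textbf{Proof proposal for Proposition \ref{propaggre}.}

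The plan is to prove the two implications separately, working throughout with the geometric-distribution encoding of discount factors. For $\delta \in [0,1)$ write $p^\delta := (1-\delta)(1,\delta,\delta^2,\dots) \in \Delta^\sigma$, so that $(1-\delta)\sum_t \delta^t x_t = \langle x, p^\delta\rangle$; the set of experts' distributions is $P_D := \{p^\delta : \delta \in D\}$, which is closed (hence compact) in the weak$^\star$ topology because $D$ is closed and $\delta \mapsto p^\delta$ is a weak$^\star$-homeomorphism onto its image. Unanimity then says exactly: if $\langle x, q\rangle \ge \langle y, q\rangle$ for all $q \in P_D$, then $x \succsim y$.

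For the direction (ii) $\Rightarrow$ (i), suppose $c(\delta) = \infty$ for all $\delta \notin D$. Then the $\min$ defining $I$ is effectively taken over $\delta \in D$ only, i.e. $I(x) = \min_{q \in P_D}\{\langle x,q\rangle + \bar c(q)\}$ where $\bar c(p^\delta) = c(\delta)$. If $\langle x,q\rangle \ge \langle y,q\rangle$ for every $q \in P_D$, then for each such $q$ we have $\langle x,q\rangle + \bar c(q) \ge \langle y,q\rangle + \bar c(q) \ge I(y)$, and taking the minimum over $q$ gives $I(x) \ge I(y)$, hence $x \succsim y$. This direction is routine.

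For (i) $\Rightarrow$ (ii) — the substantive direction — I would argue by contraposition: assume there exists $\delta_0 \notin D$ with $c(\delta_0) < \infty$, and construct $x,y$ violating Unanimity. The strategy is to exploit that $p^{\delta_0}$ is, by a strict-separation / Hahn–Banach argument in $(ba(\mathbb N), \sigma(ba,\ell_\infty))$, separated from the closed convex hull of $P_D$: since $p^{\delta_0} \notin \overline{\mathrm{co}}(P_D)$ (distinct geometric distributions are affinely independent enough, and $p^{\delta_0}$ is a weak$^\star$-exposed point outside this compact convex set), there is $z \in \ell_\infty$ and $\varepsilon>0$ with $\langle z, p^{\delta_0}\rangle \le \langle z,q\rangle - \varepsilon$ for all $q \in P_D$. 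Take $y = 0$ and $x = z$ suitably normalized: by construction all experts (weakly) prefer $x$ to $y$, so Unanimity would force $x \succsim 0$, i.e. $I(x) \ge 0$; but $I(x) \le \langle x, p^{\delta_0}\rangle + c(\delta_0)$, and by scaling $z$ (using that ISU-type homogeneity is available here, or simply replacing $z$ by $\lambda z - \theta$ and using translation invariance of $I$ to push the right-hand side strictly below $0$ while keeping the experts' inequalities) we contradict $I(x)\ge 0$. One must be slightly careful that $c(\delta_0)$ is finite but possibly large: this is why the scaling degree of freedom in $z$ is essential — multiplying $z$ by a large positive constant scales $\langle z,p^{\delta_0}\rangle$ (which we arrange to be negative) without bound while the additive constant $c(\delta_0)$ stays fixed.

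The main obstacle is the separation step: one needs $p^{\delta_0} \notin \overline{\mathrm{co}}(P_D)$, and more than that, one needs the separating functional to genuinely come from $\ell_\infty$ (which it does, since the relevant topology is $\sigma(ba,\ell_\infty)$ and $P_D \subseteq \Delta^\sigma \subseteq ba(\mathbb N)$, with $\overline{\mathrm{co}}(P_D)$ weak$^\star$-compact). The cleanest route is probably to observe that the map $\delta \mapsto p^\delta$ sends $[0,1)$ to the extreme points of $D$ (the closed convex hull $D$ from the uniqueness discussion), that each $p^\delta$ is in fact weak$^\star$-exposed — e.g. by the sequence $x = (1,\delta,\delta^2,\dots)$ or a truncation thereof — and hence $p^{\delta_0}$ cannot lie in $\overline{\mathrm{co}}(P_D)$ when $\delta_0 \notin D$; then standard strict separation of a point from a disjoint compact convex set in a locally convex space yields $z \in \ell_\infty$ with the required strict inequality, and the rest is the scaling/translation bookkeeping above.
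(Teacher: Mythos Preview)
Your overall architecture for (i)$\Rightarrow$(ii) --- contrapositive, separate $p^{\delta_0}$ from the experts' set, then scale to beat the finite cost $c(\delta_0)$ --- is exactly the paper's strategy, and your (ii)$\Rightarrow$(i) argument is fine. The difference is that the paper does not invoke Hahn--Banach at all: it writes down an explicit separating sequence. For the offending $\delta_0\notin D$ it takes $x_0=\alpha\delta_0^2$, $x_1=\alpha(\delta_0^2-2\delta_0)$, $x_t=\alpha(1-\delta_0)^2$ for $t\ge 2$, and a two-line computation gives $\langle x,p^{\delta'}\rangle=\alpha(\delta'-\delta_0)^2$ for every $\delta'\in[0,1)$. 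This single sequence is already a strict exposing functional for $p^{\delta_0}$ on the whole geometric curve, so closedness of $D$ yields $\min_{\delta'\in D}\langle x,p^{\delta'}\rangle=\alpha\,\mathrm{dist}(\delta_0,D)^2>0$; choosing $\alpha$ large enough so that this minimum exceeds $c(\delta_0)$ produces the Unanimity violation directly, with no need for a separate scaling or translation step. Your abstract route and the paper's explicit one are equivalent in content; the paper's version is just self-contained and avoids having to argue that $p^{\delta_0}\notin\overline{\mathrm{co}}(P_D)$.

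One concrete correction: your proposed exposing functional $x=(1,\delta_0,\delta_0^2,\dots)$ does \emph{not} expose $p^{\delta_0}$. A direct computation gives $\langle x,p^{\delta'}\rangle=(1-\delta')/(1-\delta'\delta_0)$, which is strictly decreasing in $\delta'$, so it has no interior extremum at $\delta'=\delta_0$. Thus this particular $x$ cannot serve as your separating $z$, and ``a truncation thereof'' will not fix the monotonicity. You therefore still owe a proof that $p^{\delta_0}\notin\overline{\mathrm{co}}(P_D)$ before you can separate; the cleanest way to discharge that debt is precisely the paper's quadratic sequence above, which simultaneously proves the exposedness and provides the scaling parameter $\alpha$ you need.
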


Our aggregation rule implies that the decision maker should exclusively take into account the discount factors endorsed by experts. Each $\delta \in D$ is to be understood as a discount factor deemed plausible by some expert. The cost function $c$ can now be interpreted as a measure of the decision maker's confidence in these experts. For instance, when $c(\delta) = 0$, it signifies that the decision maker has complete confidence in the experts advocating for $\delta$. Conversely, if $c(\delta) = \infty$, it implies that the decision maker does not trust the experts and consequently ignores their advice.

The maxmin rule introduced by \cite{CE2018} is defined as follows: $$I(x)=\min_{\delta \in E} \bigg\{(1-\delta) \sum_{t=0}^{+\infty} \delta^t  x_t\bigg\},$$ where $E$ is a closed subset of $D$. This approach underscores two ideas: (i) that any expert could potentially be correct, and (ii) that it is justified to be conservative in assessing public projects. It is clear that the maxmin rule is a specific case of our ``variational" rule. Specifically, the maxmin rule can be derived if the decision maker's preference satisfies ISU.

\section{Generalization} \label{section4}




\subsection{Invariance with respect to linearly transforming improving sequences}

Notice that if we define the \textit{delayed} transformation $T_{de}:l_{\infty} \to l_{\infty}$ as $T_{de}(x)=(0,x)$ for every $x \in l_{\infty}$, then IDIS can be stated as follows: for all $x,d\in l_{\infty}$, $x+d \succsim x$ implies that $x+T_{de} (d) \succsim x$. Clearly, $T_{de}$ is a continuous positive linear function. In this subsection, we investigate the representation of $\succsim$ when the delayed transformation in IDIS is substituted with any continuous positive linear transformation.

Let us consider a continuous positive linear function $T: l_{\infty} \to l_{\infty}$. The next axiom, called \textit{Invariance with respect to applying $T$ to improving sequences}, asserts that $T$ maps improving sequences from $x$ to other improving sequences from $x$. The principle of I$T$IS is similar to IDIS, yet it accommodates a broader range of operators $T$.

\begin{linva}
For every $x,d\in l_{\infty}$, if $x+d \succsim x$, then $x+T(d) \succsim x$.
\end{linva}

Before presenting our findings, we must first introduce some mathematical definitions and notations.

\begin{definition} 
The \textit{adjoint} transformation of $T$ is a continuous positive linear function $T^*: ba(\mathbb{N}) \to ba(\mathbb{N})$ that satisfies the following conditions for every $x \in l_{\infty}$ and $\mu \in ba(\mathbb{N})$: 
\begin{equation*} 
\langle x,T^*(\mu) \rangle = \langle T(x),\mu \rangle.
\end{equation*} 
\end{definition} 

The next result establishes that, for any continuous positive linear function $T: l_{\infty} \to l_{\infty}$, its adjoint transformation not only exists but is also uniquely defined. 

\begin{lemma}\label{lemdefadj}
For any continuous positive linear function $T: l_{\infty} \to l_\infty$, there exists a unique adjoint transformation of $T$.
\end{lemma}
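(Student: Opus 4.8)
\textbf{Proof proposal for Lemma \ref{lemdefadj}.}

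The plan is to construct $T^*$ explicitly and verify it has the required properties, then argue uniqueness from the non-degeneracy of the duality pairing. First I would define, for each $\mu \in ba(\mathbb{N})$, the functional $\mu \circ T : l_\infty \to \mathbb{R}$, i.e. the map $x \mapsto \langle T(x), \mu\rangle$. Since $T$ is linear and continuous and $\mu$ is a continuous linear functional on $l_\infty$, the composition $\mu \circ T$ is again a continuous linear functional on $l_\infty$; because $ba(\mathbb{N})$ is by definition the norm dual of $(l_\infty, \|\cdot\|_\infty)$, there is a unique element of $ba(\mathbb{N})$ representing it, which I call $T^*(\mu)$. By construction this is the only candidate satisfying $\langle x, T^*(\mu)\rangle = \langle T(x),\mu\rangle$ for all $x \in l_\infty$, so once existence of a positive continuous linear $T^*$ with this property is shown, uniqueness is immediate (if $T_1^*$ and $T_2^*$ both work, then $\langle x, T_1^*(\mu) - T_2^*(\mu)\rangle = 0$ for all $x$, hence $T_1^*(\mu) = T_2^*(\mu)$ for all $\mu$).

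Next I would check the three structural properties of the map $\mu \mapsto T^*(\mu)$. Linearity: for $\mu, \nu \in ba(\mathbb{N})$ and scalars $\alpha,\beta$, both $T^*(\alpha\mu + \beta\nu)$ and $\alpha T^*(\mu) + \beta T^*(\nu)$ represent the functional $x \mapsto \langle T(x), \alpha\mu + \beta\nu\rangle$, so they coincide by uniqueness of the representing measure. Continuity: I would bound the operator norm, $\|T^*(\mu)\| = \sup_{\|x\|_\infty \le 1} |\langle T(x),\mu\rangle| \le \sup_{\|x\|_\infty \le 1} \|T(x)\|_\infty \,\|\mu\| \le \|T\|\,\|\mu\|$, giving $\|T^*\| \le \|T\| < \infty$ (one could instead note that $T^*$ is the restriction of the usual Banach-space adjoint). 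Positivity: if $\mu \ge 0$ (as a finitely additive measure, equivalently $\langle y,\mu\rangle \ge 0$ for all $y \ge 0$ in $l_\infty$), then for any $x \ge 0$ we have $T(x) \ge 0$ because $T$ is positive, hence $\langle x, T^*(\mu)\rangle = \langle T(x),\mu\rangle \ge 0$; since this holds for all $x \ge 0$, $T^*(\mu) \ge 0$.

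I do not expect a serious obstacle here — the lemma is essentially the standard fact that a bounded operator on a Banach space has a bounded adjoint on the dual, specialized to $X = l_\infty$, $X^* = ba(\mathbb{N})$, together with the elementary observation that positivity passes to the adjoint. The only point requiring a word of care is the identification of the space of functionals: one must invoke that $ba(\mathbb{N})$ \emph{is} the norm dual $(l_\infty)^*$ (stated in the excerpt) so that "continuous linear functional on $l_\infty$" and "element of $ba(\mathbb{N})$" are interchangeable; with that in hand, existence and uniqueness of $T^*(\mu)$ for each $\mu$ is automatic and the rest is routine verification. If desired, one can also record that the defining identity $\langle x, T^*(\mu)\rangle = \langle T(x),\mu\rangle$ already pins $T^*$ down uniquely \emph{among all functions} $ba(\mathbb{N}) \to ba(\mathbb{N})$, not merely among the continuous positive linear ones, since it prescribes the value of every coordinate pairing $\langle x, \cdot\rangle$ of $T^*(\mu)$.
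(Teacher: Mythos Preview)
Your construction and the verifications of linearity, positivity, and uniqueness are essentially identical to the paper's. The one point where you diverge is continuity: the paper endows $ba(\mathbb{N})$ with the weak$^\star$ topology $\sigma(ba(\mathbb{N}),l_\infty)$ (stated in Subsection~\ref{subsection2.1}), so ``continuous'' in the definition of the adjoint means weak$^\star$-to-weak$^\star$ continuous, and this is the property actually used downstream (e.g.\ in Lemma~\ref{remark1}). Your operator-norm bound $\|T^*\|\le\|T\|$ establishes norm-to-norm continuity, which is a different statement. The paper's argument is the direct one: if a net $\mu_d\to\mu$ in $\sigma(ba(\mathbb{N}),l_\infty)$, then for every $x\in l_\infty$ one has $\langle x,T^*(\mu_d)\rangle=\langle T(x),\mu_d\rangle\to\langle T(x),\mu\rangle=\langle x,T^*(\mu)\rangle$, hence $T^*(\mu_d)\to T^*(\mu)$ weak$^\star$. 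This is a one-line fix and is in fact implicit in your remark that $T^*$ is the usual Banach-space adjoint (for which weak$^\star$ continuity is standard), but as written your continuity paragraph proves the wrong thing.
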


Assume that we are using a discount structure represented by $\mu \in ba(\mathbb{N})$ to evaluate the value of a utility stream $x$. This evaluation yields a total utility $\langle T(x), \mu \rangle$. There are two methods for distorting the total utility: one method involves distorting the utility stream $x$, while the other distorts the discount structure $\mu$. In terms of the total utility generated, these two distortions are equivalent. 


For any continuous positive linear function $T^*: ba(\mathbb{N}) \to ba(\mathbb{N})$, we define an \textit{eigenvector} of $T^*$ as a vector $p\in ba(\mathbb{N})$ that satisfies $T^*(p) = \lambda p$ for some $\lambda \in \mathbb{R}$. We define $\Delta (T^*)$ as the set of discount structures that are eigenvectors of $T^*$, i.e., $\Delta (T^*)=\{p \in \Delta: \text{$p$ is an eigenvector of $T^*$}\}$. We remark that this set is non-empty and closed, as established in Lemma \ref{remark1}.

\begin{definition}
For a given continuous positive linear function $T^*: ba(\mathbb{N}) \to ba(\mathbb{N})$, a preference relation $\succsim$ on $l_\infty$ is a \textit{$T^*$-variational time preference} if there exists a grounded and lower semicontinuous function $c: \Delta (T^*) \rightarrow [0,\infty]$ such that, for every $x\in l_{\infty}$, the constant equivalent of $x$ is given by
\begin{equation*} 
I(x)=\min_{p \in \Delta (T^*)}\{\langle x,p \rangle + c(p)\}.
\end{equation*}
\end{definition}

In the $T^*$-variational representation, the set of possible discount structures $\Delta (T^*)$ does not include all discount structures. Rather, it imposes a requirement that each discount structures $p \in \Delta (T^*)$ must exhibit an ``invariance" to the linear transformation $T^*: ba(\mathbb{N})\to ba(\mathbb{N})$. This invariance implies that when $T^*$ is applied to $p$, it does not change the direction of the discount structure; instead, it only scales it by a positive scalar factor $\lambda$. In other words, applying $T^*$ to $p$ does not impact the approach to discounting time periods because evaluating utility sequences using either $p$ or $T^* (p)$ results in the same optimal choices.

\begin{theorem} \label{theorem1}
If a binary relation $\succsim$ satisfies Axioms \ref{axiom1}-\ref{axiom5} and I$T$IS, then it admits a $T^*$-variational time representation. Conversely, if $\succsim$ is a $T^*$-variational time preference, then it satisfies Axioms \ref{axiom1}-\ref{axiom5} and I$T$IS provided that $\langle 1, T^*(p) \rangle \leq 1$ for every $p \in \Delta (T^*)$. 
\end{theorem}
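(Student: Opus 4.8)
The plan is to prove the two implications separately, in each case reducing to properties of the constant equivalent $I$ via Proposition~\ref{pro1}, and, for the first implication, locating inside every superdifferential $\partial I(x)$ an eigenvector of $T^{\ast}$ by the fixed-point strategy already announced after Theorem~\ref{corollary1}. So assume Axioms~\ref{axiom1}--\ref{axiom5} and I$T$IS. By Proposition~\ref{pro1}, $\succsim$ is represented by a weakly-increasing, concave, $1$-lipschitz, normalized and translation invariant constant equivalent $I:l_{\infty}\to\mathbb{R}$ (so $I(0)=0$). Being finite, concave and continuous on $l_{\infty}$, $I$ is superdifferentiable everywhere, and $\partial I(x)$ is convex and weak$^{\star}$-compact, since monotonicity, normalization and translation invariance force $\partial I(x)\subseteq\Delta$ and $\Delta$ is weak$^{\star}$-compact. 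For each $x$ put $C_{x}=\{d\in l_{\infty}:I(x+d)\geq I(x)\}$. I would first establish the description
\[
\partial I(x)=\{q\in\Delta:\ \langle d,q\rangle\geq 0\ \text{ for every }d\in C_{x}\};
\]
``$\subseteq$'' is immediate from the superdifferential inequality, and ``$\supseteq$'' follows by contradiction: if $q\in\Delta$ is nonnegative on $C_{x}$ but $I(x+d_{0})>I(x)+\langle d_{0},q\rangle$, then $\beta:=I(x+d_{0})-I(x)$ satisfies $d_{0}-\beta\in C_{x}$ (translation invariance), forcing $\langle d_{0},q\rangle\geq\beta$, a contradiction.

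The role of I$T$IS is precisely that $T(C_{x})\subseteq C_{x}$ for every $x$. Hence, if $p\in\partial I(x)$ and $d\in C_{x}$, then $\langle d,T^{\ast}(p)\rangle=\langle T(d),p\rangle\geq 0$, because $T(d)\in C_{x}$ and $p$ is nonnegative on $C_{x}$; since also $T^{\ast}(p)\geq 0$ ($T^{\ast}$ is positive), this shows that whenever $T^{\ast}(p)\neq 0$ the normalized vector $s(p):=T^{\ast}(p)/\langle 1,T^{\ast}(p)\rangle$ again lies in $\partial I(x)$. Now fix $x$. If $T^{\ast}(p)=0$ for some $p\in\partial I(x)$, then $p$ is an eigenvector of $T^{\ast}$ with eigenvalue $0$, so $p\in\Delta(T^{\ast})\cap\partial I(x)$. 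Otherwise $s$ is a self-map of $\partial I(x)$, weak$^{\star}$-continuous because the adjoint $T^{\ast}$ of a bounded operator is weak$^{\star}$-to-weak$^{\star}$ continuous and $p\mapsto\langle 1,T^{\ast}(p)\rangle=\langle T(1),p\rangle$ is weak$^{\star}$-continuous and bounded away from $0$ on the weak$^{\star}$-compact set $\partial I(x)$; the Schauder--Tychonoff fixed-point theorem then yields $p^{\ast}\in\partial I(x)$ with $T^{\ast}(p^{\ast})=\langle 1,T^{\ast}(p^{\ast})\rangle\,p^{\ast}$, i.e. $p^{\ast}\in\Delta(T^{\ast})\cap\partial I(x)$. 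Either way $\partial I(x)\cap\Delta(T^{\ast})\neq\emptyset$ for every $x$. Finally define $c:\Delta(T^{\ast})\to[0,\infty]$ by $c(p)=\sup_{y\in l_{\infty}}\{I(y)-\langle y,p\rangle\}$: it is weak$^{\star}$-lower semicontinuous (a supremum of continuous affine functions), nonnegative (take $y=0$), and grounded, since $c(p)=0$ for any $p\in\partial I(0)\cap\Delta(T^{\ast})$. Standard concave duality gives $\langle x,p\rangle+c(p)=I(x)$ for $p\in\partial I(x)\cap\Delta(T^{\ast})$ and $\langle x,q\rangle+c(q)\geq I(x)$ for all $q\in\Delta(T^{\ast})\subseteq\Delta$, whence $I(x)=\min_{p\in\Delta(T^{\ast})}\{\langle x,p\rangle+c(p)\}$.

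For the converse, suppose $\succsim$ is a $T^{\ast}$-variational time preference with $\langle 1,T^{\ast}(p)\rangle\leq 1$ for all $p\in\Delta(T^{\ast})$. Since $\Delta(T^{\ast})$ is weak$^{\star}$-compact (Lemma~\ref{remark1}) and $c$ is grounded and lower semicontinuous, the defining minimum is attained, and the usual verifications show $I$ is normalized ($I(1)=1+\inf c=1$), translation invariant, weakly-increasing (each $p\in\Delta(T^{\ast})\subseteq\Delta$ is nonnegative), concave and $1$-lipschitz (an infimum of affine $1$-lipschitz maps); by the converse part of Proposition~\ref{pro1}, Axioms~\ref{axiom1}--\ref{axiom5} hold. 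For I$T$IS I would argue by contraposition: assume $I(x+T(d))<I(x)$ and let $p^{\ast}\in\Delta(T^{\ast})$ attain $I(x+T(d))=\langle x+T(d),p^{\ast}\rangle+c(p^{\ast})$, with $T^{\ast}(p^{\ast})=\lambda p^{\ast}$, $\lambda=\langle 1,T^{\ast}(p^{\ast})\rangle\in[0,1]$. Then $\langle T(d),p^{\ast}\rangle=\lambda\langle d,p^{\ast}\rangle$, and setting $\varepsilon:=\langle x,p^{\ast}\rangle+c(p^{\ast})-I(x)\geq 0$ (since $I(x)\leq\langle x,p^{\ast}\rangle+c(p^{\ast})$), the hypothesis reads $\varepsilon+\lambda\langle d,p^{\ast}\rangle<0$. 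This forces $\langle d,p^{\ast}\rangle<0$ and, using $\lambda\leq 1$, $\varepsilon<\lambda|\langle d,p^{\ast}\rangle|\leq|\langle d,p^{\ast}\rangle|=-\langle d,p^{\ast}\rangle$, hence $\varepsilon+\langle d,p^{\ast}\rangle<0$. Therefore $I(x+d)\leq\langle x+d,p^{\ast}\rangle+c(p^{\ast})=I(x)+\varepsilon+\langle d,p^{\ast}\rangle<I(x)$, i.e. $x+d\not\succsim x$, which is exactly I$T$IS.

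I expect the main obstacle to be the first implication: establishing the description of $\partial I(x)$ as the discount structures nonnegative on $C_{x}$ — which is what turns I$T$IS into the exact statement $s(\partial I(x))\subseteq\partial I(x)$ — and then handling the degenerate case $T^{\ast}(p)=0$ separately so that Schauder--Tychonoff really applies to a continuous self-map. The measure-theoretic ingredients (positivity of $T^{\ast}$, weak$^{\star}$-continuity of adjoints, weak$^{\star}$-compactness of $\Delta$ and $\Delta(T^{\ast})$) are routine, and the converse is a short computation once the bound $\lambda\leq 1$ is exploited.
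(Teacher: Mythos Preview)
Your proposal is correct and follows essentially the same approach as the paper: the fixed-point strategy on $\partial I(x)$ via Schauder--Tychonoff, the separate treatment of the degenerate case $T^{\ast}(p)=0$, the conjugate definition of $c$, and the convex-combination computation for the converse are all exactly what the paper does. The one presentational difference is your explicit characterization $\partial I(x)=\{q\in\Delta:\langle d,q\rangle\geq 0\text{ for all }d\in C_{x}\}$, which packages the key invariance step $s(\partial I(x))\subseteq\partial I(x)$ more transparently than the paper's direct unwinding (its Lemma~\ref{centrallemma} and the analogous argument referenced in the proof of Theorem~\ref{theorem1}); but this is a repackaging rather than a different route, and your contrapositive argument for the converse is equivalent to the paper's direct convex-combination estimate.
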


The proof of this theorem shares similarities with that of Theorem \ref{corollary1}. We will only provide an outline of the proof for the first part. We first remark that the set $\Delta (T^*)$ is non-empty (Lemma \ref{remark1}). Using the generalized Moreau duality (see Appendix \ref{moreau}), our goal is to prove that, for every $x \in l_{\infty}$, the $\Delta (T^*)$-superdifferential at $x$ is non-empty. Clearly, the superdifferential of the functional $I$ at point $x$ is non-empty subset of $\Delta$. By definition, $p \in \Delta (T^*)$ if and only if either $T^* (p) = 0$ or $p$ is a fixed point of the \textit{normalized} transformation defined by $s(p')=\frac{T^*(p')}{\Vert T^* (p') \Vert}$ (whenever this transformation exists). We will now consider two cases.

\textbf{Case 1:} If there exists $p \in \partial I(x)$ such that $T^*(p) = 0$, it follows that $p \in \Delta$ since $\partial I(x) \subseteq \Delta$. By definition, we get $p \in \Delta (T^*)$. Consequently, $p$ is in the $\Delta (T^*)$-superdifferential at $x$.

\textbf{Case 2:} If $T^*(p) \neq 0$ for all $p\in \partial I(x)$, then the function $s$ is well defined on $\partial I(x)$. We can show that, for any $p\in \partial I(x)$, the transformation $s$ maps $p$ back into $\partial I(x)$. Thus, $s$ is a continuous mapping from the convex compact set $\partial I(x)$ to itself. By the Schauder–Tychonoff fixed-point theorem, we conclude that the function $s$ has a fixed point, denoted as $p$, in $\partial I(x)$. This finding further implies that $p \in \Delta (T^*)$. Thus, for all $x \in l_{\infty}$, the $\Delta (T^*)$-superdifferential at $x$ is non-empty, thereby completing the proof of the first part.

\begin{proposition}\label{promaxmin}
If $\succsim$ is a binary relation on $l_{\infty}$ satisfying Axioms \ref{axiom1}-\ref{axiom5}, Axiom \ref{axiom6}, and I$T$IS, then there exists a non-empty closed set $D \subseteq \Delta (T^*)$ such that the unique constant equivalent $I:l_{\infty}\rightarrow \mathbb{R}$ representing $\succsim$ can be expressed as
\begin{equation*} 
I(x)=\min_{p \in D}\langle x,p \rangle.
\end{equation*} 

Conversely, if $I(x)=\min_{p \in D} \langle x,p \rangle$ for some non-empty closed set $D \subseteq \Delta (T^*)$, then $\succsim$ represented by $I$ satisfies Axioms \ref{axiom1}-\ref{axiom6} and I$T$IS provided that $\langle 1, T^*(p) \rangle \leq 1$ for every $p \in \Delta (T^*)$. 
\end{proposition}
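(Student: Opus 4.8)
The plan is to derive Proposition~\ref{promaxmin} by combining Theorem~\ref{theorem1} with the effect of adding Axiom~\ref{axiom6} (ISU), rather than re-running the whole Schauder--Tychonoff machinery. First I would address the forward direction. By Theorem~\ref{theorem1}, Axioms~\ref{axiom1}--\ref{axiom5} together with I$T$IS already give a $T^*$-variational representation $I(x)=\min_{p\in\Delta(T^*)}\{\langle x,p\rangle + c(p)\}$ with $c$ grounded and lower semicontinuous. The role of ISU is to force $c$ to take only the values $0$ and $\infty$: by Proposition~\ref{pro1}, adding ISU makes $I$ positively homogeneous (not merely $1$-Lipschitz and translation invariant), so $I(\alpha x)=\alpha I(x)$ for all $\alpha\ge 0$. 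Applying this to the representation and using that $\min_{p}\{\langle \alpha x,p\rangle + c(p)\}=\alpha\min_p\{\langle x,p\rangle + c(p)/\alpha\}$ must hold for every $\alpha>0$ and every $x$, a standard argument (letting $\alpha\to\infty$ and $\alpha\to 0$, or invoking that the support function of a set determines it) shows that the lower semicontinuous envelope of $c$ can only be the indicator (in the convex-analytic sense, $0$/$\infty$-valued) of a closed set. Concretely, set $D:=\{p\in\Delta(T^*): c(p)=0\}$ after passing to the lower semicontinuous envelope; groundedness of $c$ gives $D\neq\emptyset$ once one checks the infimum is attained, and lower semicontinuity plus closedness of $\Delta(T^*)$ (Lemma~\ref{remark1}) gives that $D$ is closed. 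Then $I(x)=\min_{p\in D}\langle x,p\rangle$, which is the claimed representation.

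The one subtlety I would be careful about is that the minimum defining $D$ is genuinely attained and that discarding the points with $c(p)>0$ does not change $I$; this is where compactness of $\Delta(T^*)$ in the weak$^\star$ topology (it is a weak$^\star$-closed subset of the weak$^\star$-compact set $\Delta$) and lower semicontinuity of $c$ are used, exactly as in the corresponding step for Proposition~\ref{corollary2}. I would also note that uniqueness of $D$ follows because $D$ is recovered from $I$ as the superdifferential set $\bigcup_{x}\partial I(x)\cap\Delta(T^*)$, or equivalently because a closed convex-like set is determined by its support function; since the paper only claims existence of such a $D$ (``there exists a non-empty closed set''), I could even sidestep uniqueness, but it is worth recording.

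For the converse, suppose $I(x)=\min_{p\in D}\langle x,p\rangle$ for some non-empty closed $D\subseteq\Delta(T^*)$, under the hypothesis $\langle 1,T^*(p)\rangle\le 1$ for all $p\in\Delta(T^*)$. I would verify the axioms directly. Axiom~\ref{axiom1} (weak order) is automatic since $I$ represents $\succsim$; Axiom~\ref{axiom2} (monotonicity) holds because each $p\in D\subseteq\Delta$ is a positive normalized measure, so $x\ge y$ gives $\langle x,p\rangle\ge\langle y,p\rangle$ and $I(1)=1>0=I(0)$; Axiom~\ref{axiom3} (continuity) and Axiom~\ref{axiom4} (ICRP, i.e. translation invariance of $I$, using $\langle \theta,p\rangle=\theta$) follow from $I$ being a minimum of continuous affine functionals; Axiom~\ref{axiom5} (convexity) holds because a min of linear functionals is concave; Axiom~\ref{axiom6} (ISU) holds because each $\langle\cdot,p\rangle$ is linear, hence positively homogeneous, and the min of positively homogeneous functions is positively homogeneous. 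The substantive point is I$T$IS. Given $x+d\succsim x$, i.e. $I(x+d)\ge I(x)$, I want $I(x+T(d))\ge I(x)$. For the minimizer $p\in D$ of $p\mapsto\langle x+T(d),p\rangle$ we have $\langle x+T(d),p\rangle = \langle x,p\rangle + \langle d, T^*(p)\rangle$. Since $p\in\Delta(T^*)$, $T^*(p)=\lambda p$ for some $\lambda\ge 0$ (positivity of $T^*$ forces $\lambda\ge 0$), and $\lambda = \lambda\langle 1,p\rangle = \langle 1,T^*(p)\rangle\le 1$ by hypothesis, while $\lambda\ge 0$. Then $\langle d,T^*(p)\rangle = \lambda\langle d,p\rangle$, so $\langle x+T(d),p\rangle = \langle x,p\rangle + \lambda\langle d,p\rangle = (1-\lambda)\langle x,p\rangle + \lambda\langle x+d,p\rangle \ge (1-\lambda) I(x) + \lambda I(x+d) \ge I(x)$, using $0\le\lambda\le 1$ and $I(x+d)\ge I(x)$. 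Hence $I(x+T(d)) = \min_{q\in D}\langle x+T(d),q\rangle$... — wait, $p$ is the minimizer, so $I(x+T(d)) = \langle x+T(d),p\rangle \ge I(x)$, giving $x+T(d)\succsim x$. The main obstacle, such as it is, is bookkeeping in the forward direction — making precise the passage from ``$c$ grounded, l.s.c., and $I$ positively homogeneous'' to ``$c$ is $\{0,\infty\}$-valued on a closed set'' — but this is entirely parallel to the already-established Proposition~\ref{corollary2}, so I would present it by reduction to that argument rather than redoing it.
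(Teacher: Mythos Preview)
Your approach is correct and mirrors the paper's: for the forward direction you invoke Theorem~\ref{theorem1} and then use ISU to force $c$ to be $\{0,\infty\}$-valued (the paper does exactly this, via the positive homogeneity of $g_p(x)=I(x)-\langle x,p\rangle$ and $c(p)=\sup_x g_p(x)$), and your converse computation for I$T$IS is precisely the one the paper spells out in the converse of Theorem~\ref{theorem1}. Two minor clean-ups: positive homogeneity of $I$ under ISU is not part of Proposition~\ref{pro1} but is derived separately (as in the proof of Proposition~\ref{corollary2}), and there is no need to ``pass to the lower semicontinuous envelope'' since $c(p)=\sup_{x}\{I(x)-\langle x,p\rangle\}$ is already lower semicontinuous by construction.
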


The significance of the assumption ``$\langle 1, T^*(p) \rangle \leq 1$ for every $p \in \Delta(T^*)$" to establish the inverse implication in the above results is illustrated in the next example. 


\begin{example}\label{example2}
Consider the transformation $T (d) = 2d$ for every $d \in l_{\infty}$. It is evident that $T^*(\mu) = 2 \mu$ for every $\mu \in ba(\mathbb{N})$. Thus, the sets of discount structures defined in Theorem \ref{theorem1} and Proposition \ref{promaxmin} are $\Delta$.

Now, let $\succsim$ be the binary relation on $l_{\infty}$ represented by $I(x) = \min_{p\in \Delta} \langle x,p\rangle = \inf_{t\in \mathbb{N}} x_t$. Take $x= (-1,0,0,\dots)$ and $d = (1,-1,0,0,\dots)$. We observe that $I(x+d) = I(x) = -1$, yet $-1= I(x) > I(x+2d) = -2$. This implies that $\succsim$ does not satisfy I$T$IS.
\end{example}

\subsection{A family of transformations}

In this subsection, we consider a preference relation that exhibits invariance with respect to a family of transformations. We show that the set of discount structures in the maxmin criterion can be narrowed down even further. Specifically, it can be constrained to the intersection of the sets of eigenvectors associated with the adjoints of these transformations. Unfortunately, we do not have an analogous result for the case of variational representation. Let $(T_{i})_{i\in K}$ be a collection of transformations.

\begin{theorem}\label{theorem2.3}
If $\succsim$ satisfies Axioms \ref{axiom1}-\ref{axiom5}, Axiom \ref{axiom6}, and I$T_i$IS for every $i\in K$, then there exists a non-empty closed set $D \subseteq F$ such that the unique constant equivalent $I:l_{\infty}\rightarrow \mathbb{R}$ representing $\succsim$ can be expressed as
\begin{equation*} 
I(x)=\min_{p \in D}\langle x,p \rangle,
\end{equation*} 
where $F=\bigcap_{i\in K} \{p \in \Delta: \text{$p$ is an eigenvector of $T_i^{*}$} \}$.

Conversely, if $I(x)=\min_{p \in D} \langle x,p \rangle$ for some non-empty closed set $D \subseteq F$, then $\succsim$ represented by $I$ satisfies Axioms \ref{axiom1}-\ref{axiom5}, Axiom \ref{axiom6}, and I$T_i$IS for every $i\in K$ provided that $\langle 1, T_i^*(p) \rangle \leq 1$ for every $p \in F$ and $i\in K$. 
\end{theorem}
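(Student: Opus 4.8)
The plan is to reduce Theorem \ref{theorem2.3} to Proposition \ref{promaxmin} by exploiting the fact that I$T_i$IS holding for every $i \in K$ simultaneously is a strictly stronger hypothesis than I$T_i$IS for a single $i$. For the forward direction, first apply Proposition \ref{promaxmin} separately to each $i \in K$: since $\succsim$ satisfies Axioms \ref{axiom1}–\ref{axiom5}, Axiom \ref{axiom6}, and in particular I$T_i$IS for the fixed index $i$, there is a non-empty closed set $D_i \subseteq \Delta(T_i^*)$ with $I(x) = \min_{p \in D_i} \langle x,p\rangle$. The point is that the \emph{same} constant equivalent $I$ (the one furnished by Proposition \ref{pro1}, which is unique under Axioms \ref{axiom1}–\ref{axiom5}) is represented by $\min_{p\in D_i}\langle x,p\rangle$ for \emph{every} $i$. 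I would then take $D := \partial I(\mathbf{0})$, the superdifferential of the concave, $1$-lipschitz, translation-invariant functional $I$ at the origin (equivalently $\{p\in\Delta : \langle x,p\rangle \geq I(x)\text{ for all }x\}$), which is non-empty, convex, weak$^\star$-compact, and is precisely the set over which the support-function minimum is attained; by the biconjugation/Moreau-type duality already used in the excerpt, $I(x) = \min_{p\in D}\langle x,p\rangle$. The key claim is then $D \subseteq F = \bigcap_{i\in K}\Delta(T_i^*)$: for each fixed $i$, the argument inside the proof of Proposition \ref{promaxmin} / Theorem \ref{theorem1} (Case 1 and Case 2, using the Schauder–Tychonoff fixed point of the normalized map $s_i(p') = T_i^*(p')/\|T_i^*(p')\|$ on the invariant compact convex set $\partial I(x)$, together with the fact that I$T_i$IS forces $s_i$ to map $\partial I(x)$ into itself) shows not merely that $\partial I(x)$ \emph{meets} $\Delta(T_i^*)$ but — and this is the part I need to be careful about — I would actually argue via the minimality/extremality of the representing set that $D \subseteq \Delta(T_i^*)$.

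The cleanest way to get $D\subseteq\Delta(T_i^*)$ is to revisit why I$T_i$IS implies the superdifferential is $T_i^*$-invariant in the appropriate sense. From I$T_i$IS at $x=\mathbf{0}$: $d \succsim \mathbf{0} \Rightarrow T_i(d)\succsim\mathbf{0}$, i.e. $I(d)\geq 0 \Rightarrow I(T_i(d))\geq 0$. Homogeneity (Axiom \ref{axiom6} gives positive homogeneity of $I$) plus this implication says: if $\langle d,p\rangle \geq 0$ for all $p\in D$ then $\langle T_i(d),p\rangle = \langle d, T_i^*(p)\rangle \geq 0$ for all $p\in D$. Hence $T_i^*(D)$ lies in the closed convex cone generated by $D$ (the bipolar of $D$ as a cone), call it $\mathrm{cone}(D)$. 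Since $D$ is also the set of \emph{normalized} such functionals, one then shows each extreme ray direction of $\mathrm{cone}(D)$ is mapped by $T_i^*$ into $\mathrm{cone}(D)$, and a fixed-point / eigenvector argument (Schauder–Tychonoff on the normalized orbit, exactly as in Theorem \ref{theorem1}) upgrades this to: every $p$ in the minimal representing set satisfies $T_i^*(p)=\lambda_i p$. Iterating over $i$ gives $D\subseteq F$, and replacing $D$ by $\overline{D}$ (weak$^\star$ closure — automatic since $D=\partial I(\mathbf 0)$ is already closed) finishes the forward direction. The non-emptiness and closedness of $F$ is Lemma \ref{remark1} applied to each $T_i^*$ and intersected.

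For the converse, suppose $I(x)=\min_{p\in D}\langle x,p\rangle$ for a non-empty closed $D\subseteq F$, with $\langle 1, T_i^*(p)\rangle \leq 1$ for all $p\in F$, $i\in K$. One checks Axioms \ref{axiom1}–\ref{axiom5} and \ref{axiom6} by the standard support-function computations: $I$ is a minimum of linear functionals over discount structures, hence concave, weakly increasing ($D\subseteq\Delta$ so all $p\geq 0$), positively homogeneous (giving ISU), translation invariant and normalized because $\langle 1,p\rangle = 1$ for $p\in\Delta$, and continuity/convexity follow as in Proposition \ref{pro1}. For I$T_i$IS: fix $x,d$ with $x+d\succsim x$, i.e. $I(x+d)\geq I(x)$; pick $p^\star\in D$ attaining $I(x+T_i(d)) = \langle x+T_i(d),p^\star\rangle = \langle x,p^\star\rangle + \langle d, T_i^*(p^\star)\rangle$. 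Since $p^\star\in F$, $T_i^*(p^\star)=\lambda_i p^\star$ with $\lambda_i = \langle 1, T_i^*(p^\star)\rangle \in [0,1]$ (it equals $\langle 1,\lambda_i p^\star\rangle = \lambda_i\langle 1,p^\star\rangle = \lambda_i$, and nonnegativity comes from $T_i^*$ being positive). Then $\langle d,T_i^*(p^\star)\rangle = \lambda_i\langle d,p^\star\rangle$. Using $\langle x,p^\star\rangle\geq I(x)$ and combining with $\langle x+d,p^\star\rangle \geq I(x+d)\geq I(x)$ — which gives $\langle d, p^\star\rangle \geq I(x) - \langle x,p^\star\rangle$, but more usefully, when $\langle d,p^\star\rangle \geq 0$ one gets $\langle x,p^\star\rangle + \lambda_i\langle d,p^\star\rangle \geq \langle x,p^\star\rangle \geq I(x)$, and when $\langle d,p^\star\rangle < 0$, since $\lambda_i\leq 1$ we have $\langle x,p^\star\rangle + \lambda_i\langle d,p^\star\rangle \geq \langle x,p^\star\rangle + \langle d,p^\star\rangle = \langle x+d,p^\star\rangle \geq I(x+d)\geq I(x)$. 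Either way $I(x+T_i(d)) = \langle x,p^\star\rangle + \lambda_i\langle d,p^\star\rangle \geq I(x)$, so $x+T_i(d)\succsim x$.

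The main obstacle is the forward direction's claim $D\subseteq F$ rather than merely $D\cap F\neq\emptyset$: the single-operator Proposition \ref{promaxmin} only delivers a representing set \emph{contained} in $\Delta(T_i^*)$, but a priori different indices $i$ could require different representing sets $D_i$, and one must argue that the canonical minimal representing set $\partial I(\mathbf 0)$ works for all $i$ at once. The resolution hinges on uniqueness of the constant equivalent $I$ (Proposition \ref{pro1}) together with the fact that the minimal weak$^\star$-closed convex representing set of a translation-invariant positively homogeneous concave $I$ is unique and equals $\partial I(\mathbf 0)$; then whatever structural constraint ($T_i^*$-eigenvector membership) the proof of Proposition \ref{promaxmin} imposes on \emph{some} valid representing set must already hold on this canonical one. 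I would make this rigorous by re-running the Schauder–Tychonoff argument of Theorem \ref{theorem1} directly on $\partial I(\mathbf 0)$ for each $i$ — which is legitimate because I$T_i$IS holds — and invoking that the fixed-point set is nonempty while the I$T_i$IS-forced invariance $s_i(\partial I(\mathbf 0))\subseteq\partial I(\mathbf 0)$ combined with minimality pins down every extreme point, and hence (by Krein–Milman and closedness) every point, of $\partial I(\mathbf 0)$ as a $T_i^*$-eigenvector.
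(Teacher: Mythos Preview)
Your converse direction is correct and matches the paper's argument (the case split on the sign of $\langle d,p^\star\rangle$ is equivalent to the paper's convex-combination computation $(1-\lambda_i)[\langle x,p^\star\rangle+c]+\lambda_i[\langle x+d,p^\star\rangle+c]$).

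The forward direction, however, has a genuine gap. You set $D:=\partial I(\mathbf 0)$, which is \emph{convex}, and then claim $D\subseteq F=\bigcap_{i\in K}\Delta(T_i^*)$. But $\Delta(T_i^*)$ is not a convex set in general: if $p_1,p_2\in\Delta(T_i^*)$ have distinct eigenvalues $\lambda_1\neq\lambda_2$, then $T_i^*\bigl(\tfrac12(p_1+p_2)\bigr)=\tfrac12(\lambda_1 p_1+\lambda_2 p_2)$ is not a scalar multiple of $\tfrac12(p_1+p_2)$ unless $p_1,p_2$ are linearly dependent. Concretely, for the delay operator $T_{de}$, any nontrivial mixture of two different geometric discount structures lies in $\partial I(\mathbf 0)$ whenever both are in some $D_i$, yet such a mixture is never an eigenvector of $T_{de}^*$. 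So the inclusion $\partial I(\mathbf 0)\subseteq F$ is false already for $|K|=1$. Your Krein--Milman remark does not rescue this: even if every extreme point of $\partial I(\mathbf 0)$ were a $T_i^*$-eigenvector, Krein--Milman only says the set is the closed convex hull of its extreme points, not that every point is extreme. Likewise, the observation that $T_i^*(D)\subseteq\mathrm{cone}(D)$ (which is correct) does not force individual points of $D$ to be eigenvectors.

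The paper avoids this by not working with the convex set $\partial I(\mathbf 0)$ directly. It first applies Proposition \ref{promaxmin} to each $i$ to get closed (not necessarily convex) sets $D_i\subseteq\Delta(T_i^*)$ with $I(x)=\min_{p\in D_i}\langle x,p\rangle$, notes that all the $D_i$ have the same closed convex hull $E$ (this is your $\partial I(\mathbf 0)$), and then takes $D:=\overline{\mathcal E(E)}$, the closure of the \emph{exposed points} of $E$. The key step is showing $\mathcal E(E)\subseteq D_i$ for every $i$: if $p\in\mathcal E(E)\setminus D_i$, the exposing functional $x$ gives $\min_{p'\in D_i}\langle x,p'\rangle>\langle x,p\rangle=\min_{p'\in E}\langle x,p'\rangle=I(x)$, a contradiction. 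Thus $D\subseteq\bigcap_i D_i\subseteq F$, and the representation $I(x)=\min_{p\in D}\langle x,p\rangle$ follows from the fact that a weak$^\star$-compact convex set is the closed convex hull of its exposed points. Your instinct that one needs the ``minimal'' representing set was right; the fix is to take (the closure of) the exposed points of $\partial I(\mathbf 0)$, not $\partial I(\mathbf 0)$ itself.
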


\subsection{Extension to countable additivity}

By adding Monotone continuity, we can obtain representations with standard discounting sequences. Let $A$ be a subset of a topological space, we say that a function $f:\overline{A} \to [0,\infty]$ is \textit{$A^c$-infinite} if $f(x) = \infty$ for every $x \in \overline{A} \setminus A$, where $\overline{A}$ is the closure of $A$. That is, $f$ takes the value of infinity outside of $A$.

\begin{theorem} \label{theorem2}
If $\succsim$ is a binary relation on $l_{\infty}$ satisfying Axioms \ref{axiom1}-\ref{axiom8} and I$T$IS, then there exists a grounded, $F^c$-infinite, and lower semicontinuous function $c: \overline{F} \rightarrow [0,\infty]$ such that the unique constant equivalent $I:l_{\infty}\rightarrow \mathbb{R}$ representing $\succsim$ can be expressed as
\begin{equation} \label{constanteq}
I(x)=\min_{p \in F}\bigg\{\sum_{n=0}^{\infty} p_n x_n  + c(p)\bigg\},
\end{equation} 
where $F=\{p \in \Delta^{\sigma}: \text{$p$ is an eigenvector of $T^*$}\}$.

Conversely, if $I(x)=\min_{p \in F}\{\sum_{n=0}^{\infty} p_n x_n  + c(p)\}$ for some grounded, $F^c$-infinite, and lower semicontinuous function $c: \overline{F} \rightarrow [0,\infty]$, then $\succsim$ represented by $I$ satisfies Axioms \ref{axiom1}-\ref{axiom8} and I$T$IS provided that $\langle 1, T^*(p) \rangle \leq 1$ for every $p \in F$. 
\end{theorem}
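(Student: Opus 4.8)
The plan is to obtain Theorem \ref{theorem2} by combining Theorem \ref{theorem1} with the extra structure that Monotone continuity injects, namely countable additivity of every superdifferential. For the forward (``only-if'') direction I would start exactly as in the proof sketch of Theorem \ref{theorem1}: Axioms \ref{axiom1}--\ref{axiom5} give, via Proposition \ref{pro1}, a unique normalized, translation-invariant, concave, $1$-Lipschitz, weakly-increasing constant equivalent $I$, whose superdifferential $\partial I(x)$ at each $x$ is a non-empty, convex, weak$^\star$-compact subset of $\Delta$. Now Axiom \ref{axiom8} (Monotone continuity) is invoked to upgrade this: for every $x$, $\partial I(x)\subseteq \Delta^{\sigma}$, i.e. every supporting discount structure is countably additive. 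This is the standard step (cf. \cite{A1970,CMMJ2005,M2006}); concretely, one shows that if $(E_n)\downarrow\emptyset$ then $\langle 1_{E_n},p\rangle\to 0$ for each $p\in\partial I(x)$, which is equivalent to $p\in ca(\mathbb{N})$ given $p\geq 0$ and $p(\mathbb{N})=1$.

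Next I would re-run the fixed-point argument of Theorem \ref{theorem1} but inside $\Delta^{\sigma}$. The key point is that the normalized transformation $s(p) = T^*(p)/\Vert T^*(p)\Vert$ maps $\partial I(x)$ into itself (this is where I$T$IS is used, exactly as in Theorem \ref{theorem1}), and that it maps $\Delta^{\sigma}$-elements to $\Delta^{\sigma}$-elements: since $\partial I(x)\subseteq\Delta^\sigma$ and $s$ maps $\partial I(x)$ into $\partial I(x)$, the restriction $s:\partial I(x)\to\partial I(x)$ automatically has image in $\Delta^\sigma$. In Case 1 ($T^*(p)=0$ for some $p\in\partial I(x)$) such a $p$ lies in $\Delta^\sigma$ and is an eigenvector for eigenvalue $0$, hence $p\in F$. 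In Case 2, Schauder--Tychonoff on the convex, weak$^\star$-compact set $\partial I(x)$ yields a fixed point $p$ of $s$, which is then an eigenvector of $T^*$ lying in $\Delta^\sigma$, so again $p\in F$. Thus the $F$-superdifferential $\partial I(x)\cap F$ is non-empty for every $x$. The generalized Moreau duality of Appendix \ref{moreau} then converts ``the $F$-superdifferential is everywhere non-empty'' into the representation $I(x)=\min_{p\in\overline F}\{\langle x,p\rangle + c(p)\}$ with $c$ grounded and lower semicontinuous on $\overline F$; one finally checks $c$ is $F^c$-infinite, i.e. $c\equiv\infty$ on $\overline F\setminus F$, because only points of $F$ can be $F$-supporting functionals (any minimizer in the representation must be attained at a point where the affine minorant actually supports $I$, hence in $\partial I(x)\subseteq\Delta^\sigma$, hence in $F$), and for countably additive $p$ one writes $\langle x,p\rangle=\sum_{n} p_n x_n$, giving (\ref{constanteq}).

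For the converse I would argue as in the converse of Theorem \ref{theorem1}, with the additional verification of Monotone continuity. Given $I(x)=\min_{p\in F}\{\sum_n p_n x_n + c(p)\}$ with $F\subseteq\Delta^\sigma$, the map $I$ is a minimum of affine weak$^\star$-continuous-in-$x$ functions translated by constants, so it is concave, $1$-Lipschitz, monotone, normalized and translation invariant, which gives Axioms \ref{axiom1}--\ref{axiom5}; I$T$IS follows from the eigenvector property together with the normalization condition $\langle 1,T^*(p)\rangle\leq 1$, exactly as in Theorem \ref{theorem1} (this is where Example \ref{example2} shows the hypothesis cannot be dropped). The new part is Axiom \ref{axiom8}: fix $x$ with $x\succ\theta$, i.e. $I(x)>\theta$, and $(E_n)\downarrow\emptyset$, and $k\in\mathbb{R}$. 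For each $p\in F$, countable additivity gives $\langle 1_{E_n},p\rangle\to 0$; I would then need a uniformity argument over the (relevant part of the) effective domain of $c$ — this is the delicate point. The clean way is: for $\varepsilon>0$ small enough that $I(x)-\varepsilon>\theta$, the set $C_\varepsilon=\{p\in F: c(p)\leq \Vert x\Vert + |\theta| + |k| + 1\}$ (a sublevel set of $c$ intersected with a norm ball) is weak$^\star$-compact, and on a weak$^\star$-compact subset of $ca(\mathbb{N})$ the convergence $\langle 1_{E_n},p\rangle\to 0$ is uniform (a Dini/equi-absolute-continuity type fact; for $p$ outside $C_\varepsilon$ the cost term alone already dominates). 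Choosing $n_0$ with $\sup_{p\in C_\varepsilon}|k-x_n|\langle 1_{E_{n_0}},p\rangle<\varepsilon$ then gives $I(kE_{n_0}x)\geq I(x)-\varepsilon>\theta$, i.e. $kE_{n_0}x\succ\theta$.

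The main obstacle I anticipate is precisely this last uniformity step in the converse: showing that replacing $x$ by $kE_n x$ perturbs $I$ by an amount that goes to $0$ \emph{uniformly over all $p$ that can attain the minimum}, rather than merely pointwise in $p$. Everything else is a routine transplant of the Theorem \ref{theorem1} machinery plus the classical Monotone-continuity $\Rightarrow$ countable-additivity lemma; but the converse direction of Monotone continuity genuinely needs the compactness of a sublevel set of $c$ (guaranteed by lower semicontinuity and $F^c$-infiniteness, since $\overline F$ is weak$^\star$-compact) to turn a family of pointwise limits into a uniform one.
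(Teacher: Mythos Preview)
Your forward direction follows the paper's approach closely: Proposition~\ref{pro1} gives $I$, Monotone continuity forces $\partial I(x)\subseteq\Delta^\sigma$, the Schauder--Tychonoff fixed-point argument of Theorem~\ref{theorem1} (run inside $\partial I(x)\subseteq\Delta^\sigma$) yields a point of $F$, and generalized Moreau duality (Lemma~\ref{lem2}) produces the representation. One point is handled differently. You write that Moreau duality directly gives $c$ lower semicontinuous on $\overline F$ and then ``check'' $F^c$-infiniteness via a minimizer argument; the paper does the reverse. It defines $c$ on $\overline F$ by setting $c=c'$ (the $F$-conjugate) on $F$ and $c=\infty$ on $\overline F\setminus F$ --- so $F^c$-infiniteness is by construction --- and then proves lower semicontinuity of this extension. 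The key step there is that the globally defined conjugate $c'(p)=\sup_x\{I(x)-\langle x,p\rangle\}$ is already $+\infty$ on $\Delta\setminus\Delta^\sigma$ (the paper quotes Theorem~2 of \cite{BBW2024}, a consequence of Monotone continuity); since $\overline F\subseteq\Delta(T^*)$ is closed and $\overline F\setminus F\subseteq\Delta\setminus\Delta^\sigma$, one gets $\{p\in\overline F:c(p)\leq t\}=\{p\in\Delta(T^*):c'(p)\leq t\}$, which is closed. Your argument ``any minimizer lies in $\partial I(x)\subseteq\Delta^\sigma$, hence in $F$'' shows minimizers lie in $F$ but does not by itself show the conjugate is $+\infty$ off $\Delta^\sigma$, which is what makes the extended $c$ lower semicontinuous. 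This is the missing ingredient in your sketch.

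For the converse, the paper simply declares it ``routine'' and gives no detail; your treatment is therefore more thorough, and your identification of Monotone continuity as the only nontrivial verification is correct. Your Dini-type argument is sound: the sublevel set $\{p\in\overline F:c(p)\leq M\}$ is weak$^\star$-closed (by lsc) in the compact $\overline F$, hence compact, and by $F^c$-infiniteness it sits inside $F\subseteq\Delta^\sigma$; the maps $p\mapsto\langle \mathbb{I}_{E_n},p\rangle$ are weak$^\star$-continuous, nonnegative, monotone in $n$, and converge pointwise to $0$ on this compact set, so Dini gives uniform convergence. With $M$ chosen large enough (your bound $\Vert x\Vert+|\theta|+|k|+1$ suffices) any $p$ outside the sublevel set cannot achieve the minimum for $kE_nx$, and inside it the uniform estimate yields $I(kE_{n_0}x)>\theta$. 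So the ``main obstacle'' you anticipated is in fact handled by exactly the argument you propose.
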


\begin{remark}
The set $F$ defined in the theorem above is not always closed in $ba(\mathbb{N})$. For example, when $T$ is given by Example \ref{example2}, then $F = \Delta^{\sigma}$, which is not closed. Thus, the constant equivalent defined by (\ref{constanteq}) might not be well defined. However, because $c$ is $F^c$-infinite, we can write
\begin{equation*} 
I(x)=\min_{p \in \overline{F}}\bigg\{\sum_{n=0}^{\infty} p_n x_n  + c(p)\bigg\}.
\end{equation*} 
Since $c$ is lower semicontinuous, $I$ is well defined. 
\end{remark}

\begin{proposition} \label{prop-maxmin2}
If $\succsim$ is a binary relation on $l_{\infty}$ satisfying Axioms \ref{axiom1}-\ref{axiom6} and I$T$IS, then there exists a non-empty closed set (in $\Delta$) $D \subseteq F$ such that the unique constant equivalent $I:l_{\infty}\rightarrow \mathbb{R}$ representing $\succsim$ can be expressed as
\begin{equation*} 
I(x)=\min_{p \in D} \sum_{n=0}^{\infty} p_n x_n,
\end{equation*} 
where $F=\{p \in \Delta^{\sigma}: \text{$p$ is an eigenvector of $T^*$} \}$.

Conversely, if $I(x)=\min_{p \in D} \langle x,p \rangle$ for some non-empty closed set (in $\Delta$) $D \subseteq F$, then $\succsim$ represented by $I$ satisfies Axioms \ref{axiom1}-\ref{axiom6} and I$T$IS provided that $\langle 1, T^*(p) \rangle \leq 1$ for every $p \in F$. 
\end{proposition}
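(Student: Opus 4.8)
The plan is to read Proposition~\ref{prop-maxmin2} as the maxmin specialization of Theorem~\ref{theorem2}, in the same way that Proposition~\ref{corollary2} specializes Theorem~\ref{corollary1} and Proposition~\ref{promaxmin} specializes Theorem~\ref{theorem1}. Indeed, the hypotheses are precisely those of Theorem~\ref{theorem2} together with ISU, and the role of ISU is to make $I$ positively homogeneous, which collapses the variational cost into a $\{0,\infty\}$-valued one and hence the variational representation into a maxmin one. The converse I would establish by verifying the axioms directly from the representation; the only two non-routine verifications are Monotone continuity (obtained from a Dini-type uniformity) and I$T$IS (obtained from the eigenvector relation and the bound $\langle 1,T^{*}(p)\rangle\le 1$).

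\emph{Forward direction.} The hypotheses comprise Weak order, Monotonicity, Continuity, ICRP, Convexity, Monotone continuity and I$T$IS, so Theorem~\ref{theorem2} gives a grounded, lower semicontinuous, $F^{c}$-infinite cost $c:\overline{F}\to[0,\infty]$ with $I(x)=\min_{p\in\overline{F}}\{\langle x,p\rangle+c(p)\}$, where $F=\{p\in\Delta^{\sigma}:p\text{ is an eigenvector of }T^{*}\}$. By the converse content of Proposition~\ref{pro1}, adding ISU makes $I$ positively homogeneous, $I(\alpha x)=\alpha I(x)$ for $\alpha\ge 0$. Exactly as in the passage from Theorem~\ref{corollary1} to Proposition~\ref{corollary2}, positive homogeneity forces the canonical cost to take only the values $0$ and $\infty$, so that $I$ is the support function of $D:=\{p\in\overline{F}:c(p)=0\}$. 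This $D$ is non-empty since $c$ is grounded and lower semicontinuous on the weak$^{\star}$-compact set $\overline{F}\subseteq\Delta$; it is closed in $\Delta$, being the zero level set of a lower semicontinuous function on a compact set; and $D\subseteq F$, because $c$ is $F^{c}$-infinite and hence $c(p)=0$ already forces $p\in F\subseteq\Delta^{\sigma}$. Substituting $c\in\{0,\infty\}$ yields $I(x)=\min_{p\in D}\langle x,p\rangle$, as claimed.

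\emph{Converse.} Assume $I(x)=\min_{p\in D}\langle x,p\rangle$ with $D\subseteq F$ non-empty and closed in $\Delta$, and $\langle 1,T^{*}(p)\rangle\le 1$ for all $p\in F$. As $D$ is a closed subset of the weak$^{\star}$-compact set $\Delta$ it is compact, so the minimum is attained and $I$ is well defined. Being a minimum of the weak$^{\star}$-continuous functionals $p\mapsto\langle x,p\rangle$ over probability measures, $I$ is weakly increasing, concave, $1$-Lipschitz, normalized, translation invariant and positively homogeneous; the converse content of Proposition~\ref{pro1} then gives Weak order, Monotonicity, Continuity, ICRP and Convexity, and positive homogeneity gives ISU. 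For Monotone continuity I would use $D\subseteq F\subseteq\Delta^{\sigma}$: given $(E_n)\downarrow\emptyset$, the maps $p\mapsto p(E_n)=\langle\mathbf{1}_{E_n},p\rangle$ are weak$^{\star}$-continuous on the compact set $D$ and decrease pointwise to $0$ by countable additivity, so Dini's theorem yields $\sup_{p\in D}p(E_n)\to 0$; combined with the bound $|\langle kE_n x,p\rangle-\langle x,p\rangle|\le(|k|+\|x\|_{\infty})\,p(E_n)$ this gives $I(kE_n x)\to I(x)$, so that $I(kE_{n_0}x)>\theta$ for some $n_0$ whenever $I(x)>\theta$, which is Monotone continuity. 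For I$T$IS, every $p\in D\subseteq F$ obeys $T^{*}(p)=\lambda_p p$ with $\lambda_p=\langle 1,T^{*}(p)\rangle$, and positivity of $T^{*}$ together with the standing bound forces $0\le\lambda_p\le 1$. If $I(x+d)\ge I(x)$ then, writing $a_p:=\langle x,p\rangle$ and $b_p:=\langle d,p\rangle$, we have $a_p\ge I(x)$ and $a_p+b_p\ge I(x+d)\ge I(x)$ for every $p\in D$, whence $\langle x+T(d),p\rangle=a_p+\lambda_p b_p=(1-\lambda_p)a_p+\lambda_p(a_p+b_p)\ge I(x)$; minimizing over $p$ gives $I(x+T(d))\ge I(x)$, i.e. $x+T(d)\succsim x$.

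\emph{Main obstacle.} The delicate point is the forward collapse: verifying rigorously that positive homogeneity of $I$ reduces the variational cost supplied by Theorem~\ref{theorem2} to its $\{0,\infty\}$-valued form, and that the level set $D=\{c=0\}$ simultaneously stays inside $F$ and remains closed in $\Delta$. This rests on working with the canonical lower semicontinuous convex representative of $c$ (as in the uniqueness discussion preceding Appendix~\ref{appenunique}) and, crucially, on the $F^{c}$-infiniteness furnished by Theorem~\ref{theorem2}, which is exactly what keeps $D$ within $\Delta^{\sigma}$ rather than merely within the weak$^{\star}$-closure $\overline{F}$. In the converse, the only genuinely analytic input is the Dini uniformity behind Monotone continuity; the remaining verifications are direct computations.
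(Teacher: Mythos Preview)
Your proposal is correct and follows essentially the same approach as the paper: for the forward direction both you and the paper invoke Theorem~\ref{theorem2} and then use ISU to force the cost into $\{0,\infty\}$, setting $D=\{p\in\overline{F}:c(p)=0\}$; you make explicit (via $F^{c}$-infiniteness) that $D\subseteq F$, which the paper leaves implicit. For the converse the paper simply declares it ``routine,'' whereas you supply the details---your Dini argument for Monotone continuity and the convex-combination computation for I$T$IS are both sound and mirror the pattern used in the converse of Theorem~\ref{theorem1}.
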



\section{Patience}\label{sec:equal}


As discussed in the introduction, there are numerous arguments supporting the theory of not discounting future utilities. In this section, we will introduce criteria that covers situations where the decision maker may consider discount structures that treat all time periods equally. We first remark that if a countably additive discount structure $p = (p_0, p_1, \dots)$ is meant to represent the equal treatment of all time periods, then we would need $p_i = p_j$ for all $i, j \in \mathbb{N}$. However, it is clear that there is no countably additive discount structure that satisfies this requirement. Thus, the principle of equal treatment for all time periods cannot be adequately represented by countably additive discount structures. Consequently, it becomes essential to remove Monotone Continuity, as we shall proceed to do.

In the literature, the principle of ensuring equal treatment of all time periods is often described by \textit{Patience}. This axiom dictates that preferences should remain unchanged when the utility levels of a finite number of time periods are permuted.\footnote{A finite permutation $\sigma: \mathbb{N} \to \mathbb{N}$ is a bijection where $\sigma(i)=i$ for all integers $i$ except for a finite number of them.} For any finite permutation $\sigma:\mathbb{N} \to \mathbb{N}$, we denote $x_{\sigma}=(x_{\sigma(0)},x_{\sigma(1)},x_{\sigma(2)},\dots)$ for all $x \in l_{\infty}$.

\begin{patience}
For every $x \in l_{\infty}$ and every finite permutation $\sigma: \mathbb{N} \to \mathbb{N}$, $x \sim x_{\sigma}$.
\end{patience}

A discount structure $\mu \in \Delta$ is called a \textit{Banach-Mazur limit} if, for every $x \in l_{\infty}$, $\langle x,\mu \rangle = \langle (x_1,x_2,\dots),\mu \rangle$. Banach-Mazur limits exhibit a shift-invariance property, meaning they assess a utility stream $x$ and its shifted version $(x_1, x_2, x_3, \dots)$ equally. Clearly, if $\mu$ is a Banach-Mazur limit, then $\langle x,\mu \rangle= \langle x_{\sigma}, \mu \rangle$ for all finite permutation $\sigma$. Consequently, Banach-Mazur limits embody the principle of treating all time periods equally. We use $\mathcal{B}$ to denote the set of all Banach-Mazur limits. 

\begin{theorem}\label{theoremequal}
A binary relation $\succsim$ on $l_{\infty}$ satisfies Axioms \ref{axiom1}-\ref{axiom5} and IDIS if and only if there exists a grounded and lower semicontinuous function $c: \mathcal{D} \to [0,\infty]$ such that the constant equivalent of $\succsim$ can be expressed as
\begin{equation*} 
I(x)=\min_{p \in \mathcal{D}} \{\langle x,p \rangle +c(p)\},
\end{equation*} where $\mathcal{D} = \mathcal{B} \cup \{(1-\delta)(1,\delta,\delta^2,\dots):\delta \in [0,1)\}$.
\end{theorem}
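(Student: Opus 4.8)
\textbf{Proof plan for Theorem \ref{theoremequal}.}

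The strategy is to mimic the proof of Theorem \ref{corollary1}, but with Monotone Continuity (Axiom \ref{axiom8}) removed, and to identify exactly which discount structures survive as candidates in the representation. By Proposition \ref{pro1}, Axioms \ref{axiom1}--\ref{axiom5} give a unique weakly-increasing, concave, $1$-Lipschitz, normalized, translation invariant constant equivalent $I$, and by the generalized Moreau duality (Appendix \ref{moreau}) it suffices to show that for every $x \in l_{\infty}$ the $\mathcal{D}$-superdifferential $\partial I(x) \cap \mathcal{D}$ is non-empty; the representing cost function is then the Fenchel-type conjugate, which is automatically grounded and lower semicontinuous. First I would recall that $\partial I(x) \subseteq \Delta$ (this uses only monotonicity, normalization, and $1$-Lipschitz/translation invariance, not Monotone Continuity), and that $\partial I(x)$ is non-empty, convex, and weak$^\star$-compact. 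Since IDIS holds, the argument from Theorem \ref{corollary1} still applies verbatim to show that $\partial I(x)$ is invariant under the delayed-shift-renormalization map $s(p) = \frac{1}{1-p_0}(p_1, p_2, \dots)$ whenever $p_0 \neq 1$; this is the step where IDIS does all the work, via $\langle (0,d), p\rangle = \langle d, s(p)\rangle(1-p_0)$ applied along improving directions.

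The new ingredient is the classification of fixed points of $s$ in $ba(\mathbb{N})$ rather than in $\Delta^\sigma$. I would argue as follows. If $(1,0,0,\dots) \in \partial I(x)$ we are done since that point lies in $\mathcal{D}$. Otherwise $p_0 \neq 1$ for all $p \in \partial I(x)$, so $s$ is a well-defined continuous self-map of the convex compact set $\partial I(x)$, and Schauder--Tychonoff yields a fixed point $p = s(p)$. The task is to show any such fixed point lies in $\mathcal{D} = \mathcal{B} \cup F$. Write $p = p^\sigma + p^f$ for its Yosida--Hewitt decomposition into a countably additive part and a purely finitely additive part. The relation $p = s(p)$ means $(1-p_0)p = (p_1, p_2, p_3, \dots)$, i.e. the shift of $p$ equals $(1-p_0)p$. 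On the countably additive coordinates this forces $p_{n+1} = (1-p_0)p_n$, hence $p^\sigma$ is (a scalar multiple of) a geometric sequence $(1, \delta, \delta^2, \dots)$ with $\delta = 1-p_0$; if $p^\sigma$ has total mass $m \le 1$ then $p^\sigma = m(1-\delta)(1,\delta,\delta^2,\dots)$ up to the normalization $\sum_n p^\sigma_n = m$. The purely finitely additive remainder $p^f$ must then satisfy the shift relation with the same scalar $1-p_0$; the key observation is that the shift operator acts on purely finitely additive measures concentrated ``at infinity,'' so the only way the shift can scale $p^f$ by the factor $1-p_0 \in (0,1]$ while $p^f$ has bounded total variation is to have either $p^f = 0$ or $1 - p_0 = 1$ (i.e. $p_0 = 0$) with $p^f$ shift-invariant, i.e. a (sub-normalized) Banach--Mazur-type limit. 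Combining the two parts and using $\langle 1, p \rangle = 1$, I would conclude that $p$ is either a genuine geometric discount structure in $F$, or a convex combination degenerates so that $p \in \mathcal{B}$; in the mixed case one checks $p_0 = 0$ forces the geometric part to be $(1,0,0,\dots)$-type or vanish, leaving $p \in \mathcal{B}$. This shows $\partial I(x) \cap \mathcal{D} \neq \emptyset$.

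For the converse, given the representation $I(x) = \min_{p \in \mathcal{D}}\{\langle x, p\rangle + c(p)\}$ with $c$ grounded and lower semicontinuous, Axioms \ref{axiom1}--\ref{axiom5} follow from standard variational-representation arguments (the min of affine functions is concave, monotone since each $p \geq 0$, $1$-Lipschitz and translation invariant since each $p \in \Delta$, normalized since $\inf c = 0$ and $\langle 1, p\rangle = 1$; Convexity and Continuity are routine). For IDIS, I would verify that $\mathcal{D}$ is invariant under $s$ in the appropriate sense: every geometric $(1-\delta)(1,\delta,\delta^2,\dots)$ is a fixed point of $s$, and every Banach--Mazur limit is also a fixed point of $s$ since $\langle (0,d), \mu\rangle$ relates to $\langle d, \mu\rangle$ via shift-invariance (here $\mu_0 = 0$, so $s(\mu) = (\mu_1,\mu_2,\dots) = \mu$ as a functional). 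Hence for any $p \in \mathcal{D}$, $\langle (0,d), p \rangle = (1-p_0)\langle d, s(p)\rangle + p_0 \cdot 0 = (1-p_0)\langle d, p\rangle \geq 0$ whenever $\langle d, p\rangle \geq \ldots$; more precisely, if $x + d \succsim x$ then $\langle d, p \rangle \geq I(x) - \langle x, p\rangle - c(p) + \langle d,p\rangle \ge 0$ need not hold pointwise, so instead I would argue directly: $I(x + (0,d)) = \min_p \{\langle x,p\rangle + (1-p_0)\langle d, p\rangle + c(p)\} \geq \min_p\{\langle x, p\rangle + c(p)\} \wedge (\dots)$ — the clean way is to note that $x + d \succsim x$ gives $\langle d, q\rangle \geq 0$ for the minimizing $q$ at $x+d$, but this requires care. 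The honest route for IDIS is: fix $x, d$ with $x + d \succsim x$; for the $p$ achieving the minimum in $I(x + (0,d))$, since $p \in \mathcal{D}$ satisfies $\langle (0,d), p\rangle = (1-p_0)\langle d, p\rangle$ and $\langle d, p\rangle \geq I(x) - \langle x, p\rangle - c(p) \geq 0$ because $x + d \succsim x$ implies $I(x+d) = \min_q\{\langle x,q\rangle + \langle d, q\rangle + c(q)\} \geq I(x)$ and in particular the term at $p$ is $\geq I(x)$ only after taking min, so one must instead use that $I(x+d) \ge I(x)$ globally. I would package this as: IDIS holds iff $\mathcal{D}$ is $s$-invariant, which it is by the two fixed-point computations above, invoking the general equivalence between the invariance axiom and superdifferential-set invariance already used in the forward direction.

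\textbf{Main obstacle.} The crux is the classification of fixed points of $s$ among \emph{all} finitely additive discount structures — specifically, showing that the purely finitely additive part of a fixed point, if nonzero, must be a Banach--Mazur limit (forcing $p_0 = 0$), and that no other purely finitely additive solutions of the scaling equation $\text{shift}(p) = (1-p_0)p$ exist with bounded variation. This is where the Yosida--Hewitt decomposition and the specific structure of how the shift operator acts on the ``part at infinity'' must be used carefully; I expect this to be the most delicate part of the argument, and it is the place where dropping Monotone Continuity genuinely changes the conclusion (Monotone Continuity was precisely what killed the purely finitely additive part in Theorem \ref{corollary1}).
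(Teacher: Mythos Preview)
Your forward direction is correct and tracks the paper's argument closely; the paper packages it by invoking the general Theorem \ref{theorem1} with $T = T_{de}$ (the delay operator) and then identifying $\Delta(T_{de}^*)$, whereas you redo the Schauder--Tychonoff step inline, but the substance is identical. Your Yosida--Hewitt route to the classification is a legitimate alternative to the paper's direct eigenvalue argument in Lemma \ref{lemmageo}: the paper simply observes that for an eigenvalue $\lambda \in (0,1)$ one has $p(\{n,n+1,\dots\}) = \lambda^n \to 0$, forcing countable additivity without ever decomposing $p$. Your YH argument also works, and the ``main obstacle'' you flag is actually a one-liner: since $p^f$ is purely finitely additive, $p^f(\{0\})=0$, so $\langle 1, T_{de}^*(p^f)\rangle = p^f(\mathbb{N})$; equating this with $\langle 1,(1-p_0)p^f\rangle = (1-p_0)p^f(\mathbb{N})$ gives $p^f(\mathbb{N})=0$ unless $p_0=0$. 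There is no genuine ``mixed case'': once $p_0=0$ the recursion $p^\sigma_{n+1}=p^\sigma_n$ forces $p^\sigma=0$, so $p$ is either geometric or a Banach--Mazur limit, never a nontrivial sum.

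Your converse direction has a real gap. You correctly note that every $p \in \mathcal{D}$ satisfies $\langle (0,d),p\rangle = (1-p_0)\langle d,p\rangle$, but then you circle without landing the inequality. The missing step is the convex-combination identity: for the minimizer $p$ at $x+(0,d)$,
\[
\langle x+(0,d),p\rangle + c(p) = p_0\bigl[\langle x,p\rangle + c(p)\bigr] + (1-p_0)\bigl[\langle x+d,p\rangle + c(p)\bigr] \ge p_0\, I(x) + (1-p_0)\, I(x+d) \ge I(x),
\]
using $p_0 \in [0,1]$ and $I(x+d)\ge I(x)$. This is exactly the argument in the paper's ``if part'' of Theorem \ref{theorem1}, specialized to eigenvalue $\lambda = 1-p_0 \le 1$. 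Your attempts to get $\langle d,p\rangle \ge 0$ pointwise cannot work (it need not hold), and the appeal to a ``general equivalence between the invariance axiom and superdifferential-set invariance'' is not something established in the paper in that direction; you need the explicit inequality above.
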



This outcome suggests that, in addition to considering exponential discounting sequences, the decision maker also take into account Banach-Mazur limits, which possibly captures patient behavior.

Typically, Patience is substituted with a stronger assumption known as \textit{Time invariance}. This axiom means that the decision maker assigns zero weight to consequences in all prior and current time periods, while assigning full weight to future time periods. In essence, the outcomes at any finite set of points in time become irrelevant. A crucial aspect of Time invariance is that it is a requisite condition for any preference relation aiming to represent the undiscounted case as a limit case of discounting as the discount factor approaches 1 (\cite{M1998}). 

\begin{timeinva}
For every $x\in l_{\infty}$, $x\sim (x_1,x_2,x_3,\dots)$.
\end{timeinva}

If we introduce the assumption of Time invariance in addition to the existing conditions, then the set of potential discount structures is constrained to include only Banach-Mazur limits.

\begin{proposition}\label{corollaryBanach}
A binary relation $\succsim$ on $l_{\infty}$ satisfies Axioms \ref{axiom1}-\ref{axiom5}, IDIS, and Time variance if and only if there exists a unique grounded, lower semicontinuous, and convex function $c: \mathcal{B} \to [0,\infty]$ such that the constant equivalent of $\succsim$ can be expressed as
\begin{equation*} 
I(x)=\min_{p \in \mathcal{B}} \{\langle x,p \rangle +c(p)\}.
\end{equation*}
\end{proposition}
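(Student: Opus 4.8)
The plan is to derive Proposition \ref{corollaryBanach} from Theorem \ref{theoremequal} by using Time invariance to prune the set $\mathcal{D} = \mathcal{B} \cup F$ down to $\mathcal{B}$. First I would invoke Theorem \ref{theoremequal}: since Axioms \ref{axiom1}--\ref{axiom5} and IDIS hold, there is a grounded lower semicontinuous $c:\mathcal{D}\to[0,\infty]$ with $I(x)=\min_{p\in\mathcal{D}}\{\langle x,p\rangle+c(p)\}$. The task is then to show that Time invariance forces $c(p)=\infty$ for every $p\in F\setminus\mathcal{B}$, i.e. for every geometric distribution $(1-\delta)(1,\delta,\delta^2,\dots)$ with $\delta\in[0,1)$ (note no such distribution is a Banach--Mazur limit, since shifting it strictly changes the mass it puts on $\{0\}$). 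The mechanism: if $c(p_\delta)<\infty$ for some $\delta\in[0,1)$, I would construct a utility stream on which the minimum in the representation is strictly attained at $p_\delta$ and then exhibit a shift that changes the value, contradicting $I(x)=I((x_1,x_2,\dots))$.

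Concretely, here is the key step. Suppose $c(p_\delta) = k < \infty$ for some $\delta \in [0,1)$. Time invariance means $I(x) = I(Sx)$ for all $x$, where $S x = (x_1,x_2,\dots)$; equivalently $\langle x - Sx, p\rangle$ integrates to $0$ against any $p$ that actually gets selected, but more usefully I would work with the representation directly. Choose $x$ so that $\langle x,p_\delta\rangle + c(p_\delta)$ lies strictly below $\langle x,p\rangle + c(p)$ for all other $p\in\mathcal D$ on a suitable two-point family $x$ and $Sx$. A clean choice: since $p_\delta$ puts mass $1-\delta>0$ on coordinate $0$ and Banach--Mazur limits put zero mass on every singleton, for a stream like $x=(M,0,0,\dots)$ with $M$ large and negative we get $\langle x,p_\delta\rangle=(1-\delta)M$ while $\langle x,p\rangle=0$ for all $p\in\mathcal B$; meanwhile $Sx=(0,0,\dots)$ gives $\langle Sx,p\rangle=0$ for all $p\in\mathcal D$. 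Then $I(Sx)=\min\{c(p):p\in\mathcal D\}=0$ (groundedness of $c$ and the fact that $c$ is nonnegative, with the infimum $0$), while $I(x)\le (1-\delta)M + k \to -\infty$ as $M\to-\infty$. This contradicts $I(x)=I(Sx)=0$ unless no finite-cost geometric distribution exists, i.e. $c\equiv\infty$ on $F$. Hence the minimum ranges only over $\mathcal B$, and restricting $c$ to $\mathcal B$ gives the representation; it remains grounded (since the infimum of $c$ over $\mathcal{D}$ equals $0$ and is now attained within $\mathcal{B}$, using that $\mathcal B$ is closed and $c$ is lower semicontinuous, or directly because the above argument shows $0=\inf_{\mathcal D}c$ cannot be approached along $F$ without contradiction) and lower semicontinuous. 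For the converse direction, I would check that any $I(x)=\min_{p\in\mathcal B}\{\langle x,p\rangle+c(p)\}$ satisfies Axioms \ref{axiom1}--\ref{axiom5} (this follows from Proposition \ref{pro1} / the general variational structure, as $\mathcal B\subseteq\Delta$ and each $p\in\mathcal B$ is a positive normalized functional) and satisfies both IDIS and Time invariance because every Banach--Mazur limit is shift-invariant: $\langle x,p\rangle=\langle Sx,p\rangle=\langle(0,x),p\rangle$ for all $p\in\mathcal B$, so delaying or shifting a stream leaves every term $\langle\cdot,p\rangle+c(p)$ unchanged.

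The last pieces are uniqueness and convexity of $c$ on $\mathcal B$. For these I would appeal to the biconjugation / Fenchel duality machinery underlying the variational representations (the Biconjugation theorem listed in the preamble): the map $I$ is concave, $1$-Lipschitz, normalized and translation invariant by Proposition \ref{pro1}, hence it equals its own biconjugate, and the canonical cost function is $c(p) = \sup_{x\in l_\infty}\{I(x) - \langle x,p\rangle\}$, which is a supremum of affine functions of $p$ and therefore convex and lower semicontinuous on $\mathcal B$; groundedness is $\min_p c(p)=0$, equivalent to $I(x)\le\langle x,p\rangle$ for the minimizing $p$ and $\sup_x(I(x)-\langle x,p\rangle)=0$ for some $p$. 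Uniqueness of this canonical $c$ on $\mathcal B$ is then automatic, since $\mathcal B$ is a closed convex set and the argument just described pins down $c$ as the conjugate of $I$ restricted there; any other representing cost would have the same lower semicontinuous convex envelope, which on a set where $c$ is already l.s.c. and convex is $c$ itself.

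The main obstacle I anticipate is the bookkeeping around \emph{which} cost function satisfies all three properties simultaneously and the claim of \emph{uniqueness}: Theorem \ref{theoremequal} only gives \emph{some} grounded l.s.c. $c$, not a canonical one, so I must show that after restricting to $\mathcal B$ one can replace $c$ by its convex l.s.c. envelope without changing $I$ — this needs that $\mathcal B$ is convex (true: a convex combination of shift-invariant normalized positive functionals is again one) and that taking the lower semicontinuous convex hull of $c$ over $\mathcal B$ does not lower the minimized value $\min_{p\in\mathcal B}\{\langle x,p\rangle+c(p)\}$, which is the standard fact that infimal convolution / Fenchel conjugation is insensitive to passing to the closed convex envelope of the integrand. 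Getting the quantifiers right there, and confirming that groundedness is preserved under this envelope operation (it is: the envelope has the same infimum, namely $0$), is the delicate part; everything else is a routine transplant of the arguments already used for Theorem \ref{corollary1} and Theorem \ref{theoremequal}.
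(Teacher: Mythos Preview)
Your proposal is correct and follows essentially the same route as the paper: invoke Theorem \ref{theoremequal}, then eliminate every geometric distribution $p_\delta$ by testing against streams of the form $(M,0,0,\dots)$ with $M\to-\infty$ (the paper phrases this as $c(p_\delta)\ge I((-n,0,\dots))+(1-\delta)n=(1-\delta)n\to\infty$ via the canonical conjugate, you phrase it as a contradiction with $I(x)=I(Sx)=0$), and finally obtain convexity and uniqueness on the closed convex set $\mathcal{B}$ from the canonical cost $c(p)=\sup_x\{I(x)-\langle x,p\rangle\}$ and biconjugation, exactly as the paper does via Appendix \ref{appenunique}. Your additional discussion of replacing an arbitrary representing $c$ by its l.s.c.\ convex envelope is sound but unnecessary once you work with the canonical conjugate from the start, as the paper does.
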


In the work of \cite{M1998}, the following criterion is introduced and characterized to encapsulate Patience: $$I(x) = \liminf_{T \uparrow \infty} \inf_{j \geq 0} \frac{1}{T+1} \sum_{t=0}^{T} x_{j+t}.$$ Notably, \cite{KS2018} demonstrate that this criterion can be equivalently expressed using the following maxmin criterion:
$$I(x) = \min_{p\in \mathcal{B}} \langle x,p \rangle.$$ Therefore, Marinacci's patient criterion can be regarded as a special case of the criterion outlined in Proposition \ref{corollaryBanach}.

\begin{remark}
We can establish the uniqueness of the cost function $c$ in the preceding result because the set $\mathcal{B}$ is both closed and convex. This proof will mirror the argument presented in Appendix \ref{appenunique}. 
\end{remark}



Now, let us introduce a new patience condition.

\begin{fpermuinva}
For every finite permutation $\sigma:\mathbb{N}\to \mathbb{N}$ and $x,d\in l_{\infty}$, if $x+d\succsim x$, then $x+d_{\sigma} \succsim x$.
\end{fpermuinva}

It is important to note that, unlike the invariance axiom I$T$IS which necessitates invariance with respect to a single transformation, IFPIS introduces a requirement for invariance across an infinite family of transformations. This assumption could be interpreted as a patient condition: when dealing with an improving sequence, it is acceptable to exchange the increments of any two time periods, while leaving those of the remaining dates unchanged. The key distinction between Patience and IFPIS is that Patience requires the consideration of permutations applied to actual situations, while IFPIS demands the same treatment for permutations applied to improving sequences.

A finitely additive measure $\mu$ is \textit{purely finitely additive} if, for every $n \in \mathbb{N}$, $\mu(\{n\})=0$. We denote $pa(\mathbb{N})$ the set of all purely finitely additive measures.

\begin{theorem}\label{corollary7}
A binary relation $\succsim$ on $l_{\infty}$ satisfies Axioms \ref{axiom1}-\ref{axiom5}, Axiom \ref{axiom6}, and IFPIS if and only if there exists a unique convex and compact set $D\in pa(\mathbb{N}) \cap \Delta$ such that the unique constant equivalent of $\succsim$ can be expressed as
\begin{equation*} 
I(x)=\min_{\mu \in D } \langle x,\mu \rangle.
\end{equation*}  
\end{theorem}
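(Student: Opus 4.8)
\textbf{Proof plan for Theorem \ref{corollary7}.}

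The plan is to follow the same template as the proof of Theorem \ref{theorem1}/Proposition \ref{promaxmin}, but now with an \emph{infinite} family of transformations, as in Theorem \ref{theorem2.3}. Concretely: for each finite permutation $\sigma:\mathbb{N}\to\mathbb{N}$, let $T_\sigma:l_\infty\to l_\infty$ be the linear map $T_\sigma(d)=d_\sigma$. Each $T_\sigma$ is a continuous, positive linear function, and IFPIS is precisely the conjunction ``I$T_\sigma$IS holds for every finite permutation $\sigma$''. Moreover $T_\sigma$ is norm-preserving, so $\langle 1,T_\sigma^*(p)\rangle=\langle T_\sigma(1),p\rangle=\langle 1,p\rangle=1\le 1$ for every $p\in\Delta$, so the ``$\langle 1,T^*(p)\rangle\le 1$'' proviso needed for the converse direction is automatic. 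Hence Theorem \ref{theorem2.3} applies directly with $K$ the set of all finite permutations, giving: $\succsim$ satisfies Axioms \ref{axiom1}--\ref{axiom6} and IFPIS if and only if $I(x)=\min_{\mu\in D}\langle x,\mu\rangle$ for some non-empty closed $D\subseteq F$, where $F=\bigcap_\sigma\{p\in\Delta:\text{$p$ is an eigenvector of }T_\sigma^*\}$. (One also uses that ISU forces the cost function to be $0$ or $\infty$, reducing the variational form to a maxmin form — exactly as in Proposition \ref{corollary2} versus Theorem \ref{corollary1}; and the maxmin set $D$ may be taken convex and compact since $I$ is concave, $1$-Lipschitz and weak$^\star$-upper semicontinuous, so $D$ can be replaced by the weak$^\star$-closed convex hull of its superdifferentials without changing $I$.)

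The real content is therefore to identify $F$ with $pa(\mathbb{N})\cap\Delta$. First I would compute $T_\sigma^*$: for $\mu\in ba(\mathbb{N})$ and a singleton, $\langle 1_{\{n\}},T_\sigma^*(\mu)\rangle=\langle T_\sigma(1_{\{n\}}),\mu\rangle=\langle 1_{\{\sigma^{-1}(n)\}},\mu\rangle=\mu(\{\sigma^{-1}(n)\})$, so $T_\sigma^*$ permutes the ``point masses'' of $\mu$ by $\sigma^{-1}$ and leaves the purely-finitely-additive part untouched (by the Yosida--Hewitt decomposition $\mu=\mu^c+\mu^{pa}$, where $\mu^c\in ca(\mathbb{N})$ is determined by the values $\mu(\{n\})$). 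Now suppose $p\in F$, i.e. $T_\sigma^* p=\lambda_\sigma p$ for every finite $\sigma$. Writing $p=p^c+p^{pa}$, the eigen-equation gives $(p^c)_{\sigma}=\lambda_\sigma p^c$ on singletons, i.e. $p_{\sigma^{-1}(n)}=\lambda_\sigma p_n$ for all $n$. For a transposition $\sigma$ of two indices $i\ne j$: $p_j=\lambda_\sigma p_i$, $p_i=\lambda_\sigma p_j$, and $p_n=\lambda_\sigma p_n$ for all $n\ne i,j$. If some $p_n\ne 0$ for an $n$ fixed by $\sigma$, then $\lambda_\sigma=1$, forcing $p_i=p_j$. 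Ranging over all transpositions one deduces all the $p_n$ with $n$ outside a two-element set are equal to a common value, hence (letting the two-element set vary) either all $p_n$ are equal or all but finitely many are; summability of $(p_n)_n$ (since $\sum_n |p_n|\le\|p\|<\infty$, $p^c$ being a genuine countably additive measure of bounded variation) then forces $p_n=0$ for all $n$, i.e. $p^c=0$ and $p=p^{pa}\in pa(\mathbb{N})$. The small exceptional case where \emph{every} $p_n=0$ already is exactly $p\in pa(\mathbb{N})$, and the case where the relevant index is not fixed is handled by using a third index as a ``pivot''. Conversely, if $p\in pa(\mathbb{N})\cap\Delta$ then $p_n=0$ for all $n$, so $T_\sigma^* p$ and $p$ agree on every singleton; but does that force $T_\sigma^* p=p$? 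This is the point to be careful about: $T_\sigma^*$ acts on all of $ba(\mathbb{N})$, not just via singleton values. Here I would argue that $T_\sigma$, being induced by a finite permutation, is the identity on all $x\in l_\infty$ that are eventually constant up to finitely many coordinates... more precisely $T_\sigma(x)$ and $x$ differ in only finitely many coordinates, so $\langle T_\sigma(x)-x,p\rangle=0$ for any purely finitely additive $p$ (a purely finitely additive measure assigns $0$ to every finite set, hence to any $l_\infty$ element supported on a finite set). Thus $\langle x,T_\sigma^* p\rangle=\langle T_\sigma x,p\rangle=\langle x,p\rangle$ for all $x$, i.e. $T_\sigma^* p=p$, so $p$ is an eigenvector (eigenvalue $1$) of every $T_\sigma^*$, giving $p\in F$. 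Hence $F=pa(\mathbb{N})\cap\Delta$.

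Putting the pieces together: $I(x)=\min_{\mu\in D}\langle x,\mu\rangle$ with $D$ a non-empty closed subset of $pa(\mathbb{N})\cap\Delta$, and by the concavity/Lipschitz/u.s.c.\ remark above $D$ may be taken convex and weak$^\star$-compact; it is then unique as the set of ``supergradients'' $\bigcup_x\partial I(x)$ (equivalently, recovered from $I$ by $D=\{\mu\in ba(\mathbb{N}): \langle x,\mu\rangle\ge I(x)\ \forall x\}$), which also pins down the uniqueness claim. For the converse, any such $D\subseteq pa(\mathbb{N})\cap\Delta$ is contained in $F$, so Theorem \ref{theorem2.3}'s converse (whose proviso we checked is automatic) yields Axioms \ref{axiom1}--\ref{axiom6} and I$T_\sigma$IS for all $\sigma$, i.e. IFPIS.

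I expect the main obstacle to be the forward computation of $F$ — specifically, carefully running the transposition/pivot argument on the singleton values to conclude $p^c=0$ (handling the degenerate sub-cases where too many $p_n$ vanish or the naive transposition fixes no index with $p_n\ne0$), together with the subtle point that agreeing on all singletons does \emph{not} a priori mean two elements of $ba(\mathbb{N})$ are equal, which is why the converse inclusion must instead go through the ``$T_\sigma x$ and $x$ differ in finitely many coordinates, and purely finitely additive measures kill finitely supported sequences'' observation rather than through singleton values.
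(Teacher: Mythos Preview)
Your approach is essentially the same as the paper's: invoke Theorem \ref{theorem2.3} with the family $\{T_\sigma\}$ of finite permutations and then identify $F=\bigcap_\sigma \Delta(T_\sigma^*)$ with $pa(\mathbb{N})\cap\Delta$. The converse inclusion $pa(\mathbb{N})\cap\Delta\subseteq F$ via ``$T_\sigma x-x$ has finite support and purely finitely additive measures kill finitely supported sequences'' is exactly right, and you are correct that the proviso $\langle 1,T_\sigma^*(p)\rangle\le 1$ is automatic.

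The one place you are working harder than necessary is the forward inclusion. Your transposition/pivot argument with case splits on whether some fixed index has $p_n\ne 0$ can be made to work, but the paper bypasses all of it with a single observation: for \emph{every} finite permutation $\sigma$ and every $p\in\Delta$, taking $x=1$ in $\langle x,T_\sigma^*p\rangle=\langle x_\sigma,p\rangle$ gives $\lambda_\sigma\cdot 1=\langle 1,p\rangle=1$, so $\lambda_\sigma=1$ always. Once you know this, the eigen-equation on singletons reads simply $p_{\sigma^{-1}(n)}=p_n$ for all $n$ and all finite $\sigma$, hence all $p_n$ are equal, hence (summability) all zero. The paper packages the passage from ``eigenequation on all of $ba(\mathbb{N})$'' to ``eigenequation on the countably additive and purely finitely additive parts separately'' via Lemmas \ref{criteriu} and \ref{lemma3} (showing $T_\sigma^*$ is a YH transformation), which is the abstract version of your informal Yosida--Hewitt split. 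So your plan is correct; just replace the transposition case analysis by the $x=1$ trick and the degenerate sub-cases you were worried about disappear.
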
 

\begin{remark}\label{remark4}
\cite{R2007} shows that under Axioms \ref{axiom1}-\ref{axiom5}, Axiom \ref{axiom6}, and Patience, the unique constant equivalent of $\succsim$ is represented by the criterion stated in Theorem \ref{corollary7}. Thus, Patience is equivalent to IFPIS.

By incorporating the Axiom of IOU into Theorem \ref{corollary7}, it leads to the existence of a unique purely finitely additive discount structure denoted as $\mu$, such that the constant equivalent of $\succsim$ can be expressed as $I(x)= \int x d\mu$. Likewise, if we further incorporate ISU and IOU into Proposition \ref{corollaryBanach}, the discount structure $\mu$ must be a Banach-Mazur limit. Note that both additive criteria are introduced by \cite{R2007}.
\end{remark}

Now, let us explore the implications of strengthening IFPIS into what we will call \textit{Invariance with respect to permutation of improving sequences}. The stronger axiom demands that improving sequences must remain unaltered under any permutation. However, this axiom cannot coexist with Axioms \ref{axiom1}-\ref{axiom5} and Axiom \ref{axiom6}.

\begin{permuinva}
For every permutation $\sigma:\mathbb{N}\to \mathbb{N}$ and $x,d\in l_{\infty}$, if $x+d\succsim x$, then $x+T_{\sigma}(d) \succsim x$.
\end{permuinva}

\begin{proposition}\label{corollary9}
There does not exist any constant equivalent that represents a binary relation on $l_{\infty}$ satisfying Axioms \ref{axiom1}-\ref{axiom5}, Axiom \ref{axiom6}, and IPIS.
\end{proposition}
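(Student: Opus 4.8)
The plan is to derive a contradiction by showing that IPIS, together with the structural axioms, forces the constant equivalent $I$ to coincide with $\inf_{t\in\mathbb N} x_t$ on a suitable class of sequences, and then exhibit a specific sequence where this conflicts with Monotonicity (part (2) of Axiom~\ref{axiom2}, namely $1\succ 0$) or with $1$-Lipschitz normalization. Concretely, by Proposition~\ref{pro1} (Axioms~\ref{axiom1}--\ref{axiom5}) together with ISU (Axiom~\ref{axiom6}), the relation $\succsim$ is represented by a unique weakly-increasing, concave, $1$-Lipschitz, normalized, translation-invariant, and positively homogeneous constant equivalent $I$; equivalently $I(x)=\min_{p\in D}\langle x,p\rangle$ for some non-empty closed convex $D\subseteq\Delta$. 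So the task reduces to showing that IPIS forces $D$ to be empty, or forces $D$ to consist of measures with contradictory properties.

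First I would unpack what IPIS says at the level of $I$. Taking $x=0$, the hypothesis $0+d\succsim 0$ means $I(d)\ge I(0)=0$, and the conclusion is $I(T_\sigma(d))\ge 0$ for every permutation $\sigma$ of $\mathbb N$ (not just finite ones). Since $I$ is translation invariant and positively homogeneous, this ``improving cone is permutation-invariant'' property at $0$ actually propagates: for arbitrary $x$, writing the hypothesis $I(x+d)\ge I(x)$ and using that $I$ is concave and the analysis from the $T^*$-variational machinery, the relevant consequence is that the support set $D$ must satisfy $T_\sigma^*(D)\subseteq \text{cone}(D)$ in the appropriate sense for every permutation $\sigma$. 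The key point is that $T_\sigma$ for an \emph{arbitrary} (possibly infinite-support) permutation is still a continuous positive linear operator on $l_\infty$ with a well-defined adjoint $T_\sigma^*$ on $ba(\mathbb N)$ by Lemma~\ref{lemdefadj}, and $T_\sigma^*$ permutes the coordinates of any countably additive measure while also acting on the purely finitely additive part.

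The heart of the argument is then a rigidity statement: the only non-empty closed convex subset $D$ of $\Delta$ that is invariant (up to the eigenvector normalization) under $T_\sigma^*$ for \emph{every} permutation $\sigma$ of $\mathbb N$ is forced to contain measures that are simultaneously ``spread out'' over all of $\mathbb N$ in an incompatible way. The cleanest route: pick any $p\in D$; if $p$ has a point mass $p_n=\varepsilon>0$ at some $n$, apply a permutation moving $n$ through an infinite sequence of distinct coordinates to show, via weak$^\star$ closedness of $D$ and a sliding/averaging argument, that $D$ must also contain the weak$^\star$ limit in which that mass escapes — but combining infinitely many such displaced copies and using convexity produces a purely finitely additive element, while applying an infinite-cycle permutation to \emph{that} and using closedness drives a further contradiction with $\mu(\mathbb N)=1$. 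Rather than chase this topological argument, I expect the paper's proof to use the sharper observation that IPIS applied to the test sequences $d=e_0-e_1$ and its infinitely-shifted images forces $I(e_n)=I(e_0)$ for all $n$ \emph{and} $I$ to be invariant under the unbounded shift, which combined with $1$-Lipschitz normalization ($I(1)=1$) and ISU forces $I(e_0)=0$; but then taking $d = 1 - \sum$-type sequences and permuting yields $I(1)\le 0$, contradicting $I(1)=1$, i.e.\ contradicting $1\succ 0$.

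The main obstacle is the passage from the axiom (stated pointwise on pairs $x,d$) to the clean statement about the representing set $D$, and in particular handling \emph{infinite} permutations: unlike the finite-permutation case (IFPIS, Theorem~\ref{corollary7}), an infinite permutation need not preserve the ``equal treatment'' structure, and one must be careful that $T_\sigma$ is still norm-continuous (it is, with operator norm $1$) so that Lemma~\ref{lemdefadj} applies and $T_\sigma^*$ is weak$^\star$-continuous on bounded sets. Once that is in hand, the contradiction itself is short: I would isolate a single permutation $\sigma$ (an infinite cyclic shift of a countably infinite subset of $\mathbb N$) and a single improving sequence whose $\sigma$-image is \emph{not} improving under any $I$ of the form $\min_{p\in D}\langle x,p\rangle$ with $D\subseteq\Delta$ non-empty, using that $\langle T_\sigma(d),p\rangle = \langle d, T_\sigma^*(p)\rangle$ can be made strictly negative while $\langle d,p\rangle\ge 0$ for all $p\in D$ — the existence of such a $d$ is guaranteed because $D$, being a proper weak$^\star$-compact subset of the infinite-dimensional $\Delta$, cannot be stable under the full symmetric group acting by $T_\sigma^*$. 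I would then note that dropping ISU (Axiom~\ref{axiom6}) does not rescue the situation, since the variational form $\min_{p\in D}\{\langle x,p\rangle + c(p)\}$ inherits the same obstruction on the effective domain of $c$.
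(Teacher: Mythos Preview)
Your proposal has a genuine gap: you set up the framework correctly (the maxmin representation $I(x)=\min_{p\in D}\langle x,p\rangle$ with $D\subseteq\Delta$ non-empty, the adjoint $T_\sigma^*$, and the observation that IPIS must constrain $D$), but you never actually \emph{execute} the contradiction. You sketch three candidate routes --- a topological ``mass escape'' argument, a test-sequence argument with $d=e_0-e_1$, and a ``single infinite-cycle permutation'' argument --- and abandon each in turn. The $e_0-e_1$ route in particular is circular: invoking IPIS at $x=0$ requires $I(e_0-e_1)\ge 0$, which is not granted a priori. And your final assertion that ``$D$, being a proper weak$^\star$-compact subset of the infinite-dimensional $\Delta$, cannot be stable under the full symmetric group'' is exactly the statement to be proved, not a step in its proof.

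The paper's argument supplies the missing idea, and it is short and combinatorial rather than topological. By Theorem~\ref{theorem2.3} applied to the family $(T_\sigma)_\sigma$, one obtains $D\subseteq\bigcap_\sigma F_\sigma$; since each $T_\sigma$ is a bijection on $l_\infty$ fixing constants, the only eigenvalue is $1$, so every $p\in D$ satisfies $\langle x,p\rangle=\langle x_\sigma,p\rangle$ for \emph{all} permutations $\sigma$ and all $x$. The contradiction is then: for such a $p$, any two disjoint infinite subsets $A,B\subseteq\mathbb N$ have $p(A)=p(B)$ (build a permutation swapping $A$ and $B$ via a bijection and fixing the rest). Pick $A$ with $A$ and $A^c$ both infinite, so $p(A)=p(A^c)=\tfrac12$; split $A=B\cup(A\setminus B)$ with both pieces infinite, so $p(B)=p(A\setminus B)=\tfrac14$; but $B$ and $B^c$ are both infinite, forcing $p(B)=\tfrac12$. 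Hence no such $p$ exists, $D=\emptyset$, and there is no constant equivalent. This measure-theoretic rigidity is the step your proposal gestures at but does not reach.
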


\begin{remark}

A more stringent form of the Patience axiom, which is \textit{Naive Patience}, asserts that a stream $x$ is equivalent to any of its permutations. This axiom characterizes an agent endowed with ``infinite" patience, where all points in time, regardless of their remoteness, hold equal significance. We have shown that, under Axioms \ref{axiom1}-\ref{axiom5} and Axiom \ref{axiom6}, Patience and IFPIS are equivalent (Remark \ref{remark4}). The same conclusion does not hold for IPIS and Naive patience. Indeed, consider a preference relation $\succsim$ on $l_{\infty}$ represented by the liminf criterion $I(x) = \lim \inf x_t = \min_{p \in \Delta} \langle x, p\rangle$ (see \cite{M1998} and \cite{R2007}). Clearly, $\succsim$ satisfies Axioms \ref{axiom1}-\ref{axiom5}, Axiom \ref{axiom6}, and Naive patience. Hence, by Proposition \ref{corollary9}, it becomes evident that Naive patience is a less demanding condition compared to IPIS.

In fact, it can be demonstrated that $\succsim$ does not meet the conditions of IPIS. Let us consider the following two sequences: $$x=(0,1,0,1,0,\dots) \text{ and } d=(1,-1,1,-1,1,\dots).$$ 
The sequence $x+d$ results in $(1,0,1,0,1,\dots)$, where $I(x+d) = I(d) = 0$. However, by introducing a permutation defined as $\sigma(2i) = 2i+1$ and $\sigma(2i+1) = 2i$ for all $i \in \mathbb{N}$, we transform $d$ into $T_{\sigma}(d) = (-1,1,-1,1,-1,\dots)$. Consequently, $x+T_{\sigma}(d)$ becomes $(-1,2,-1,2,-1,\dots)$. This results in $-1 = I(x+T_{\sigma}(d)) < I(x)$, clearly showing that the preference relation $\succsim$ does not satisfy IPIS.
\end{remark}

\section{Conclusion}

In this paper, we present a novel criterion, known as variational discounting, which offers a robust solution for addressing the wide range of discount rates that decision makers may consider. Additionally, we present a model designed to adapt to scenarios in which decision makers deem it plausible not to discount future utilities. Our approach not only encompasses and extends several established criteria found in the literature but also introduces new perspectives in this domain. Moreover, our representations are based on a class of invariance principles that are different from those in the literature on intertemporal choice. Rather than expressing invariance properties through transformations directly applied to a utility stream, we formulate them in terms of transformations applied to potential improvements from the stream.


\section*{Appendix} 
\appendix

\section{Preliminaries}

\subsection{Yosida-Hewitt decomposition theorem}
\begin{yhtheorem}
For every $\mu \in ba(\mathbb{N})$, there exists a unique pair $(\mu_c,\mu_p) \in ca(\mathbb{N}) \times pa(\mathbb{N})$ such that $\mu = \mu_c+\mu_p$.
\end{yhtheorem}

Yosida-Hewitt decomposition theorem says that each finitely additive measure can be decomposed into a sum of a countably additive measure and a purely finitely additive measure, and this decomposition is unique.

\subsection{Schauder–Tychonoff fixed-point theorem}  \label{schauder}

The Schauder–Tychonoff fixed-point theorem is a fundamental result in functional analysis and topology that deals with the existence of fixed points for a broad class of mappings. This theorem is a generalization of the Brouwer fixed-point theorem and is named after the mathematicians Juliusz Schauder and Andrey Tychonoff. The theorem can be stated as follows:
\begin{Schau}
 Let $X$ be a non-empty, compact, convex subset of a topological vector space (often a Banach space), and let $f: X \to X$ be a continuous mapping. Then, there exists at least one fixed point for the mapping $f$, i.e., there exists some $x$ in $X$ such that $f(x) = x$.   
\end{Schau}

In this paper, we apply this theorem when $X$ is a non-empty, compact, convex subset of $ba(\mathbb{N})$ endowed with the weak$^{\star}$ topology on $ba(\mathbb{N})$.

\subsection{Eigenvectors and eigenvalues}\label{eigen}

If a non-zero set function $\mu \in ba(\mathbb{N})$ and a real scalar $\lambda$ satisfy the equation $T^*(\mu) = \lambda \mu$, we designate $\mu$ as an \textit{eigenvector} and $\lambda$ as the corresponding \textit{eigenvalue} of the operator $T^*$. Specifically, we refer to $\mu$ as an eigenvector of $T^*$ associated with the eigenvalue $\lambda$, and conversely, $\lambda$ as the (unique) eigenvalue of $T^*$ associated with $\mu$.

We call $p$ a \textit{normalized eigenvector} of $T^*$ if $p \in \Delta$ is an eigenvector of $T^*$. We denote $\Delta (T^*)=\{p \in \Delta: \text{$p$ is an eigenvector of $T^*$}\}$ the set of all normalized eigenvectors of $T^*$. The following result shows that $\Delta (T^*)$ is non-empty and closed.



\begin{lemma}\label{remark1}
The set $\Delta (T^*)$ is non-empty and closed.
\end{lemma}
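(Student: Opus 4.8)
\textbf{Proof plan for Lemma \ref{remark1}.} The plan is to prove the two assertions separately, handling closedness by a routine weak$^\star$-continuity argument and non-emptiness by the Schauder--Tychonoff fixed-point theorem, exactly in the spirit of the proof sketch of Theorem \ref{theorem1}.

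\emph{Closedness.} Suppose $(p_d)$ is a net in $\Delta(T^*)$ converging in the weak$^\star$ topology to some $p \in ba(\mathbb{N})$. Since $\Delta$ is weak$^\star$-closed (it is cut out of the weak$^\star$-compact unit ball by the closed conditions $\langle 1_A, \mu\rangle + \langle 1_{A^c}, \mu\rangle = \langle 1, \mu\rangle$ for all $A$, together with $\langle 1, \mu\rangle = 1$ and positivity $\langle 1_A,\mu\rangle \geq 0$), we get $p \in \Delta$; in particular $p \neq 0$. For each $d$ there is a scalar $\lambda_d$ with $T^*(p_d) = \lambda_d p_d$. Applying both sides to the sequence $1 = (1,1,\dots) \in l_\infty$ gives $\langle T(1), p_d\rangle = \lambda_d \langle 1, p_d\rangle = \lambda_d$, so $\lambda_d = \langle T(1), p_d \rangle \to \langle T(1), p\rangle =: \lambda$ by definition of weak$^\star$ convergence. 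Then for every $x \in l_\infty$, $\langle x, T^*(p)\rangle = \langle T(x), p\rangle = \lim_d \langle T(x), p_d\rangle = \lim_d \langle x, T^*(p_d)\rangle = \lim_d \lambda_d \langle x, p_d\rangle = \lambda \langle x, p\rangle$, so $T^*(p) = \lambda p$ and $p$ is an eigenvector; hence $p \in \Delta(T^*)$. (One should double-check the degenerate subcase: if $\langle 1, p_d\rangle$ were allowed to be $0$ the normalization would fail, but membership in $\Delta$ forces $\langle 1, p_d\rangle = 1$, so this does not arise.)

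\emph{Non-emptiness.} Here I distinguish two cases according to whether $T^*$ annihilates some discount structure. If there exists $p \in \Delta$ with $T^*(p) = 0$, then $T^*(p) = 0 \cdot p$, so $p$ is an eigenvector (with eigenvalue $0$) and $p \in \Delta(T^*)$, and we are done. Otherwise $T^*(q) \neq 0$ for every $q \in \Delta$; since $T^*$ is positive, $T^*(q)$ is a nonzero positive measure, so $\|T^*(q)\| = \langle 1, T^*(q)\rangle > 0$ and the map $s: \Delta \to \Delta$, $s(q) = T^*(q)/\langle 1, T^*(q)\rangle$, is well defined (note $s(q) \in \Delta$ because $T^*(q) \geq 0$, $s(q)(\emptyset)=0$, finite additivity is inherited from $T^*(q)$, and $\langle 1, s(q)\rangle = 1$). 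The map $s$ is weak$^\star$-continuous on $\Delta$: $q \mapsto T^*(q)$ is weak$^\star$-continuous (for each $x$, $\langle x, T^*(q)\rangle = \langle T(x), q\rangle$ is weak$^\star$-continuous in $q$) and $q \mapsto \langle 1, T^*(q)\rangle$ is a strictly positive weak$^\star$-continuous scalar function on $\Delta$, so the quotient is weak$^\star$-continuous. Since $\Delta$ is a non-empty, convex, weak$^\star$-compact subset of $ba(\mathbb{N})$, the Schauder--Tychonoff fixed-point theorem yields $p \in \Delta$ with $s(p) = p$, i.e.\ $T^*(p) = \langle 1, T^*(p)\rangle\, p$, so $p$ is an eigenvector and $p \in \Delta(T^*)$.

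\emph{Main obstacle.} The routine-looking points that actually need care are: (i) verifying that $\Delta$ is weak$^\star$-compact and convex (compactness via Banach--Alaoglu applied to the order interval / unit ball of $ba(\mathbb{N})$, plus checking the defining conditions are weak$^\star$-closed) — this underlies both halves of the argument; and (ii) the continuity of the normalized map $s$, where the denominator must be shown to be bounded away from zero on the compact set $\Delta$ so that division is genuinely continuous (in the non-degenerate case the denominator is a continuous strictly positive function on a compact set, hence bounded below by a positive constant). Everything else is bookkeeping with the adjoint identity $\langle x, T^*(\mu)\rangle = \langle T(x), \mu\rangle$ from Lemma \ref{lemdefadj}.
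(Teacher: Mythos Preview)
Your proof is correct and follows essentially the same approach as the paper: the same two-case split for non-emptiness (either $T^*$ kills some $p\in\Delta$, or one normalizes and applies Schauder--Tychonoff on $\Delta$), and the same net argument for closedness. Your closedness step is in fact slightly cleaner than the paper's, since you observe directly that $\lambda_d=\langle T(1),p_d\rangle\to\langle T(1),p\rangle$, whereas the paper only argues that $(\lambda_d)$ is bounded and passes to a convergent subnet.
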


\begin{proof}
Let $T^*: ba(\mathbb{N}) \to ba(\mathbb{N})$ be a continuous positive linear transformation. Now, if there exists $\mu \in \Delta$ such that $T^*(\mu)=0$, then $\mu$ is a normalized eigenvalue of $T^*$. Otherwise, assume that, for every $\mu \in \Delta$, $T^*(\mu) \neq 0$. Let $\mu \in \Delta$, then $T^* (\mu) \geq 0$ since $T^*$ is positive, and $T^* (\mu) \neq 0$. Observe that $\langle 1,T^*(\mu)\rangle$ is the variation norm of $ T^*(\mu)$, so $\langle 1,T^*(\mu)\rangle > 0$. Hence, we can define the operator $f: \Delta \to ba(\mathbb{N})$ as follows:
$$f(\mu) = \frac{T^*(\mu)}{\langle 1,T^*(\mu)\rangle}.$$ Moreover, for every $\mu \in \Delta$, $f(\mu) \geq 0$ and $\langle 1, f(\mu) \rangle = 1$, which implies that $f(\mu)\in \Delta$. Thus, we have $f: \Delta \to \Delta$.

By definition, the mapping $T^*$ is continuous on $\Delta$; and it is clear that the function $\mu \in ba_{+}(\mathbb{N}) \to \langle 1, \mu \rangle$ is also continuous from $ba_{+}(\mathbb{N})$ (endowed with the relative topology induced by the weak$^{\star}$ topology on $ba(\mathbb{N})$) to $\mathbb{R}$, where $ba_{+}(\mathbb{N})= \{\mu \in ba(\mathbb{N}): \mu \geq 0\}$. Thus, $f$ is a continuous function from a compact convex set to itself. Hence, from Schauder–Tychonoff fixed-point theorem (see Appendix \ref{schauder}), there exists $\mu \in \Delta$ such that $f(\mu) = \mu$, which is equivalent to $T^*(\mu)=\langle 1,T^*(\mu)\rangle \mu$, i.e., $\mu$ is a normalized eigenvector of $T^*$.  Thus, the set $\Delta (T^*)$ is non-empty.


Now, let $(p_d)_{d\in D}$ be a net in $\Delta (T^*)$ such that $p_d$ converges to $p \in \Delta$ (because $\Delta$ is compact). By definition, for every $d\in D$, there exists $\lambda_d \geq 0$ such that $T^*(p_d) = \lambda_d p_d$. Since $T^*$ is continuous, we have $T^*(p_d)$ converging to $T^*(p)$. Now, considering $\langle 1, T^*(p_d) \rangle = \langle 1, \lambda_d p_d \rangle = \lambda_d$, and noting that the set $\{\langle 1, T^*(p) \rangle: p \in \Delta\}$ is bounded, we conclude that $(\lambda_d)_{d\in D}$ is bounded. Consequently, it admits a subnet that converges to a point $\lambda \in [0, \infty)$. Passing to the limit, we obtain $T^*(p) = \lambda p$, which implies that $p \in \Delta (T^*)$. Therefore, the set $\Delta (T^*)$ is closed.
\end{proof}

The following definition and results introduce a practical way to simplify the computation of eigenvectors of adjoint transformations. 

\begin{definition} 
A linear transformation $T^*$ is called a Yosida-Hewitt transformation (YH transformation) if, for every $\mu \in ca(\mathbb{N})$, $T^*(\mu) \in ca(\mathbb{N})$, and, for every $\mu \in pa(\mathbb{N})$, $T^*(\mu) \in pa(\mathbb{N})$.
\end{definition} 
A YH transformation maps every countably additive measure (resp. purely finitely additive measure) to a countably additive measure (resp. purely finitely additive measure). We use the notation $\mathbb{I}_A$ to represent the sequence defined as $\mathbb{I}_A=xAy$, where $x=1$ and $y=0$.

\begin{lemma} \label{criteriu} 
If (i) $T (\mathbb{I}_{A_n})$ converges to 0 whenever $(A_n)\downarrow \emptyset$ (where $A_n \subseteq \mathbb{N}$ for all $n\in \mathbb{N}$), and (ii) for every $n\in \mathbb{N}$, there exists a finite set $E_n \subseteq \mathbb{N}$ such that $T (\mathbb{I}_{\{n\}})$ has its support in $E_n$, then its adjoint $T^*$ is a YH transformation.  
\end{lemma}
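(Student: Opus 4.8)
The plan is to verify the two defining properties of a YH transformation for $T^*$ under hypotheses (i) and (ii): namely that $T^*$ maps $ca(\mathbb{N})$ into $ca(\mathbb{N})$ and $pa(\mathbb{N})$ into $pa(\mathbb{N})$. I would first set up notation: write $e_n=\mathbb{I}_{\{n\}}$ for the indicator of a singleton (an element of $l_\infty$), and recall that for $\mu\in ba(\mathbb{N})$ we have $(T^*\mu)(\{n\})=\langle \mathbb{I}_{\{n\}},T^*\mu\rangle=\langle T(\mathbb{I}_{\{n\}}),\mu\rangle$ by the definition of the adjoint. Analogously, $(T^*\mu)(A)=\langle T(\mathbb{I}_A),\mu\rangle$ for any $A\subseteq\mathbb{N}$.

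Step 1 (purely finitely additive part). Let $\mu\in pa(\mathbb{N})$, so $\mu(\{k\})=0$ for every $k$. Fix $n\in\mathbb{N}$. By hypothesis (ii) there is a finite set $E_n$ such that $T(\mathbb{I}_{\{n\}})$ is supported in $E_n$, i.e.\ $T(\mathbb{I}_{\{n\}})=\sum_{k\in E_n}\alpha_k\,\mathbb{I}_{\{k\}}$ for scalars $\alpha_k$ (the coordinates of the sequence $T(\mathbb{I}_{\{n\}})$, which vanish off $E_n$). Then $(T^*\mu)(\{n\})=\langle T(\mathbb{I}_{\{n\}}),\mu\rangle=\sum_{k\in E_n}\alpha_k\,\mu(\{k\})=0$. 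Since $n$ was arbitrary, $T^*\mu$ vanishes on all singletons, hence $T^*\mu\in pa(\mathbb{N})$.

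Step 2 (countably additive part). Let $\mu\in ca(\mathbb{N})$. A finitely additive $\nu\in ba(\mathbb{N})$ is countably additive iff it is \emph{continuous along decreasing sequences of sets}, i.e.\ $\nu(A_n)\to 0$ whenever $(A_n)\downarrow\emptyset$ (this is the standard characterization for nonnegative $\nu$; for signed $\nu$ apply it to the Jordan components, or equivalently use that $T^*\mu$ has bounded variation and the condition holds in absolute value). Take any sequence $(A_n)$ of subsets of $\mathbb{N}$ with $(A_n)\downarrow\emptyset$. Then $(T^*\mu)(A_n)=\langle \mathbb{I}_{A_n},T^*\mu\rangle=\langle T(\mathbb{I}_{A_n}),\mu\rangle$. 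By hypothesis (i), $T(\mathbb{I}_{A_n})\to 0$ in $l_\infty$ (with the sup norm), and $\mu$ is a continuous linear functional on $(l_\infty,\|\cdot\|_\infty)$ since $\mu\in ba(\mathbb{N})$; hence $\langle T(\mathbb{I}_{A_n}),\mu\rangle\to 0$. Therefore $(T^*\mu)(A_n)\to 0$ for every $(A_n)\downarrow\emptyset$, which gives $T^*\mu\in ca(\mathbb{N})$. (If one prefers to argue via nonnegativity: decompose $\mu=\mu^+-\mu^-$ with $\mu^\pm\in ca_+(\mathbb{N})$; the same computation shows $T^*\mu^\pm$ are each continuous along $(A_n)\downarrow\emptyset$ in the required one-sided sense once combined with positivity of $T^*$, so they are countably additive, and hence so is their difference.)

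Combining Steps 1 and 2, $T^*$ maps $ca(\mathbb{N})$ into $ca(\mathbb{N})$ and $pa(\mathbb{N})$ into $pa(\mathbb{N})$, so $T^*$ is a YH transformation, which is the claim. The main subtlety — the only place that requires care — is Step 2: invoking the correct characterization of countable additivity for a possibly signed measure of bounded variation, and making sure the passage $\langle T(\mathbb{I}_{A_n}),\mu\rangle\to 0$ is legitimate, which it is precisely because hypothesis (i) gives \emph{norm} convergence $T(\mathbb{I}_{A_n})\to 0$ in $l_\infty$ and every element of $ba(\mathbb{N})$ is norm-continuous on $l_\infty$. Step 1 is a routine finite-sum computation once the support condition (ii) is written out coordinatewise.
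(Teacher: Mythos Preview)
Your proof is correct and follows essentially the same approach as the paper's: both use the adjoint identity $(T^*\mu)(A)=\langle T(\mathbb{I}_A),\mu\rangle$ to reduce the $ca$ case to hypothesis (i) via continuity along decreasing sets, and the $pa$ case to hypothesis (ii) via the finite-support condition. You simply spell out more detail (the explicit finite-sum computation in Step 1 and the signed-measure caveat in Step 2) than the paper, which leaves these implicit.
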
 

\begin{proof}
Assume that the assumptions in the lemma are satisfied, and consider $\mu \in ca(\mathbb{N})$. If $(A_n)\downarrow \emptyset$, then $\langle \mathbb{I}_{A_n},T^*(\mu) \rangle=\langle T(\mathbb{I}_{A_n}),\mu \rangle$ converges to zero when $n$ tends to infinity, which proves that $T^*(\mu) \in ca(\mathbb{N})$. Now, let $\mu \in pa(\mathbb{N})$, then for every $n \in \mathbb{N}$, $\langle \mathbb{I}_{\{n\}},T^*(\mu) \rangle=\langle T(\mathbb{I}_{\{n\}}),\mu \rangle=0$ from assumption (ii), which finally ends the proof.
\end{proof}

We denote $E_{\lambda} (T^*)$ the set of eigenvectors of $T^*$ associated to $\lambda$. Moreover, we use $E^{ca}_{\lambda}(T^*)$ and $E^{pa}_{\lambda}(T^*)$ to denote the sets of eigenvectors of $T^*$ associated to $\lambda$ which are countably additive and purely finitely additive, respectively. Clearly, $E^{ca}_{\lambda}(T^*)$ and $E^{pa}_{\lambda}(T^*)$ are two topological vector spaces. We use $A \oplus B$ to denote the direct sum of two vector spaces $A$ and $B$.

\begin{lemma} \label{lemma3}
If $T^*$ is a YH transformation, then for every eigenvalue $\lambda$ of $T^*$, $E_{\lambda}(T^*) = E^{ca}_{\lambda}(T^*) \oplus E^{pa}_{\lambda}(T^*)$.
\end{lemma}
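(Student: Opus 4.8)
The plan is to show the two inclusions $E^{ca}_\lambda(T^*) \oplus E^{pa}_\lambda(T^*) \subseteq E_\lambda(T^*)$ and $E_\lambda(T^*) \subseteq E^{ca}_\lambda(T^*) + E^{pa}_\lambda(T^*)$, together with the fact that the sum is direct. The first inclusion is immediate: if $\mu_c \in E^{ca}_\lambda(T^*)$ and $\mu_p \in E^{pa}_\lambda(T^*)$, then by linearity of $T^*$ we have $T^*(\mu_c + \mu_p) = \lambda \mu_c + \lambda \mu_p = \lambda(\mu_c + \mu_p)$, so $\mu_c + \mu_p \in E_\lambda(T^*)$ (one should note the degenerate case where $\mu_c + \mu_p = 0$ forces $\mu_c = \mu_p = 0$ by uniqueness of the Yosida--Hewitt decomposition, so this causes no issue). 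Directness of the sum is also a consequence of Yosida--Hewitt: $ca(\mathbb{N}) \cap pa(\mathbb{N}) = \{0\}$, hence $E^{ca}_\lambda(T^*) \cap E^{pa}_\lambda(T^*) = \{0\}$.

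The substance is the reverse inclusion. First I would take an arbitrary eigenvector $\mu \in E_\lambda(T^*)$, so $T^*(\mu) = \lambda \mu$, and apply the Yosida--Hewitt decomposition theorem to write $\mu = \mu_c + \mu_p$ with $\mu_c \in ca(\mathbb{N})$ and $\mu_p \in pa(\mathbb{N})$, this decomposition being unique. Now apply $T^*$: since $T^*$ is a YH transformation, $T^*(\mu_c) \in ca(\mathbb{N})$ and $T^*(\mu_p) \in pa(\mathbb{N})$. Therefore $T^*(\mu) = T^*(\mu_c) + T^*(\mu_p)$ is itself a decomposition of $T^*(\mu)$ into a countably additive part plus a purely finitely additive part. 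On the other hand, $T^*(\mu) = \lambda\mu = \lambda\mu_c + \lambda\mu_p$ is also such a decomposition, with $\lambda\mu_c \in ca(\mathbb{N})$ and $\lambda\mu_p \in pa(\mathbb{N})$. By the uniqueness clause of the Yosida--Hewitt theorem, the two decompositions must agree component by component: $T^*(\mu_c) = \lambda\mu_c$ and $T^*(\mu_p) = \lambda\mu_p$. Hence $\mu_c \in E^{ca}_\lambda(T^*)$ and $\mu_p \in E^{pa}_\lambda(T^*)$ (again keeping in mind the convention that the zero vector lies trivially in both eigenspaces), which gives $\mu = \mu_c + \mu_p \in E^{ca}_\lambda(T^*) \oplus E^{pa}_\lambda(T^*)$.

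I do not anticipate a genuine obstacle here — the proof is essentially a one-line application of the uniqueness in the Yosida--Hewitt decomposition once one observes that the YH property of $T^*$ guarantees $T^*$ respects the $ca$/$pa$ splitting. The only point requiring a little care is bookkeeping around the zero vector: strictly speaking an "eigenvector" is nonzero, so when $\mu_c = 0$ or $\mu_p = 0$ one is really asserting membership of $0$ in the corresponding eigenspace $E^{ca}_\lambda(T^*)$ or $E^{pa}_\lambda(T^*)$, which is consistent since these were defined as topological vector spaces (hence contain $0$). With that understood, the argument is complete.
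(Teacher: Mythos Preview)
Your proof is correct and follows essentially the same approach as the paper: decompose an eigenvector via Yosida--Hewitt, use the YH property of $T^*$ to see that $T^*(\mu_c)+T^*(\mu_p)$ is another such decomposition of $\lambda\mu_c+\lambda\mu_p$, and invoke uniqueness to match components. If anything, your write-up is more careful than the paper's, which omits the easy inclusion and the bookkeeping about the zero vector.
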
 

\begin{proof}
Assume that $T^*$ is a YH transformation. Let $\lambda$ be an eigenvalue of $T^*$, and let $\mu \in ba(\mathbb{N})$ be an eigenvector of $T^*$ associated to $\lambda$. From Yosida-Hewitt decomposition theorem, $\mu=\mu_c+\mu_p$ where $\mu_c \in ca(\mathbb{N})$ and $\mu_p \in pa(\mathbb{N})$. We also have $T^*(\mu_c)+T^*(\mu_p) = T^*(\mu) = \lambda \mu_c+\lambda \mu_p$. Thus, from the uniqueness of the decomposition in Yosida-Hewitt theorem and the fact that $T^*$ is a YH transformation, we get that $T^*(\mu_c)=\lambda \mu_c$ and $T^*(\mu_p)=\lambda \mu_p$, i.e., $\mu_c \in E^{ca}_{\lambda}(T^*)$ and $\mu_p \in E^{pa}_{\lambda}(T^*)$. Finally, the lemma is proved because $E^{ca}_{\lambda}(T^*) \cap E^{pa}_{\lambda}(T^*)=\emptyset$.
\end{proof} 

\subsection{Generalized Moreau duality}\label{moreau}
In the context of a topological vector space $X$, with a non-empty subset $F$ of $\mathbb{R}^{X}$, and a function $\varphi: X\rightarrow \mathbb{R}$, we define the \textit{$F$-superdifferential} of $\varphi$ at a point $x\in X$ as follows:
\begin{align*}
\partial^{F}\varphi(x)=\{f \in F : \varphi (y)-\varphi(x)\leq \langle y,f\rangle-\langle x,f\rangle \text{ for every $y \in X$}\},
\end{align*}
where $\langle x,f \rangle = f(x)$. The \textit{$F$-conjugate} of $\varphi$, $\varphi^{F}:F\rightarrow [-\infty,\infty]$, is defined by 
\begin{align*}
\varphi^{F} (f)=\inf_{x \in X}\{\langle x,f \rangle -\varphi(x)\}.
\end{align*} 
The forthcoming lemma will play a central role in the proof of our main results:

\begin{lemma}\label{lem2}
If $\partial^{F}\varphi(x) \neq \emptyset$ for every $x \in X$, then 
\begin{align*}
&\varphi(x)=\min_{f\in F}\{\langle x,f \rangle - \varphi^{F}(f)\}, \text{ and} \\
&\partial^{F}\varphi(x) = \argmin_{f\in F}\{\langle x,f \rangle - \varphi^{F}(f)\}
\end{align*}
for every $x\in X$. Furthermore, $\varphi^{F}$ is the maximal function among all functions $R:F\rightarrow \mathbb{R}\cup \{-\infty\}$ satisfying
\begin{align*}
\varphi(x)=\min_{f\in F}\{\langle x,f \rangle - R(f)\}.
\end{align*}
\end{lemma}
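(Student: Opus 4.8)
The plan is to prove the generalized Moreau duality (Lemma \ref{lem2}) by following the classical Fenchel–Moreau strategy, adapted to an arbitrary constraint set $F \subseteq \mathbb{R}^X$ rather than the full dual space. First I would fix $x \in X$ and establish the \emph{Fenchel–Young-type inequality}: for every $f \in F$ and every $z \in X$, the definition of $\varphi^F$ gives $\varphi^F(f) \le \langle z, f\rangle - \varphi(z)$, hence $\varphi(z) \le \langle z, f\rangle - \varphi^F(f)$. Taking the infimum over $f \in F$ on the right yields $\varphi(x) \le \inf_{f \in F}\{\langle x,f\rangle - \varphi^F(f)\}$. This direction uses nothing beyond the definitions.

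For the reverse inequality and the fact that the infimum is attained, I would invoke the hypothesis $\partial^F \varphi(x) \ne \emptyset$. Pick $f_0 \in \partial^F\varphi(x)$. By definition of the $F$-superdifferential, $\varphi(y) - \varphi(x) \le \langle y, f_0\rangle - \langle x, f_0\rangle$ for all $y \in X$, which rearranges to $\langle y, f_0\rangle - \varphi(y) \ge \langle x, f_0\rangle - \varphi(x)$ for all $y$; taking the infimum over $y$ gives $\varphi^F(f_0) \ge \langle x, f_0\rangle - \varphi(x)$, i.e. $\langle x, f_0\rangle - \varphi^F(f_0) \le \varphi(x)$. Combined with the Fenchel–Young inequality (which gives the opposite), we get $\langle x, f_0\rangle - \varphi^F(f_0) = \varphi(x)$, so $f_0$ attains the infimum and $\varphi(x) = \min_{f \in F}\{\langle x,f\rangle - \varphi^F(f)\}$. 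The same chain of equalities shows that $f_0 \in \argmin_{f\in F}\{\langle x,f\rangle - \varphi^F(f)\}$; conversely, if $f_1$ attains the minimum, then $\langle x,f_1\rangle - \varphi^F(f_1) = \varphi(x)$, and unwinding $\varphi^F(f_1) \le \langle y,f_1\rangle - \varphi(y)$ for all $y$ gives exactly the superdifferential inequality, so $f_1 \in \partial^F\varphi(x)$. This proves $\partial^F\varphi(x) = \argmin_{f\in F}\{\langle x,f\rangle - \varphi^F(f)\}$.

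For the maximality statement, suppose $R : F \to \mathbb{R}\cup\{-\infty\}$ satisfies $\varphi(x) = \min_{f\in F}\{\langle x,f\rangle - R(f)\}$ for every $x$. Then for every $f \in F$ and every $x \in X$ we have $\varphi(x) \le \langle x,f\rangle - R(f)$, hence $R(f) \le \langle x,f\rangle - \varphi(x)$; taking the infimum over $x$ yields $R(f) \le \varphi^F(f)$, which is maximality. I would also note that the minimum defining $\varphi$ via $\varphi^F$ is attained (from the previous paragraph), so $\varphi^F$ itself is an admissible $R$, confirming it is the maximal one.

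I do not anticipate a serious obstacle here: the result is essentially a bookkeeping exercise with the definitions, and the only substantive hypothesis — nonemptiness of the $F$-superdifferential everywhere — is precisely what upgrades the generic inequality $\varphi \le (\varphi^F)$-biconjugate into an equality with attainment. The one point requiring a little care is keeping straight the sign conventions, since the paper uses a \emph{super}differential and \emph{sup}-type conjugate (appropriate for concave $\varphi$) rather than the more familiar convex/subdifferential versions; I would double-check that each inequality flips in the direction dictated by these conventions, but no topological or compactness argument is needed for this particular lemma.
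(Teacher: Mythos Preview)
Your proof is correct. The paper itself does not give an argument for this lemma; it simply refers the reader to Lemma 4 in \cite{C2014}. Your write-up supplies exactly the self-contained Fenchel--Moreau-type argument that this citation is standing in for: the Fenchel--Young inequality $\varphi(x)\le \langle x,f\rangle-\varphi^F(f)$ from the definition of $\varphi^F$, the reverse inequality and attainment from any $f_0\in\partial^F\varphi(x)$, the two-sided inclusion identifying $\partial^F\varphi(x)$ with the $\argmin$, and the maximality of $\varphi^F$ among admissible $R$ via $R(f)\le\langle x,f\rangle-\varphi(x)$ for all $x$. Every step checks against the paper's sign conventions ($\varphi^F(f)=\inf_x\{\langle x,f\rangle-\varphi(x)\}$ and the superdifferential inequality $\varphi(y)-\varphi(x)\le\langle y,f\rangle-\langle x,f\rangle$), and no topological hypothesis is needed because attainment is delivered directly by the assumed nonemptiness of $\partial^F\varphi(x)$.
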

\begin{proof}
See Lemma 4 in \cite{C2014}.
\end{proof}

When $F = X'$ (the topological dual of $X$), the operator $\partial^{F} \varphi$ is referred to as the \textit{superdifferential} of $\varphi$ and is denoted by $\partial \varphi$. Additionally, $\varphi^{F}$ is termed the \textit{Fenchel conjugate} of $\varphi$ and is denoted as $\varphi^{\star}$. Similarly, the conjugate of a function $\phi: X' \rightarrow \mathbb{R}$ is the function $\phi^{\star}: X \rightarrow [-\infty, \infty]$, defined as: $$\phi ^{\star} (x) = \inf_{x' \in X'} \{\langle x,x' \rangle - \phi (x')\}.$$ The following result is a standard theorem in duality theory:
\begin{bicon}
Let $X$ be a Hausdorff locally convex space. For any function $\varphi : X \rightarrow \mathbb{R}$, $\varphi = \varphi^{\star \star}$ if and only if $\varphi$ is a upper semicontinuous and concave function.
\end{bicon}

\section{Discussion on uniqueness of the cost function}\label{appenunique}

Let $F= \{(1-\delta)(1,\delta,\delta^2,\dots): \delta \in [0,1)\}$, and let $D$ be the closed convex hull of $F$. Suppose that the preference relation $\succsim$ can be represented using the following two variational discounting representations:
$$I(x)=\min_{\delta \in [0,1)} \bigg\{(1-\delta) \sum_{t=0}^{+\infty} \delta^t  x_t + c_1 (\delta)\bigg\}$$ and $$I(x)=\min_{\delta \in [0,1)} \bigg\{(1-\delta) \sum_{t=0}^{+\infty} \delta^t  x_t + c_2(\delta)\bigg\}.$$
For $i\in \{1,2\}$, define $\bar{c}_i: D \to [0,\infty]$ as $\bar{c}_i ((1-\delta)(1,\delta,\delta^2,\dots)) = c_i(\delta)$ for every $\delta \in [0,1)$, and $\bar{c}_i (p) = \infty$ for every $p \in D \setminus F$. Then we can prove that $\overline{\co} (\bar{c}_1) = \overline{\co} (\bar{c}_2)$, where $\overline{\co} (\bar{c}_i)$ is the lower semicontinuous convex envelope of $\bar{c}_i$ for $i\in \{1,2\}$.

Indeed, define $R_i=-\overline{\co} (\bar{c}_i)$, and define $\bar{R}_i :ba(\mathbb{N})\rightarrow \mathbb{R}\cup \{-\infty\}$ as follows: $\bar{R}_i(\mu)=R_i(\mu)$ for every $\mu \in D$, and $\bar{R}_i(\mu)= -\infty$ for every $\mu \not\in D$. Clearly, $\bar{R}_i$ is a weakly$^{\star}$ upper semicontinuous and concave function, and we have $$I(x) = \min_{\mu\in ba(\mathbb{N})}\{\langle x,\mu\rangle - \bar{R}_i(\mu)\}= \bar{R}_i^{\star}(x).$$
By Biconjugation theorem, $\bar{R}_i^{\star \star} =\bar{R}_i$, which implies that, for every $\mu \in D$, $$I^{D}(\mu) = \inf_{x \in l_{\infty}}\{\langle x,\mu\rangle - {I}(x)\}= \inf_{x \in l_{\infty}}\{\langle x,\mu\rangle - {\bar{R}_i^{\star}}(x)\}= \bar{R}_i^{\star \star}(\mu) =\bar{R}_i(\mu).$$ Thus, we can conclude that $\overline{\co} (\bar{c}_1) = \overline{\co} (\bar{c}_2)$.


\section{Proofs for Section \ref{section3}}

\subsection{Proof of Theorem \ref{corollary1}}\label{proof-coro1}

We use $l_1$ to denote the space of absolutely summable sequences, endowed with the norm $\| x \|_1 = \sum_{t=0}^{\infty} |x_t|$. The norm dual of $(l_1,\|.\|_1)$ is $l_{\infty}$ under the duality $\langle x,y \rangle = \sum_{t=0}^{\infty} x_t y_t$. For any $\mu \in ca(\mathbb{N})$, it can be uniquely represented by an element $x \in l_1$ where $x_t = \mu(\{t\})$. Conversely, given $x \in l_1$, it uniquely defines a measure $\mu \in ca(\mathbb{N})$ such that $\mu(A) = \sum_{t \in A} x_t$ for every $A \subseteq \mathbb{N}$. In the proof, we will utilize the notation $p=(p_0,p_1,p_2,\dots) \in l_1$ to represent a point in $ca(\mathbb{N})$.



\begin{part1}
Assume that $\succsim$ satisfies Axioms \ref{axiom1}-\ref{axiom8} and IDIS. By Proposition \ref{pro1}, there exists a unique weakly-increasing, concave, $1$-lipschitz, normalized, and translation invariant constant equivalent $I:l_{\infty}\rightarrow \mathbb{R}$ that represents $\succsim$. 

Let $F= \{(1-\delta)(1,\delta,\delta^2,\dots): \delta \in [0,1)\}$. From Lemma \ref{lem2}, in order to prove $I(x)=\min_{p \in F} \{\langle x,p \rangle +c(p)\}$ for some function $c:F \to \mathbb{R}\cup \{\infty\}$, we only have to prove that $\partial^{F} I(x) \neq \emptyset$ for every $x \in l^{\infty}$.

Let $x\in l_{\infty}$. We have $\partial ^{F}I(x) = \partial I(x) \cap F$, where $$\partial I(x)= \{\mu \in ba(\mathbb{N}): I(y)-I(x)\leq \langle y-x,\mu \rangle \text{ for every $y \in l_{\infty}$} \}$$ is the superdifferential of $I$ at $x$. Since $x$ is in the interior of the domain of the proper\footnote{$I$ is proper by definition because it does not take infinite values.} concave continuous function $I$, we have $\partial I(x) \neq \emptyset$ (see Theorem 7.12 in \cite{AB2006}). 

\begin{lemma}
$\partial I (x) \subseteq \Delta^{\sigma}$.
\end{lemma}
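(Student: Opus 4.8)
The plan is to show that any $\mu\in\partial I(x)$ is (i) a positive functional summing to $1$ — i.e. $\mu\in\Delta$ — and (ii) countably additive, i.e. $\mu\in ca(\mathbb{N})$, which together give $\mu\in\Delta^{\sigma}$.

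First I would establish $\mu\in\Delta$ using only Axioms \ref{axiom1}--\ref{axiom5} via the properties of $I$ from Proposition \ref{pro1}. For positivity: if $y\geq 0$ then $I(x+y)\geq I(x)$ by monotonicity of $I$, so the superdifferential inequality $I(x+y)-I(x)\leq\langle y,\mu\rangle$ forces $\langle y,\mu\rangle\geq 0$ for all $y\geq 0$, hence $\mu\geq 0$. For the normalization $\langle 1,\mu\rangle=1$: apply the superdifferential inequality at $y=x+\theta$ and $y=x-\theta$ for $\theta>0$; translation invariance gives $I(x\pm\theta)=I(x)\pm\theta$, so $\theta\leq\langle\theta,\mu\rangle=\theta\langle 1,\mu\rangle$ and $-\theta\leq-\theta\langle 1,\mu\rangle$, yielding $\langle 1,\mu\rangle=1$. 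Combined with $\mu\geq 0$ and finite additivity of every element of $ba(\mathbb{N})$, this is exactly $\mu\in\Delta$.

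The main work is the countable additivity, and this is where Monotone continuity (Axiom \ref{axiom8}) enters — this is the step I expect to be the real obstacle. The standard route: a positive finitely additive $\mu$ with $\mu(\mathbb{N})=1$ is countably additive iff it is \emph{continuous at $\emptyset$}, i.e. $\mu(E_n)\to 0$ whenever $(E_n)\downarrow\emptyset$. So I would fix a decreasing sequence $(E_n)\downarrow\emptyset$ and show $\mu(E_n)\to 0$. The idea is to compare $I$ at the point $x$ against perturbations supported on $E_n$. Concretely, since $1\succ 0$, translation invariance lets us normalize so that $I(x) > I(x) - \varepsilon$ and pick a constant $k$ large enough that the "large on $E_n$" sequence $k\,\mathbb{I}_{E_n} + x$ differs from $x$ by a jump of size $k$ exactly on $E_n$. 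Monotone continuity (applied after a translation so that $x-I(x)+\eta \succ 0$ for small $\eta>0$, or more precisely applied to a suitable strictly-positive stream) guarantees that for $n$ large, shrinking the perturbation to $E_{n}$ keeps it strictly above the relevant constant; translating this preference statement through $I$ and the superdifferential inequality bounds $k\,\mu(E_n)$ by a fixed quantity, so $\mu(E_n)\leq C/k$, and letting $k\to\infty$ (or choosing $k$ first and $n$ accordingly) drives $\mu(E_n)$ to $0$.

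I would carry this out in the order: (1) positivity of $\mu$; (2) $\langle 1,\mu\rangle=1$, concluding $\mu\in\Delta$; (3) reduce countable additivity to continuity at $\emptyset$ (cite the Yosida–Hewitt framework / a standard $ba$ fact); (4) use Axiom \ref{axiom8} together with the superdifferential inequality to prove $\mu(E_n)\to 0$ for every $(E_n)\downarrow\emptyset$. Step (4) is delicate because one must translate a preference-level axiom (stated via $\succ$ and the $xAy$ splicing notation $kE_nx$) into an inequality on $\langle\cdot,\mu\rangle$; the key technical care is choosing the reference constant $\theta$ and the scale $k$ so that the hypothesis $x\succ\theta$ of Monotone continuity is met for a translate of $x$ while the conclusion $kE_{n_0}x\succ\theta$ still yields a usable bound on $\mu(E_{n_0})$. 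This mirrors the classical argument (Arrow, Villegas) that Monotone continuity forces countable additivity of the representing measure, and it is essentially identical in spirit to the analogous lemmas in \cite{M2006} and \cite{CMMJ2005}.
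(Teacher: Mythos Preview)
Your plan for steps (1)--(3) is correct and matches the paper's argument almost verbatim: positivity of $\mu$ from monotonicity of $I$, normalization $\langle 1,\mu\rangle=1$ from translation invariance, and the reduction of countable additivity to continuity at $\emptyset$.

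Step (4) has the right ingredients but your sketch is confused about the direction of the perturbation. You speak of making the sequence ``large on $E_n$'' with $k$ large positive and then sending $k\to\infty$; but if you \emph{add} $k\mathbb{I}_{E_n}$ to $x$, the superdifferential inequality $I(x+k\mathbb{I}_{E_n})-I(x)\leq k\,\mu(E_n)$ only bounds the left side from above, which does not force $\mu(E_n)$ to be small. You need to push $x$ \emph{down} on $E_n$. The paper's execution is also tighter in that it avoids any limit in $k$: one fixes a \emph{single} value $k=\inf_n x_n-1$ (so $k<x_i$ for all $i$), applies the superdifferential inequality at $y=x-\mathbb{I}_{E_n}$, and uses $kE_nx\leq x-\mathbb{I}_{E_n}$ to obtain
\[
I(kE_nx)-I(x)\;\leq\;I(x-\mathbb{I}_{E_n})-I(x)\;\leq\;-\langle\mathbb{I}_{E_n},\mu\rangle.
\]
Monotone continuity, applied with $\theta=I(x)-\tfrac{1}{m}$ for each $m\geq 1$ (noting $x\succ\theta$), yields an $n_0$ with $I(kE_{n_0}x)>I(x)-\tfrac{1}{m}$; since $kE_nx$ is increasing in $n$ and bounded above by $x$, this gives $I(kE_nx)\to I(x)$, hence $\mu(E_n)\to 0$. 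No preliminary translation of $x$ is needed, and $k$ stays fixed throughout --- the limit is purely in $n$.
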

\begin{proof}
We first prove that $\partial I(x) \subseteq \Delta$. Let $p \in \partial I(x)$, then we have $I(x+1)-I(x)\leq \langle 1,p \rangle$ and $I(x-1)-I(x)\leq \langle -1,p \rangle$. Since $I$ is translation invariant, $I(x+1)-I(x)=1$ and $I(x-1)-I(x)=-1$. Thus, we get that $\langle 1,p \rangle=1$, i.e., $p$ has a total mass equal to 1. Let $(y,z) \in l_{\infty}$ such that $y\geq z$. Since $I$ is monotone, $0 \leq I(y-z+x)-I(x)\leq \langle y-z,p \rangle$, so that $p$ is monotone. Thus, we get that $p \in \Delta$.

Now, we will show that $\partial I (x) \subseteq \Delta^{\sigma}$. Let $p \in \partial I(x)$, let $(E_n)_{n\in \mathbb{N}}$ be a sequence of subsets of $\mathbb{N}$ with $(E_n)\downarrow \emptyset$, and let $k=\inf\{x_n-1: n \in \mathbb{N}\}$. Then for every $n\in \mathbb{N}$, by definition, we have 
\begin{align}
I(k E_{n}x)-I(x)\leq I(x-\mathbb{I}_{E_n})-I(x) \leq -\langle \mathbb{I}_{E_n},p\rangle. \label{1bis}
\end{align}
Let us prove that $\lim_{n\rightarrow \infty} I(k E_{n}x) = I(x)$. For every $m\in \mathbb{N}\setminus \{0\}$, $x \succ I(x) - \frac{1}{m}$. Consequently, from Axiom \ref{axiom8}, there exists $k_m \in \mathbb{N}$ such that $k E_{k_m}x \succ I(x) - \frac{1}{m}$, or $I(k E_{k_m}x)>I(x)-\frac{1}{m}$. Notice that $k E_{n}x \leq k E_{n+1}x < x$ for every $n\in \mathbb{N}$ since $k<\inf_{n\in \mathbb{N}} x_n$, it then follows that $I(x) \geq \lim_{n\rightarrow \infty} I(k E_{n}x) \geq I(x)-\frac{1}{m}$ for every $m \in \mathbb{N} \setminus \{0\}$. Thus, $\lim_{n\rightarrow \infty} I(k E_{n}x) = I(x)$. From (\ref{1bis}), we obtain $\lim_{n\rightarrow \infty} \langle \mathbb{I}_{E_n},p\rangle = 0$, which proves that $p \in \Delta^{\sigma}$.
\end{proof}

If $(1,0,0,\dots) \in \partial I(x)$, then it is evident that $\partial I(x) \cap F \neq \emptyset$. Now, we consider the case where $(1,0,0,\dots) \notin \partial I(x)$. In particular, the function $s: \partial I(x) \to \Delta^{\sigma}$ defined by $$s(p)= \frac{1}{1-p_0}(p_1,p_2,\dots)$$ is well defined. The forthcoming lemma plays a central role in our proof:

\begin{lemma}\label{centrallemma}
Assume that $(1,0,0,\dots) \notin \partial I(x)$. If $p \in \partial I(x)$, then $s(p) \in \partial I(x).$
\end{lemma}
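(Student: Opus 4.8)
The key idea is to translate the statement $s(p)\in\partial I(x)$ into a superdifferential inequality and then feed it the right improving sequences, using IDIS (equivalently, invariance under $T_{de}$). First I would write $q:=s(p)=\tfrac1{1-p_0}(p_1,p_2,\dots)$, which is well defined since $(1,0,0,\dots)\notin\partial I(x)$ forces $p_0<1$ (indeed $p\in\Delta^\sigma$ gives $p_0\le 1$, and $p_0=1$ would mean $p=(1,0,0,\dots)$). Note $q\in\Delta^\sigma$ as well. To show $q\in\partial I(x)$ we must verify $I(x+d)-I(x)\le\langle d,q\rangle$ for every $d\in l_\infty$; equivalently, writing $d=x'-x$, that $I(y)-I(x)\le\langle y-x,q\rangle$ for all $y$. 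Since $I$ is concave, $1$-Lipschitz and translation invariant, it suffices (by a standard homogeneity/density argument, using that $I(x+td)-I(x)\le t\langle d,p\rangle$ along the ray) to check the inequality on improving directions, i.e.\ to show: if $x+d\succsim x$, then $\langle d,q\rangle\ge 0$.

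So the crux is: \emph{$x+d\succsim x$ implies $\langle d,q\rangle\ge 0$.} Here is where IDIS enters. If $d$ is an improving sequence from $x$, then by IDIS so is $(0,d)$, i.e.\ $x+(0,d)\succsim x$; iterating, $x+(0,0,d)\succsim x$, and so on. Since $p\in\partial I(x)$, each of these improving sequences satisfies $\langle T_{de}^k(d),p\rangle\ge 0$, because $I(x+T_{de}^k(d))-I(x)\ge 0$ and $p\in\partial I(x)$ gives $0\le I(x+T_{de}^k(d))-I(x)\le\langle T_{de}^k(d),p\rangle$. Wait — that is the wrong direction for a superdifferential; rather, from $p\in\partial I(x)$ we get $I(x+T_{de}^k(d))-I(x)\le\langle T_{de}^k(d),p\rangle$, and the left side is $\ge 0$, so indeed $\langle T_{de}^k(d),p\rangle\ge 0$ for all $k\ge 0$. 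Now compute $\langle T_{de}(d),p\rangle = \langle (0,d_0,d_1,\dots),(p_0,p_1,p_2,\dots)\rangle = \sum_{t\ge 0} d_t p_{t+1} = (1-p_0)\langle d,q\rangle$. Since $1-p_0>0$ and $\langle T_{de}(d),p\rangle\ge 0$, we conclude $\langle d,q\rangle\ge 0$, which is exactly what we wanted. (The higher iterates $k\ge 2$ are not even needed; one application of IDIS suffices.)

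The main obstacle I anticipate is the reduction from ``the superdifferential inequality for all $d$'' to ``the inequality for improving directions $d$ only.'' This needs care: one wants to say that $q$ supports $I$ at $x$ iff $q$ is nonnegative on the ``tangent cone of improving directions,'' but $I$ is merely concave (not linear), so one should argue via the known fact that $p\in\partial I(x)$ already, that $I$ is concave and positively-homogeneous-after-recentering is \emph{not} true, so instead I would argue directly: for arbitrary $d$, consider $d^+ := d + M\mathbf 1$ for $M$ large; translation invariance gives $I(x+d^+) = I(x+d)+M$ and $\langle d^+,q\rangle = \langle d,q\rangle + M$, so it is enough to treat $d$ with $I(x+d)\ge I(x)$, i.e.\ improving $d$ — and for those the previous paragraph applies. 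Thus the whole argument runs: (i) $p_0<1$ so $q$ is well defined and lies in $\Delta^\sigma$; (ii) reduce to improving $d$ via translation invariance; (iii) apply IDIS once to get $x+(0,d)\succsim x$; (iv) use $p\in\partial I(x)$ to get $\langle(0,d),p\rangle\ge 0$; (v) the identity $\langle(0,d),p\rangle=(1-p_0)\langle d,q\rangle$ finishes it, so $q=s(p)\in\partial I(x)$.
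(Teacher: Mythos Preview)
Your approach is essentially the paper's, and the core computation --- apply IDIS once to an improving $d$, use $p\in\partial I(x)$ to get $0\le I(x+(0,d))-I(x)\le\langle(0,d),p\rangle=(1-p_0)\langle d,q\rangle$ --- is exactly right. There is, however, a small gap in your reduction step that you should tighten.

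Your ``$d^+:=d+M\mathbf 1$ for $M$ large'' argument shows only that the full superdifferential inequality $I(x+d)-I(x)\le\langle d,q\rangle$ is equivalent to the \emph{same} inequality restricted to improving $d$. But what your ``previous paragraph'' actually establishes for improving $d$ is the weaker statement $\langle d,q\rangle\ge 0$, not $I(x+d)-I(x)\le\langle d,q\rangle$. So as written, the two pieces do not meet. The fix is immediate: instead of an arbitrary large $M$, take the specific shift $\alpha:=I(x+d)-I(x)$ and set $d':=d-\alpha$. Then $I(x+d')=I(x)$, so $d'$ is improving (in fact indifferent), and your crux gives $\langle d',q\rangle\ge 0$, i.e.\ $\langle d,q\rangle\ge\alpha=I(x+d)-I(x)$, which is exactly the superdifferential inequality. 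This is precisely what the paper does: it writes $d=y'-x'$ with $y'=y-I(y)$, $x'=x-I(x)$, so that $I(x+d)=I(x)$ by translation invariance, and then runs your IDIS computation on that indifferent $d$.
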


\begin{proof}
Let $p \in \partial I(x)$. Observe that the condition $s(p) \in \partial I(x)$ is equivalent to, for every $y \in l_{\infty}$,
\begin{align*}
&I(y)-I(x) \leq \langle y-x,s(p) \rangle \\
\Longleftrightarrow \ &0  \leq \langle y-x+ I(x)- I(y),s(p) \rangle.
\end{align*}
The equivalence above follows from $I(y)-I(x) = \langle I(y)- I(x), s(p) \rangle$ (since $s(p)\in \Delta^{\sigma}$).

Define $x'=x-I(x)$ and $y'=y-I(y)$. The inequality above is equivalent to, for every $y' \in l_{\infty}$ with $y' \sim 0$,
\begin{align*}
&0 \leq \langle y'-x' ,s(p) \rangle \\
\Longleftrightarrow \ &0 \leq \langle y'-x' ,(p_1,p_2,\dots) \rangle \\
\Longleftrightarrow \ &0 \leq \langle (0,y'-x'),p \rangle.
\end{align*}

Now, since $p \in \partial I(x)$, we have $$I((0,y'-x')+x)-I(x) \leq \langle (0,y'-x'),p \rangle $$ for every $y \in l_{\infty}$. To conclude the proof of the lemma, it is sufficient to demonstrate that $$I((0,y'-x')+x) \geq I(x).$$ 
Define $d=y'-x'$. We have $$I(x+d) = I(x+y'-x')= I(x+y-I(y) - x+I(x)) = I(x).$$ Thus, from IDIS, we can establish that $I(x+(0,y'-x')) = I(x+(0,d)) \geq I(x)$, implying that $s(p) \in \partial I(x)$.
\end{proof}

Therefore, we can state that $s: \partial I(x) \to \partial I(x)$. It is evident that $s$ is continuous. Furthermore, $\partial I(x)$ is a convex and compact subset of $\Delta$. Utilizing the Schauder–Tychonoff fixed-point theorem (see Appendix \ref{schauder}), we can deduce that there exists $p \in \partial I(x)$ such that $s(p)=p$. With a straightforward computation, we can derive that $$p = (1-\delta)(1,\delta,\delta^2,\dots),$$ for some $\delta \in (0,1)$, i.e., $p \in F$. Therefore, we can conclude that $\partial I(x) \cap F \neq \emptyset$. From Lemma \ref{lem2}, we get that
\begin{align}\label{vardis}
I(x)=\min_{p \in F}\{\langle x,p\rangle + c'(p)\}.
\end{align}
Moreover, for every $p\in F$, the cost function $c':F \to \mathbb{R}\cup \{\infty\}$ is defined by
\begin{align}\label{defcc}
c'(p)= \sup_{x\in l_{\infty}}\{I(x)- \langle x,p\rangle \}.
\end{align}

Observe that, for any $p\in F, c'(p) \geq I(1) - \langle 1,p\rangle = 0$, so $c'$ is non-negative. Moreover, the normalization of $I$ verifies that $c'$ is grounded. In fact, as $I(1) = 1$, we can deduce from Equation (\ref{vardis}) that $\min_{p \in F} c'(p) = 0$.



Define the function $f: [0,1) \to \Delta^{\sigma}$ as $f(\delta) = (1-\delta)(1,\delta,\delta^2,\dots)$ for all $\delta \in [0,1)$. In addition, define the function $c: [0,1] \to \mathbb{R} \cup \{\infty\}$ in the following manner: $c(\delta) = c'(f(\delta))$ for $\delta \in [0,1)$, and $c(\delta) = \infty$ if $\delta = 1$. With these definitions in place, we can express: $$I(x)=\min_{\delta \in [0,1]} \bigg\{(1-\delta) \sum_{t=0}^{+\infty} \delta^t  x_t + c(\delta)\bigg\}.$$

Given that $c'$ is non-negative and grounded, it follows that $c$ is also non-negative and grounded. We will now demonstrate that $c$ is lower semicontinuous. Let $\alpha\in \mathbb{R}$. We have 
\begin{align}\label{trinhlon}
    \{\delta \in [0,1]: c(\delta) \leq \alpha\} = \{\delta \in [0,1): c(\delta) \leq \alpha\} = \{\delta \in [0,1): c'(f(\delta)) \leq \alpha\}.
\end{align}
If $c'$ is defined for all $p\in \Delta$ according to Equation (\ref{defcc}), then, for every $t \geq 0$, it follows that $\{p \in \Delta: c'(p) \leq t\} \subseteq \Delta^{\sigma}$ (as established in Theorem 2 in \cite{BBW2024}). This implies that $c'(p) = \infty$ for all $p\in \Delta \setminus \Delta^\sigma$.

Denote $\overline{F}$ the closure of $F$. We can show that $\overline{F} \setminus F \subseteq \Delta \setminus \Delta^{\sigma}$. To establish this, let us consider the contrary assumption that there exists $p\in \overline{F} \setminus F$ such that $p\notin \Delta \setminus \Delta^{\sigma}$, which implies that $p \in \Delta^{\sigma}$. Because $p\in \overline{F} \setminus F$, there exists a sequence $(\delta_n)_{n\in \mathbb{N}}$ in $[0,1)$ for which the sequence $(f(\delta_n))_{n\in \mathbb{N}}$ converges to $p$. We can assume that $(\delta_n)_{n\in \mathbb{N}}$ converges to a point $\bar{\delta} \in [0,1]$. If $\bar{\delta} = 1$, then $\langle \mathbf{e}_t, p\rangle = \lim_{n\rightarrow \infty}\langle \mathbf{e}_t,f(\delta_n) \rangle = 0$, where $\mathbf{e}_t$ is the sequence whose $t^{th}-$term is one and all other terms are zero. Thus, $p \notin \Delta^{\sigma}$, which is a contradiction. If $\bar{\delta} \in [0,1)$, then the sequence $(f(\delta_n))_{n\in \mathbb{N}}$ converges in norm to $f(\bar{\delta}) \in F$. This can be deduced from the fact that $f:[0,1) \to l_1$ is norm continuous, as established in Lemma 8 in \cite{CE2018}. Consequently, $(f(\delta_n))_{n\in \mathbb{N}}$ converges pointwise to both $p$ and $f(\bar{\delta})$, leading to the conclusion that $p = f(\bar{\delta}) \in F$, which is a contradiction. Thus, we establish that $\overline{F} \setminus F \subseteq \Delta \setminus \Delta^{\sigma}$.

Thus, we have $$\{p \in \overline{F}: c'(p) \leq \alpha\} = \{p \in F: c'(p) \leq \alpha\}.$$ Clearly, $c'$ defined on $\Delta$ is lower semicontinuous since it is a pointwise supremum of a family of lower semicontinuous functions.  Hence, $\{p \in \overline{F}: c'(p) \leq \alpha\}$ is closed (in $\Delta$), which implies that $\{p \in F: c'(p) \leq \alpha\}$ is closed (in $\Delta$). Let $E = \{\delta \in [0,1): c'(f(\delta)) \leq \alpha\}$. We can conclude that the set $\{f(\delta): \delta \in E\}$ is closed (in $\Delta$). Furthermore, given that $\Delta$ is a compact set, it follows that ${f(\delta): \delta \in E}$ is also compact.

Assume that $E$ is open (in $[0,1]$). There is a sequence $(\delta_n)_{n\in \mathbb{N}}$ in $E$ that converges to $\bar{\delta} \in [0,1]$, but $\bar{\delta} \notin E$. As $(f(\delta_n))_{n\in \mathbb{N}}$ is contained within the compact set $\{f(\delta): \delta \in E\}$, there exists a subsequence, which we will continue to denote as $(f(\delta_n))_{n\in \mathbb{N}}$, that converges to $f(\delta')$ for some $\delta' \in E$. This implies that $(\delta_n)_{n\in \mathbb{N}}$ converges to $\delta'$. Consequently, we must have $\bar{\delta} = \delta'$, which is a contradiction. Therefore, $E$ is closed (in $[0,1]$). From (\ref{trinhlon}), we can deduce that $c$ is lower semicontinuous.

\end{part1}

\begin{part2}

For a given preference relation $\succsim$, suppose that  there exists a grounded and lower semicontinuous function $c:[0,1] \to [0,\infty]$, where $c(1) = \infty$, such that the constant equivalent representing $\succsim$ can be expressed as:
\begin{align}\label{eqdefIx}
I(x)=\min_{\delta \in [0,1]} \bigg\{(1-\delta) \sum_{t=0}^{+\infty} \delta^t  x_t + c(\delta)\bigg\}.
\end{align}

The proof that $\succsim$ satisfies Axioms \ref{axiom1} to \ref{axiom8} is left to the reader. We will now demonstrate that $\succsim$ satisfies IDIS. For every $x\in l_{\infty}$ and $\delta \in [0,1)$, let $D_{\delta} (x) = (1-\delta) \sum_{t=0}^{+\infty} \delta^t  x_t$. Consider $(x, d) \in l_{\infty}$ such that $x + d \succsim x$. Our objective is to show that $x + (0, d) \succsim x$. By (\ref{eqdefIx}), there is $\delta \in [0,1]$ such that $$I(x+(0,d)) = D_{\delta}(x+(0,d)) +  c(\delta).$$ 

Observing that $D_{\delta}((0, d))=\delta D_{\delta}(d)$, we obtain the following relationship:
\begin{align*}
D_{\delta}(x+(0,d)) +  c(\delta) &= D_{\delta}(x)+\delta D_{\delta}(d) +  c(\delta)\\
&= (1-\delta)[D_{\delta}(x) + c(\delta)] + \delta[D_{\delta}(x+d) + c(\delta)] \\
&\geq (1-\delta)I(x) + \delta I(x+d) \\
&\geq (1-\delta)I(x) + \delta I(x) = I(x).
\end{align*}
The third line can be deduced from (\ref{eqdefIx}), and the last line is a direct consequence of the assumption that $x + d \succsim x$. Therefore, we have established that $I(x + (0, d)) \geq I(x)$, which implies that $x + (0, d) \succsim x$. As a result, we can conclude that $\succsim$ satisfies IDIS.
\end{part2}


\subsection{Proof of Proposition \ref{corollary2}}\label{proof-coro2}

We will solely demonstrate the ``only-if" part, as the ``if" part is straightforward. Let us assume that $\succsim$ satisfies Axioms \ref{axiom1}-\ref{axiom6} and IDIS. Define $F= \{(1-\delta)(1,\delta,\delta^2,\dots): \delta \in [0,1)\}$. Introduce the function $f: [0,1) \to \Delta^{\sigma}$ as $f(\delta) = (1-\delta)(1,\delta,\delta^2,\dots)$ for all $\delta \in [0,1)$. According to the proof of Theorem \ref{corollary1}, the constant equivalent representing $\succsim$ can be expressed as follows:
$$I(x)=\min_{\delta \in [0,1]} \bigg\{(1-\delta) \sum_{t=0}^{+\infty} \delta^t  x_t + c(\delta)\bigg\},$$ where $c: [0,1] \to [0,\infty]$ is a grounded and lower semicontinuous function defined as:
\begin{align*}
    c(\delta)= 
\begin{cases}
\sup_{x\in l_{\infty}} \bigg\{I(x) - (1-\delta) \sum_{t=0}^{+\infty} \delta^t  x_t \bigg\} &\text{ if $\delta \in [0,1)$}\\
\infty &\text{ if $\delta =1$}
\end{cases}.
\end{align*}

Let $\alpha \in \mathbb{R}$ such that $\alpha >0$. Let us prove that $I(\alpha x)=\alpha I(x)$. We will employ a proof by contradiction. Assume for the sake of contradiction that $I(\alpha x) > \alpha I(x)$. This implies that $\alpha x\succ \alpha I(x)$. Using Axiom \ref{axiom7}, we conclude that $x\succ I(x)$, leading to a contradiction. Now, consider the opposite case, where we assume that $I(\alpha x) < \alpha I(x)$. In this scenario, we find that $I(x) \succ x$, again resulting in a contradiction. Hence, we must conclude that neither of these contradictions holds, and the only consistent conclusion is that $I(\alpha x) = \alpha I(x)$, i.e., $I$ exhibits positive homogeneity.

Consequently, this implies that, for any given $\delta \in [0,1)$, the function $g_{\delta}:l_{\infty} \to \mathbb{R}$ defined as $$g_{\delta}(x) = I(x)- (1-\delta) \sum_{t=0}^{+\infty} \delta^t x_t$$ for all $x\in l_{\infty}$ also exhibits positive homogeneity. If there exists $x\in l_{\infty}$ such that $g_{\delta}(x) > 0$, then $g_{\delta}(n x) = n g_{\delta}(x)$ diverges to infinity as $n$ approaches infinity. Consequently, we have $c(\delta) = \sup_{x\in l_{\infty}} g_{\delta}(x) = \infty$. On the other hand, if $g_{\delta}(x) \leq 0$ for all $x\in l_{\infty}$, it's evident that $c(\delta) = 0$. Therefore, we conclude that $c(\delta)$ can only take on one of two values: either $c(\delta) = 0$ or $c(\delta) = \infty$. Let $E = \{\delta \in [0,1]:c(\delta)=0\}$. This set is non-empty and closed since $c$ is both grounded and lower semicontinuous. Moreover, $E \subseteq [0,1)$ since $c(1)=\infty$. We can then express $I(x)$ as: $$I(x)=\min_{\delta \in E} \bigg\{(1-\delta) \sum_{t=0}^{+\infty} \delta^t  x_t\bigg\}.$$
Finally, for the uniqueness of $E$, we refer to Theorem 3 in \cite{CE2018}.

\subsection{Proof of Proposition \ref{corollary-exponential}}\label{proof-coroexpo}

We only prove the ``only if" part, the other direction being routine. Assume that $\succsim$ is a binary relation on $l_{\infty}$ satisfying Axioms \ref{axiom1}-\ref{axiom7}, Strong monotonicity, and IDIS. Based on Proposition \ref{corollary2} and Remark \ref{rek1}, we can express the unique constant equivalent $I: l_{\infty} \to \mathbb{R}$ of $\succsim$ as follows:
\begin{align*}
I(x)=\min_{\delta \in E} \bigg\{(1-\delta) \sum_{t=0}^{+\infty} \delta^t  x_t\bigg\}
\end{align*}
for some closed set $E \subseteq (0,1)$.

Evidently, as $\succsim$ adheres to Axiom \ref{axiom6}, the function $I$ exhibits positive homogeneity. From Axiom \ref{axiom7}, for all $x,y\in l_{\infty}$, we have $$x+y \sim I(x)+y \sim I(x) + I(y).$$ Thus, $I(x+y)= I(x)+I(y)$, i.e., the function $I$ is additive. In particular, take $y = -x$, we get that $I(x) = - I(-x)$. Let $\lambda \in \mathbb{R}$ such that $\lambda < 0$. We have $$I(\lambda x) = I(-\lambda (-x)) = -\lambda I(-x) = \lambda I(x).$$ Hence, the function $I$ is linear on $l_{\infty}$. This implies that $E$ must be a singleton, which ends the proof of the proposition.

\subsection{Proof of Proposition \ref{propaggre}}

We establish the proof for the implication of $(i)$ to $(ii)$, while the converse direction is straightforward. Let us assume that the social preference is represented by the variational discounting criterion with the cost function $c$, and that Unanimity is satisfied. We will proceed by contradiction. Suppose that there exists a discount factor $\delta \notin D$ such that $c(\delta) < \infty$. For a given $\alpha > 0$, consider the sequence $x \in l_{\infty}$ defined as follows: $x_0 = \alpha \delta^2, x_1 = \alpha (\delta^2 - 2\delta)$, and $x_t = \alpha (1 - \delta)^2$ for every $t > 1$. With a straightforward computation, we can derive the following result: 
\begin{align*}
    (1-\delta) \sum_{t} \delta^t x_t = 0 \leq \alpha (\delta'-\delta)^2 = (1-\delta') \sum_{t} (\delta')^t x_t.
\end{align*}

Since $D$ is closed and $\delta \notin D$, there exists $\alpha > 0$ such that $c(\delta) < \alpha \min_{\delta' \in D} (\delta'-\delta)^2$. Let $\alpha \min_{\delta' \in D} (\delta'-\delta)^2 =\theta$. We have $(1-\delta') \sum_{t} (\delta')^t x_t \geq \theta$ for all $\delta \in D$. Hence, due to Unanimity, we establish that $x \succsim \theta$. However, it is clear that $\theta > (1 - \delta) \sum_{t} \delta^t x_t + c(\delta)$. Therefore, we have $\theta \succ x$, leading to a contradiction. Consequently, through this contradiction, we can deduce that for all $\delta \notin D$, it must hold that $c(\delta) = \infty$, implying that the social preference can be expressed by $$I(x)=\min_{\delta \in D} \bigg\{(1-\delta) \sum_{t=0}^{+\infty} \delta^t  x_t + c(\delta)\bigg\}.$$

\section{Proofs for Section \ref{section4}}

\subsection{Proof of Lemma \ref{lemdefadj}}  \label{proof-dual}

Let $T: l_{\infty} \to l_\infty$ be a continuous positive linear function, and let $\mu \in ba(\mathbb{N}$. We aim to establish the unique definition of $T^*(\mu)$. Given the continuous linearity of $T$, it is clear that the functional $f_{\mu}: l_\infty \rightarrow \mathbb{R}$ defined by $f_{\mu}(x) = \langle T(x), \mu \rangle$ for every $x \in l_\infty$ is a continuous linear function. As the topological dual of $l_\infty$ is lattice isometric to $ba(\mathbb{N)}$, there exists a unique ${\mu}^{*}\in ba(\mathbb{N})$ such that $\langle T(x), \mu \rangle = \langle x, \mu^{*} \rangle$ for every $x \in l_\infty$. Consequently, we define $T^*(\mu) = \mu^*$, which is evidently a uniquely defined transformation.

Therefore, the function $T^*: ba(\mathbb{N}) \rightarrow ba(\mathbb{N})$ is well defined. Now, we need to demonstrate that $T^*$ is a continuous positive linear function. To establish the linearity of $T^*$, consider $\mu, \mu' \in ba(\mathbb{N})$. We have $$\langle T(x),\mu +\mu' \rangle = \langle T(x),\mu \rangle+\langle T(x),\mu' \rangle \text{ for every $x\in l_{\infty}$}.$$ This is equivalent to $\langle x, T^*(\mu+\mu') \rangle= \langle x, T^*(\mu) \rangle+\langle x, T^*(\mu') \rangle$ for every $x \in l_{\infty}$. Thus, we conclude that $T^*(\mu + \mu') = T^*(\mu) + T^*(\mu')$, demonstrating that $T^*$ is additive. Now, let $\mu \in ba(\mathbb{N})$ and $\lambda \in \mathbb{R}$. We have $$\langle x,T^*(\lambda\mu) \rangle = \lambda \langle T(x),\mu \rangle \text{ for every $x\in l_{\infty}$},$$ which is equivalent to $\langle x,T^*(\lambda\mu) \rangle = \lambda \langle x,T^*(\mu) \rangle$ for every $x \in l_{\infty}$. This implies that $T^*(\lambda \mu) = \lambda T^*(\mu)$. Therefore, $T^*$ is a linear function.

Next, we will prove that $T^*$ is continuous. Let $(\mu_d)_{d\in D}$ be a net in $ba(\mathbb{N})$ such that $\mu_d$ converges to $\mu \in ba(\mathbb{N})$. By definition, for every $d \in D$, $\langle x, T^*(\mu_d)\rangle = \langle T(x), \mu_d \rangle$. As $(\langle T(x), \mu_d \rangle)_{d}$ converges to $\langle T(x), \mu \rangle = \langle x, T^*(\mu) \rangle$, it follows that $(\langle x, T^*(\mu_d) \rangle)_{d}$ also converges to $\langle x, T^*(\mu) \rangle$. This convergence holds for every $x \in l_{\infty}$. Thus, by the definition of weak$^{\star}$ convergence, $T^*(\mu_d)$ converges to $T^*(\mu)$. Hence, $T^*$ is continuous.

Finally, we establish that $T^*: ba(\mathbb{N}) \to ba(\mathbb{N})$ is a positive operator. Let $\mu \in ba(\mathbb{N})$ and $x \in l_{\infty}$, both satisfying $\mu \geq 0$ and $x \geq 0$. In this context, we have $\langle x, T^*(\mu) \rangle = \langle T(x), \mu \rangle \geq 0$ because $T(x) \geq 0$. Therefore, $T^*(\mu): l_{\infty} \to \mathbb{R}$ is positive, which consequently implies that $T^*$ is a positive operator.

\subsection{Proof of Theorem \ref{theorem1}}
\begin{part1}
Assume that $\succsim$ satisfies Axioms \ref{axiom1}-\ref{axiom5} and I$T$IS. Let $I:l_{\infty}\rightarrow \mathbb{R}$ be the unique weakly-increasing, concave, $1$-lipschitz, normalized, and translation invariant constant equivalent given in Proposition \ref{pro1}. According to Lemma \ref{lem2}, the proof of $I(x) = \min_{p \in \Delta (T^*)} \{\langle x,p \rangle + c(p)\}$ for a function $c: \Delta (T^*) \to \mathbb{R} \cup \{\infty\}$ only necessitates demonstrating that $\partial^{\Delta (T^*)} I(x) \neq \emptyset$ for all $x \in l^{\infty}$. 

Let $x\in l_{\infty}$. We have $\partial ^{\Delta (T^*)}I(x) = \partial I(x) \cap \Delta (T^*)$. Clearly, $\partial I(x)$ is non-empty and is a subset of $\Delta$. If there exists $p \in \partial I(x)$ such that $T^*(p) = 0$, then $p\in \Delta (T^*)$. In addition, since $\partial I(x) \subseteq \Delta$, $p\in \Delta$, implying that $\partial^{\Delta (T^*)} I(x) \neq \emptyset$. Now, we consider the case where $T^*(p) \neq 0$ for every $p \in \partial I(x)$. In particular, we can define the function $s: \partial I(x) \to ba(\mathbb{N})$ as $$s(p)=\frac{T^*(p)}{\langle 1, T^*(p) \rangle} \text{ for every $p\in \partial I(x)$}.$$

Observe that, for every $p \in \partial I(x) \subseteq \Delta$, $s(p) \in \Delta$, so $s$ maps from $\partial I(x)$ to $\Delta$. By following a similar argument as in the proof of Lemma \ref{centrallemma}, we can demonstrate that if $p \in \partial I(x)$, then $s(p) \in \partial I(x)$. Hence, $s$ constitutes a self-mapping function from $\partial I(x)$ to itself. Moreover, $s$ is continuous, and $\partial I(x)$ is convex and compact. By Schauder–Tychonoff fixed-point theorem, there exists $p \in \partial I(x) $ such that $s(p)=p$. Notably, $p$ is an element of $\Delta$, which also implies that $p \in \Delta (T^*)$. Consequently, we establish that $\partial^{\Delta (T^*)} I(x) \neq \emptyset$. This result, in accordance with Lemma \ref{lem2}, yields
\begin{align}\label{fi}
I(x)=\min_{p \in \Delta (T^*)}\{\langle x,p\rangle + c(p)\},
\end{align}
where, for every $p\in \Delta (T^*)$,
\begin{align*}
c(p)= \sup_{x\in l_{\infty}}\{I(x)- \langle x,p\rangle \}.
\end{align*}

Consider that, for any $p \in \Delta (T^*)$, it holds that $c(p) \geq I(1) - \langle 1, p\rangle = 0$. This observation confirms that $c$ is non-negative. Clearly, $c$ is lower semicontinuous. Additionally, the normalization of $I$ implies that $c$ is grounded. To illustrate, since $I(1) = 1$, we can derive from Equation (\ref{fi}) that $\min_{p \in \Delta (T^*)} c(p) = 0$. 
\end{part1}

\begin{part2}
Suppose that there exists a grounded and lower semicontinuous function $c: \Delta (T^*) \rightarrow [0,\infty]$ such that the unique constant equivalent $I:l_{\infty}\rightarrow \mathbb{R}$ representing $\succsim$ can be written as follows: 
\begin{align*}
I(x)=\min_{p \in \Delta (T^*)}\{\langle x,p\rangle + c(p)\}.
\end{align*}

We will only prove that $\succsim$ satisfies I$T$IS. Let us assume that $\langle 1, T^*(p) \rangle \leq 1$ for every $p \in \Delta (T^*)$. Let $x,d \in l_{\infty}$ such that $x + d \succsim x$. We need to show that $x + T(d) \succsim x$. By definition, there exists $p \in \Delta (T^*)$ such that $I(x + T(d)) = \langle x + T(d), p \rangle + c(p)$. According to the definition of $\Delta (T^*)$, we have two cases: either $T^*(p) = 0$ or $T^*(p) = \lambda p$ for some $\lambda > 0$.

If $T^*(p) = 0$, then we can proceed as follows: 
\begin{align*}
\langle x+T(d),p \rangle + c(p) &= \langle x,p \rangle + \langle T(d), p \rangle + c(p)\\
&= \langle x,p \rangle + \langle d, T^*(p) \rangle + c(p) \\
&= \langle x,p \rangle + c(p) \geq I(x).
\end{align*}
This implies that $x+T(d) \succsim x$.

Consider the case where $T^*(p) = \lambda p$ for some $\lambda > 0$. We first observe that $1 \geq \langle 1, T^*(p) \rangle = \langle 1, \lambda p \rangle = \lambda$. Then we have
\begin{align*}
\langle x+T(d),p \rangle + c(p) &= \langle x,p \rangle + \langle d, T^*(p) \rangle + c(p)\\
&= \langle x,p \rangle + \lambda \langle d,  p \rangle + c(p)\\
&= (1-\lambda)[\langle x,p \rangle + c(p)] + \lambda[\langle x+d,p \rangle + c(p)] \\
&\geq (1-\lambda)I(x) + \lambda I(x) = I(x).
\end{align*}
The last inequality is due to the fact that $1 \geq \lambda$, the definition of $I$, and the assumption $x+d \succsim x$. Thus, we can conclude that $x+T(d) \succsim x$.

Therefore, $\succsim$ satisfies I$T$IS.
\end{part2}

\subsection{Proof of Proposition \ref{promaxmin}}

We only prove the first implication, the other direction being routine. Assume that $\succsim$ is a binary relation on $l_{\infty}$ satisfying Axioms \ref{axiom1}-\ref{axiom5}, Axiom \ref{axiom6} and I$T$IS. From the proof of Theorem \ref{theorem1}, the unique constant equivalent $I:l_{\infty}\to \mathbb{R}$ of $\succsim$ can be represented by 
\begin{align*}
I(x)=\min_{p \in \Delta(T^*)}\{\langle x,p\rangle + c(p)\},
\end{align*}
where, for every $p\in \Delta(T^*)$,
\begin{align*}
c(p)= \sup_{x\in l_{\infty}}\{I(x)- \langle x,p\rangle\}.
\end{align*}

Since $\succsim$ satisfies Axiom \ref{axiom6}, $I$ is positively homogeneous. This implies that, for any given $p \in \Delta(T^*)$, the function $g_{\delta}:l_{\infty} \to \mathbb{R}$ defined as $g_{\delta}(x) = I(x)- \langle x,p\rangle$ for all $x\in l_{\infty}$ also exhibits positive homogeneity. Thus, either $c(p)= 0$ or $c(p)= \infty$. Let $D = \{p\in F:c(p)=0\}$, which is non-empty since $c$ is grounded. Moreover, since $\Delta(T^*)$ is closed (see Remark \ref{remark1}) and $c$ is lower semicontinuous on $\Delta$, $D$ is closed. Finally, it is clear that we can write 
\begin{align*}
I(x)=\min_{p \in D} \langle x,p\rangle,
\end{align*} which finishes the proof.

\subsection{Proof of Theorem \ref{theorem2.3}}

We only proof the first part of the theorem. Assume that $\succsim$ satisfies Axioms \ref{axiom1}-\ref{axiom5}, Axiom \ref{axiom6}, and I$T_i$IS for every $i\in K$. It then follows from Proposition \ref{promaxmin} that, for every $i\in K$, $\succsim$ can be represented by 
\begin{align*}
I(x) = \min_{p\in D_i} \langle x, p\rangle,
\end{align*}
where $D_i$ is a closed subset of $F_i = \{p \in \Delta: \text{$p$ is an eigenvector of $T_i^{*}$}\}$. Clearly, we must have $\overline{\co} D_i = \overline{\co} D_j = E$ for every $i,j \in K$.

A point $p \in E$ is an \textit{exposed point} of $E$ if there exists $x \in l_{\infty}$ such that $\langle x, p' \rangle > \langle x, p \rangle$ for every $p' \in E \setminus \{p\}$. We use ${\cal E}(E)$ to denote the set of all exposed points of $E$. It is well known that a weakly compact convex set in a Banach space is the closed convex hull of its exposed points.\footnote{See Corollary 5.18 in \cite{BL1998}, and notice that \textit{strongly exposed} points of convex subsets of Hausdorff topological vector spaces are automatically exposed points.} Thus, we can express $$I(x)= \min_{p\in E} \langle x, p\rangle = \inf_{p\in {\cal E}(E)} \langle x, p\rangle = \min_{p\in \overline{{\cal E}(E)}} \langle x, p\rangle.$$ 

Now, we show that ${\cal E}(E) \subseteq D_{i}$ for every $i\in K$. By contradiction, assume that there is $p\in {\cal E}(E)$ such that $p \notin D_{i}$. By definition, there exists $x \in l_{\infty}$ such that $\langle x, p' \rangle > \langle x, p \rangle$ for every $p' \in E \setminus \{p\}$. But we have $$I(x) = \min_{p'\in D_i} \langle x, p' \rangle > \langle x, p\rangle = \min_{p'\in E} \langle x, p'\rangle = I(x),$$ which is a contradiction. Thus, we get that ${\cal E}(E) \subseteq D_{i}$ for every $i\in K$, which implies that $\overline{{\cal E}(E)} \subseteq D_{i}$ since $D_i$ is closed. This in turn implies that $\overline{{\cal E}(E)} \subseteq \bigcap_{i\in K} D_{i} \subseteq \bigcap_{i\in K} F_{i}$, and by letting $D = \overline{{\cal E}(E)}$, the proof is finished.

\subsection{Proof of Theorem \ref{theorem2}}\label{Proof-theorem2}

We only prove the first implication, the converse implication being routine. Assume that $\succsim$ is a binary relation on $l_{\infty}$ satisfying Axioms \ref{axiom1}-\ref{axiom8} and I$T$IS. To apply Lemma \ref{lem2}, we will prove that $\partial^{F}I(x) \neq \emptyset$ for every $x \in l_{\infty}$, where $F=\{p \in \Delta^{\sigma}: \text{$p$ is an eigenvector of $T^*$} \}$.

Clearly, $\partial I(x)$ is non-empty subset of $\Delta^{\sigma}$. Now, using the same argument in the proof of Theorem \ref{theorem1} (substitute $\Delta^{\sigma}$ for $\Delta$), we can deduce that $\partial^{F}I(x) \neq \emptyset$ for every $x \in l_{\infty}$. From Lemma \ref{lem2}, we obtain
\begin{align*}
I(x)=\min_{p \in F}\bigg\{\sum_{n=0}^{\infty} p_n x_n  + c'(p)\bigg\},
\end{align*}
where, for every $p\in F$,
\begin{align}\label{extended}
c'(p)= \sup_{x\in l_{\infty}}\bigg\{I(x)- \sum_{n=0}^{\infty} p_n x_n\bigg\}.
\end{align}
We define $c:\overline{F} \to [0,\infty]$ by $c(p) = c'(p)$ for all $p\in F$ and $c(p) = \infty$ for all $p\in \overline{F} \setminus F$. Then we can write
\begin{align*}
I(x)=\min_{p \in F}\bigg\{\sum_{n=0}^{\infty} p_n x_n  + c(p)\bigg\}.
\end{align*}

Clearly, $c$ is non-negative and grounded. By definition, $c$ is $F^c$-infinite. Now, we will prove that $c$ is lower semicontinuous. If $c'$ given by Equation (\ref{extended}) is defined for all $p\in \Delta$, then $\{p \in \Delta: c'(p) \leq t\} \subseteq \Delta^{\sigma}$ for every $t \geq 0$ (see Theorem 2 in \cite{BBW2024}), which implies that $c'(p) = \infty$ for all $p\in \Delta \setminus \Delta^\sigma$. Define $F'=\{p \in \Delta: \text{$p$ is an eigenvector of $T^*$} \}$, which is closed by Lemma \ref{remark1}. We have $$\{p \in \overline{F}: c(p) \leq t\} = \{p \in F: c'(p) \leq t\} = \{p \in F': c'(p) \leq t\}.$$
Since $c'$ defined on $\Delta$ is lower semicontinuous, $\{p \in F': c'(p) \leq t\}$ is closed in $\Delta$, which implies that $\{p \in \overline{F}: c(p) \leq t\}$ is closed in $\Delta$. Thus, $c$ is lower semicontinuous.

\subsection{Proof of Proposition \ref{prop-maxmin2}}\label{Proof-prop-maxmin2}

We only prove the first implication, the converse implication being routine. Assume that $\succsim$ is a binary relation on $l_{\infty}$ satisfying Axioms \ref{axiom1}-\ref{axiom6} and I$T$IS. From Theorem \ref{theorem2}, there exists a grounded, $F^c$-infinite, and lower semicontinuous function $c: \overline{F} \rightarrow [0,\infty]$ such that the unique constant equivalent $I:l_{\infty}\rightarrow \mathbb{R}$ representing $\succsim$ can be written by
\begin{equation*} 
I(x)=\min_{p \in F}\bigg\{\sum_{n=0}^{\infty} p_n x_n  + c(p)\bigg\},
\end{equation*} 
where $F=\{p \in \Delta^{\sigma}: \text{$p$ is an eigenvector of $T^*$} \}$. Clearly, ISU implies that, for every $p\in \overline{F}$, either $c(p) = 0$ or $c(p)= \infty$. Consequently, we can write 
\begin{align*}
I(x)=\min_{p \in D} \sum_{n=0}^{\infty} p_n x_n, 
\end{align*}
where $D = \{p \in \overline{F}:c(p)=0\}$, which is non-empty because $c$ is grounded. Now, since $c$ is lower semicontinuous, $D$ is closed in $\Delta$.

\section{Proofs for Section \ref{sec:equal}}

\subsection{Proof of Theorem \ref{theoremequal}}

Define the delayed transformation $T_{de}:l_{\infty} \to l_{\infty}$ as $T_{de}(x)=(0,x)$ for every $x \in l_{\infty}$. The next result shows that $p\in \Delta (T^*_s)$ is either a geometric distribution or a Banach-Mazur limit.

\begin{lemma}\label{lemmageo}
$\Delta (T^*_s) = \mathcal{B} \cup \{(1-\delta)(1,\delta,\delta^2,\dots):\delta \in [0,1)\}$.
\end{lemma}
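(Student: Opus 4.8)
The plan is to describe $\Delta(T^*_s)$ explicitly, where $T^*_s$ is the adjoint of the delay transformation $T_{de}(x)=(0,x)$ just introduced; write $G=\{(1-\delta)(1,\delta,\delta^2,\dots):\delta\in[0,1)\}$, so the goal is $\Delta(T^*_s)=\mathcal{B}\cup G$. The first step is to compute $T^*_s$: pairing with an indicator, $\langle\mathbb{I}_A,T^*_s(\mu)\rangle=\langle T_{de}(\mathbb{I}_A),\mu\rangle=\langle\mathbb{I}_{A+1},\mu\rangle$ for every $A\subseteq\mathbb{N}$, where $A+1=\{a+1:a\in A\}$; hence $T^*_s(\mu)(A)=\mu(A+1)$, i.e.\ $(T^*_s\mu)_n=\mu_{n+1}$ in the sequence representation of a countably additive measure. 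The second step pins down the eigenvalue: if $p\in\Delta$ and $T^*_s(p)=\lambda p$, pairing both sides with the constant stream $1$ gives $\lambda=\langle 1,\lambda p\rangle=\langle 1,T^*_s(p)\rangle=\langle T_{de}(1),p\rangle=\langle(0,1,1,\dots),p\rangle=1-p(\{0\})$, so necessarily $\lambda=1-p_0$ with $p_0:=p(\{0\})\in[0,1]$.

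For the inclusion $\mathcal{B}\cup G\subseteq\Delta(T^*_s)$ I would argue directly. If $p=(1-\delta)(1,\delta,\delta^2,\dots)\in G$, then $(T^*_s p)_n=(1-\delta)\delta^{n+1}=\delta p_n$, so $p$ is an eigenvector with eigenvalue $\delta$. For $\mathcal{B}$ I would first establish the auxiliary equivalence, valid for any $p\in ba(\mathbb{N})$, that $T^*_s(p)=p$ if and only if $p$ is shift invariant, meaning $\langle(x_1,x_2,\dots),p\rangle=\langle x,p\rangle$ for all $x\in l_\infty$. One direction applies shift invariance to the stream $T_{de}(x)=(0,x_0,x_1,\dots)$, whose shift is exactly $x$; the other, given $x$, uses the bounded stream $z=(x_1-x_0,x_2-x_0,\dots)$, for which $T_{de}(z)-z=(x_0-x_1,x_1-x_2,\dots)$, so that $\langle T_{de}(z)-z,p\rangle=0$ is precisely shift invariance tested at $x$. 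Since every Banach-Mazur limit is shift invariant and lies in $\Delta$, this gives $\mathcal{B}\subseteq\Delta(T^*_s)$ (with eigenvalue $1$).

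For the reverse inclusion $\Delta(T^*_s)\subseteq\mathcal{B}\cup G$, take $p\in\Delta(T^*_s)$ with eigenvalue $\lambda=1-p_0$ and decompose it by Yosida-Hewitt as $p=p_c+p_p$, where $p_c\ge 0$ and $p_p\ge 0$ (a standard feature of the decomposition since $p\ge 0$). The crucial preliminary is that $T^*_s$ is a Yosida-Hewitt transformation: $T_{de}$ satisfies hypotheses (i) and (ii) of Lemma \ref{criteriu}, because $T_{de}(\mathbb{I}_{A_n})=\mathbb{I}_{A_n+1}\to 0$ whenever $(A_n)\downarrow\emptyset$ and $T_{de}(\mathbb{I}_{\{n\}})=\mathbb{I}_{\{n+1\}}$ has finite support. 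By Lemma \ref{lemma3} the eigenvalue equation then splits, $T^*_s(p_c)=\lambda p_c$ and $T^*_s(p_p)=\lambda p_p$. From the first, $(p_c)_{n+1}=\lambda(p_c)_n$, hence $(p_c)_n=\lambda^n(p_c)_0$ with $(p_c)_0=p(\{0\})=p_0$ (as $p_p$ vanishes on singletons). If $\lambda=1$ then $p_0=0$, so $p_c=0$ and $p=p_p$ is purely finitely additive with $T^*_s(p)=p$; by the equivalence above, $p\in\mathcal{B}$. If $\lambda<1$, evaluating $T^*_s(p_p)=\lambda p_p$ on $\mathbb{N}$ and using $p_p(\{0\})=0$ gives $p_p(\mathbb{N})=p_p(\{1,2,\dots\})=\lambda p_p(\mathbb{N})$, so $(1-\lambda)p_p(\mathbb{N})=0$, whence $p_p(\mathbb{N})=0$ and, by nonnegativity, $p_p=0$; then $p=p_c=p_0(1,\lambda,\lambda^2,\dots)$, and $\sum_n(p_c)_n=p_0/(1-\lambda)=p(\mathbb{N})=1$ forces $p_0=1-\lambda$, i.e.\ $p=(1-\lambda)(1,\lambda,\lambda^2,\dots)\in G$. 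This exhausts all cases.

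I expect the reverse inclusion, and within it the control of the purely finitely additive part $p_p$, to be the main obstacle. Everything hinges on two facts: that $T^*_s$ is a Yosida-Hewitt transformation, so that the eigenvalue equation decouples along $p=p_c+p_p$; and that the eigenvalue is forced to equal $1-p_0$, which is strictly less than $1$ exactly when $p$ charges $\{0\}$. The latter is precisely what lets the total-mass identity $p_p(\mathbb{N})=\lambda p_p(\mathbb{N})$ annihilate $p_p$ unless $\lambda=1$, in which case the countably additive part vanishes instead and one lands on a Banach-Mazur limit. The remaining steps (the geometric-series normalization and the two elementary computations with the shift) are routine.
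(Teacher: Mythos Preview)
Your proof is correct. The approach differs from the paper's in one interesting way. For the reverse inclusion, the paper works entirely with the eigenvalue identity $\langle (0,x),p\rangle=\lambda\langle x,p\rangle$ applied to well-chosen test streams: singletons give $p(\{n+1\})=\lambda p(\{n\})$, and tails give $p(\{n,n+1,\dots\})=\lambda^n$, so when $\lambda\in(0,1)$ the tail masses vanish and countable additivity of $p$ follows directly, with no appeal to Yosida--Hewitt. You instead invoke Lemmas~\ref{criteriu} and~\ref{lemma3} to split $p=p_c+p_p$ and then kill the purely finitely additive part through the total-mass identity $p_p(\mathbb{N})=\lambda p_p(\mathbb{N})$. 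Your route reuses the structural machinery the paper has already set up and makes the eigenvalue formula $\lambda=1-p_0$ explicit from the outset, which is pleasant; the paper's route is more elementary and self-contained for this particular lemma. A minor redundancy on your side: once $\lambda=1$, your auxiliary equivalence already gives shift invariance of $p$, hence $p\in\mathcal{B}$ by the paper's definition, without needing to first argue $p_c=0$.
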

\begin{proof}

Let $p \in \mathcal{B} \cup \{(1-\delta)(1,\delta,\delta^2,\dots):\delta \in [0,1)\}$. Then either $p$ is a Banach-Mazur limit or $p$ is a geometric distribution. We first consider that $p$ is a Banach-Mazur limit. We have $\langle (0,x), p \rangle = \langle x, p \rangle$ for all $x \in l_{\infty}$. By definition, we can conclude that $p$ is the adjoint of $T_{de}$, which implies that $p$ is a normalized eigenvector of $T^*_s$. Now, if $p$ is a geometric distribution, then $p = (1-\delta)(1,\delta,\delta^2,\dots)$ for some $\delta \in [0,1)$. We have $\langle (0,x), p \rangle = \langle x, \delta p \rangle$. Thus, $\delta p$ is the adjoint of $T_{de}$, which implies that $p$ is a normalized eigenvector of $T^*_s$. We have shown that $\mathcal{B} \cup \{(1-\delta)(1,\delta,\delta^2,\dots):\delta \in [0,1)\} \subseteq \Delta (T^*_s)$.

Now, we will prove that $\Delta (T^*_s) \subseteq \mathcal{B} \cup \{(1-\delta)(1,\delta,\delta^2,\dots):\delta \in [0,1)\}$. Let $p$ be a normalized eigenvector of $T^*_s$ associated to an eigenvalue $\lambda$. By definition, we have $$\langle (0,x), p \rangle = \langle x, T^*_s(p)\rangle = \lambda \langle x, p \rangle \text{ for every $x\in l_{\infty}$}.$$ 
In particular, taking $x=1$, we easily get that $0\leq \lambda \leq 1$. 

For $\lambda=1$, since $p$ is a discount structure, the condition above is equivalent to 
\begin{align*}
    &\langle (0,x), p \rangle = \langle x, p \rangle \\
    \Longleftrightarrow \ &\langle (\theta,x+\theta), p \rangle = \langle x+\theta, p \rangle \text{ for any $\theta \in \mathbb{R}$.}
\end{align*}
Define $x' = (\theta,x+\theta)$, then $(x'_1,x'_2,\dots) = x+\theta$. Since the equation above holds for any $x\in l_{\infty}$ and $\theta \in \mathbb{R}$, we get that $\langle x', p \rangle=\langle (x'_1,x'_2,\dots), p \rangle$ for any $x' \in l_{\infty}$. Thus, $p$ is a Banach-Mazur limit. 

For $\lambda=0$, $p$ is only supported on the first component, i.e., $\langle x, p \rangle = x_0$ for every $x\in l_{\infty}$. In particular, when $x=(0,1,1,\dots)$, then we get that $p(\{1,2,\dots\})=0$, which implies that $p$ is identified to the discounted sequence $(1,0,\dots)$. 

For $\lambda \in (0,1)$, one can easily deduce that $p(\{n+1\}) = \lambda p(\{n\})$ for every $n \in \mathbb{N}$. This in turn implies that $p$ is countably additive (since $p(\{n,n+1,\dots\})$ converges to 0 as $n$ goes to infinity) and coincides with the geometric sequence of parameter $\lambda$.  

To summarize, we must have $\Delta (T^*_s) \subseteq \mathcal{B} \cup \{(1-\delta)(1,\delta,\delta^2,\dots):\delta \in [0,1)\}$. Therefore, $\Delta (T^*_s) = \mathcal{B} \cup \{(1-\delta)(1,\delta,\delta^2,\dots):\delta \in [0,1)\}$. 
\end{proof}

We have $1 \geq \langle (0,1,1,1,\dots), p \rangle = \langle 1, T^*_s (p) \rangle$ for all $p\in \Delta (T^*_s)$. Thus, it follows from Theorem \ref{theorem1} that a binary relation $\succsim$ satisfies Axioms \ref{axiom1}-\ref{axiom5} and IDIS if and only if there exists a grounded and lower semicontinuous function $c: \mathcal{D} \to [0,\infty]$ such that the constant equivalent of $\succsim$ can be expressed as
\begin{equation*} 
I(x)=\min_{p \in \mathcal{D}} \{\langle x,p \rangle +c(p)\},
\end{equation*} which finishes the proof.

\subsection{Proof of Proposition \ref{corollaryBanach}}\label{proof-coroBanach}

We will only proof the ``only if" part. Assuming that $\succsim$ satisfies Axioms \ref{axiom1}-\ref{axiom5}, IDIS, and Time invariance, according to Theorem \ref{theoremequal}, the constant equivalent $I$ of $\succsim$ can be expressed as follows:
\begin{equation*} 
I(x)=\min_{p \in \mathcal{D}} \{\langle x,p \rangle +c(p)\},
\end{equation*} where $\mathcal{D} = \Delta (T^*_s) \subseteq \mathcal{B} \cup \{(1-\delta)(1,\delta,\delta^2,\dots):\delta \in [0,1)\}$. Additionally, just like all the proofs of variational representations presented earlier, we can select $c: \mathcal{B} \to [0,\infty]$, defined by
\begin{align*}
c(p)= \sup_{x\in l_{\infty}}\{I(x)- \langle x,p\rangle\}.
\end{align*}

Now, we will show that $c(p) = \infty$ for all $p \in \{(1-\delta)(1,\delta,\delta^2,\dots):\delta \in [0,1)\}$. Let $p = (1-\delta)(1,\delta,\delta^2,\dots)$ for some $\delta \in [0,1)$. We first notice that Time invariance implies that, for any $x\in l_{\infty}$ and $\theta \in \mathbb{R}$, $(\theta,x_1,x_2,x_3,\dots) \sim x$. Thus, $(-n,0,0,0,\dots) \sim 0$ for all $n \in \mathbb{N}$. We have $$c(p) \geq I((-n,0,0,0,\dots)) + (1-\delta)n = (1-\delta)n \text{ for all $n\in \mathbb{N}$}.$$ Let $n$ goes to $\infty$, we can conclude that $c(p) = \infty$. Therefore, we can express $I$ as follows:
\begin{equation*} 
I(x)=\min_{p \in \mathcal{B}} \{\langle x,p \rangle +c(p)\}.
\end{equation*}

We remark that the set of Banach-Mazur limits is closed and convex. Clearly, $c$ is grounded, lower semicontinuous, and convex. Finally, the uniqueness of $c$ can be deduced following the same argument presented in Appendix \ref{appenunique}.

\subsection{Proof of Theorem \ref{corollary7}}

For any finite permutation $\sigma:\mathbb{N} \to \mathbb{N}$, define $C_{\sigma} = \{\mu \in ca(\mathbb{N}): \mu = \mu_{\sigma}\}$, where $\mu_{\sigma}=(\mu_{\sigma(0)},\mu_{\sigma(1)},\mu_{\sigma(2)},\dots)$. Define the finite permutation transformation with respect to $\sigma$ as $T_{\sigma}(x) = x_{\sigma}$. The following lemma characterizes eigenvectors of $T^*_{\sigma}$:

\begin{lemma}\label{lemmaper}
The set of eigenvectors of $T^*_{\sigma}$ is a subset of $C_{\sigma} \oplus pa(\mathbb{N})$, and all of them have eigenvalue 1.
\end{lemma}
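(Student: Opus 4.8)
The plan is to analyze the eigenvector equation $T_\sigma^*(\mu) = \lambda\mu$ directly, exploiting the structure of $T_\sigma$. First I would compute the adjoint: since $\sigma$ is a finite permutation it is its own kind of bijection, and for $x\in l_\infty$, $\mu\in ba(\mathbb N)$ one has $\langle T_\sigma(x),\mu\rangle = \langle x_\sigma,\mu\rangle$; writing this out on indicator functions $\mathbb{I}_{\{n\}}$ shows that $T_\sigma^*(\mu)(\{n\}) = \mu(\{\sigma(n)\})$ on singletons, i.e. $T_\sigma^*$ permutes the atomic part of $\mu$ by $\sigma^{-1}$ (or $\sigma$, depending on convention), and because $\sigma$ fixes all but finitely many integers, Lemma \ref{criteriu} applies: condition (i) holds because $T_\sigma(\mathbb{I}_{A_n}) = \mathbb{I}_{\sigma^{-1}(A_n)}$ and $\sigma^{-1}(A_n)\downarrow\emptyset$ whenever $A_n\downarrow\emptyset$ (as $\sigma^{-1}$ is a bijection), and condition (ii) holds because $T_\sigma(\mathbb{I}_{\{n\}})$ is supported on the single point $\sigma^{-1}(n)$. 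Hence $T_\sigma^*$ is a YH transformation, and by Lemma \ref{lemma3}, every eigenvector $\mu$ of $T_\sigma^*$ for eigenvalue $\lambda$ decomposes as $\mu = \mu_c + \mu_p$ with $\mu_c\in ca(\mathbb N)$, $\mu_p\in pa(\mathbb N)$ each an eigenvector for the same $\lambda$.

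Next I would pin down the countably additive part. If $\mu_c\in ca(\mathbb N)$ satisfies $T_\sigma^*(\mu_c) = \lambda\mu_c$, then componentwise $\mu_c(\{\sigma(n)\}) = \lambda\,\mu_c(\{n\})$ for all $n$. Iterating along the cycle structure of $\sigma$: since $\sigma$ has finite order, say $\sigma^k = \mathrm{id}$, for any $n$ that is moved by $\sigma$ we get $\mu_c(\{n\}) = \lambda^k\mu_c(\{n\})$, forcing $\lambda^k = 1$ or $\mu_c(\{n\}) = 0$ on that orbit; and for any $n$ fixed by $\sigma$, the relation is vacuous ($\mu_c(\{n\}) = \lambda\mu_c(\{n\})$, so either $\lambda = 1$ or $\mu_c(\{n\}) = 0$). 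Since $\mu$ is a discount structure, $\mu(\mathbb N) = 1$, so $\mu$ is not identically zero; I would then argue that the eigenvalue must be $1$. Indeed, applying both sides of $T_\sigma^*(\mu) = \lambda\mu$ to the constant sequence $1$ gives $\langle 1, T_\sigma^*(\mu)\rangle = \langle T_\sigma(1),\mu\rangle = \langle 1,\mu\rangle = 1$ while the right side is $\lambda\langle 1,\mu\rangle = \lambda$, so $\lambda = 1$ outright — this is the cleanest route and avoids the cycle bookkeeping for the eigenvalue claim (that bookkeeping is still useful to characterize $\mu_c$). With $\lambda = 1$, the equation $\mu_c(\{\sigma(n)\}) = \mu_c(\{n\})$ for all $n$ says exactly that the sequence $(\mu_c(\{n\}))_n$ is invariant under $\sigma$, i.e. $\mu_c = (\mu_c)_\sigma$, which is precisely membership in $C_\sigma$ (note: once we know $\mu_c$ is $\sigma$-invariant as a sequence, and countably additive, it is determined by its atoms, so $(\mu_c)_\sigma = \mu_c$ as measures).

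Finally, combining: $\mu = \mu_c + \mu_p$ with $\mu_c\in C_\sigma$ and $\mu_p\in pa(\mathbb N)$, so $\mu\in C_\sigma\oplus pa(\mathbb N)$, and every such eigenvector has eigenvalue $1$. I expect the main obstacle to be purely bookkeeping-level rather than conceptual: being careful about whether the adjoint permutes singletons by $\sigma$ or $\sigma^{-1}$, and making sure the argument that $\mu_c$ lies in $C_\sigma$ (rather than merely ``$\mu_c$ has $\sigma$-invariant atoms'') is airtight — but since $\mu_c$ is countably additive it is the sum of its point masses, so the two statements coincide. One subtlety worth a sentence: the statement claims the eigenvectors form a \emph{subset} of $C_\sigma\oplus pa(\mathbb N)$ (not equality) and does not restrict $\mu_p$ beyond pure finite additivity, which is consistent because $T_\sigma^*$ acts as the identity on singletons-of-the-fixed-points but need not act as the identity on all of $pa(\mathbb N)$; so I would not attempt to prove a converse inclusion here.
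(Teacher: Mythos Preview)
Your proposal is correct and follows essentially the same route as the paper's proof: test against the constant sequence $1$ to force $\lambda=1$, invoke Lemma~\ref{criteriu} to establish that $T_\sigma^*$ is a YH transformation, apply Lemma~\ref{lemma3} to split any eigenvector into its countably additive and purely finitely additive parts, and then observe that the countably additive part must lie in $C_\sigma$. If anything, your writeup is more explicit than the paper's on verifying the hypotheses of Lemma~\ref{criteriu} and on the passage from the singleton equations $\mu_c(\{\sigma(n)\})=\mu_c(\{n\})$ to membership in $C_\sigma$; the one place to tighten is that the lemma is stated for arbitrary (nonzero) eigenvectors rather than discount structures, so your line ``since $\mu$ is a discount structure, $\mu(\mathbb N)=1$'' should be replaced by the observation that the $x=1$ test yields $\langle 1,\mu\rangle=\lambda\langle 1,\mu\rangle$ (the paper's proof handles this step in the same one-line fashion).
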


\begin{proof}
Let $\mu$ be an eigenvector of $T^*_{\sigma}$. By definition, there is $\lambda \in \mathbb{R}$ such that $\lambda \langle x, \mu \rangle = \langle x_{\sigma}, \mu \rangle$ for all $x\in l_{\infty}$. In particular, if we set $x = 1$, we obtain $\lambda = 1$, implying that $\lambda = 1$ is the unique eigenvalue of $T^*_{\sigma}$. For every $x \in l_{\infty}$, we have

\begin{equation}\label{cite}
\langle x, \mu \rangle = \langle x_{\sigma}, \mu \rangle.
\end{equation}

Clearly, $T_{\sigma}$ satisfies two conditions in Lemma \ref{criteriu}, which implies that $T^*_{\sigma}$ is a YH transformation. By Lemma \ref{lemma3}, in order to characterize the set of eigenvectors $E_{1}(T^*_{\sigma})$, our task is reduced to identifying the set of countably additive eigenvectors $E_{1}^{ca}(T^*_{\sigma})$ and the set of purely finitely additive eigenvectors $E_{1}^{pa}(T^*_{\sigma})$.

Let $\mu \in E_{1}^{ca}(T^*_{\sigma})$. Clearly, $\mu \in C_{\sigma}$. Let $\mu \in E_{1}^{pa}(T^*_{\sigma})$. Then one can easily verify that $\mu$ satisfies Equation (\ref{cite}). From Lemma \ref{lemma3}, $E_{1}(T^*_{\sigma}) = E_{1}^{ca}(T^*_{\sigma}) \oplus E_{1}^{pa}(T^*_{\sigma}) \subseteq C_{\sigma} \oplus pa(\mathbb{N})$. 
\end{proof}

It is evident that the set formed by the intersection of eigenvectors of $T^*_{\sigma}$ for all $\sigma$ is a subset of $pa(\mathbb{N})$. Consequently, the proposition is a direct consequence of Theorem \ref{theorem2.3}.

\subsection{Proof of Proposition \ref{corollary9}}\label{proof-coro9}

Consider a binary relation $\succsim$ on $l_{\infty}$ that satisfies Axioms \ref{axiom1}-\ref{axiom5}, Axiom \ref{axiom6}, and IPIS. Now, let us assume, for the sake of contradiction, that there exists a constant equivalent function $I:l_{\infty} \to \mathbb{R}$ representing $\succsim$.

For each permutation $\sigma$, we define $F_{\sigma} = {p \in \Delta: \text{$p$ is an eigenvector of $T^*_{\sigma}$}}$. Clearly, $\lambda=1$ is the unique eigenvalue of $T^*_{\sigma}$. Consequently, we can establish that $$F_{\sigma} = \{p\in \Delta: \langle x,p\rangle = \langle x_{\sigma},p\rangle \text{ for every $x\in l_{\infty}$}\}.$$ 
It then follows from Theorem \ref{theorem2.3} that 
\begin{align*}
I(x) = \min_{p\in D} \langle x,p\rangle
\end{align*}
for some non-empty closed set $D \subseteq \bigcap_{\sigma} F_{\sigma}$.

For any $p\in D$, it follows that $\langle x,p\rangle = \langle x_{\sigma},p \rangle$ holds true for all $x\in l_{\infty}$ and for every permutation $\sigma$. Now, consider two infinite subsets, $A$ and $B$, of $\mathbb{N}$, with no common elements. There exists a bijection $f$ from $A$ to $B$. We can define a permutation $\sigma:\mathbb{N} \to \mathbb{N}$ as follows: $\sigma(i) = f(i)$ for all $i\in A$, $\sigma(i) = f^{-1}(i)$ for all $i\in B$, and $\sigma(i) = i$ for all other elements. Clearly, $\sigma$ is a permutation. We use the notation $\mathbb{I}_A$ to represent the sequence defined as $\mathbb{I}_A=xAy$, where $x=1$ and $y=0$. Thus, we can conclude that $\langle \mathbb{I}_A,p\rangle = \langle \mathbb{I}_{B},p \rangle$. This result demonstrates that any two infinite sets with no shared elements should have the same probability.

Now, let $A \subseteq \mathbb{N}$ such that both $A$ and its complement $A^c$ are infinite. We have $\langle \mathbb{I}_A,p\rangle = \langle \mathbb{I}_{A^c},p \rangle = \frac{1}{2}$. Now, given that $A$ is infinite, we can find a subset $B\subseteq A$ such that both $B$ and $A\setminus B$ are also infinite. This leads to the conclusion that $\langle \mathbb{I}_B,p\rangle = \langle \mathbb{I}_{A\setminus B},p \rangle = \frac{1}{4}$. Moreover, since $B^c$ is infinite, $\langle \mathbb{I}_B,p\rangle = \langle \mathbb{I}_{B^c},p \rangle = \frac{1}{2}$, which is a contradiction. Therefore, there does not exist any constant equivalent that represents a binary relation on $l_{\infty}$ satisfying Axioms \ref{axiom1}-\ref{axiom5}, Axiom \ref{axiom6}, and IPIS.

\newpage

\bibliographystyle{aea}

\bibliography{main}

\end{document}